\documentclass[twoside,11pt]{entics}
\pdfoutput=1
\usepackage{enticsmacro}
\usepackage{cite}
\usepackage{amsmath,amssymb,amsfonts}
\usepackage{algorithmic}
\usepackage{graphicx}
\usepackage{enumitem}
\usepackage{textcomp}
\usepackage{etoc}
\usepackage{xcolor}
\usepackage{cleveref}
\usepackage{amssymb}
\usepackage{amsmath}

\AddToHook{env/lem/begin}{\crefalias{thm}{lemma}}
\AddToHook{env/defn/begin}{\crefalias{thm}{definition}}
\AddToHook{env/rem/begin}{\crefalias{thm}{remark}}
\AddToHook{env/prop/begin}{\crefalias{thm}{proposition}}
\AddToHook{env/exmp/begin}{\crefalias{thm}{example}}
\crefname{equation}{}{}
\crefname{equations}{}{}
\crefname{figure}{Fig.}{Fig.}
\crefname{definition}{Definition}{Definitions}
\crefname{example}{Example}{Example}
\crefname{remark}{Remark}{Remarks}
\crefname{thm}{Theorem}{Theorems}
\crefname{proposition}{Proposition}{Propositions}
\crefname{lemma}{Lemma}{Lemmas}
\crefname{section}{Section}{Sections}
\crefname{app}{Appendix}{Appendices}

\usepackage{todonotes}
\usepackage{stmaryrd}
\usepackage{quiver}
\usepackage{bussproofs}
\newenvironment{bprooftree}
{\leavevmode\hbox\bgroup}
{\DisplayProof\egroup}

\usepackage{tikz}
\usetikzlibrary{arrows,decorations,automata,backgrounds,positioning,calc,cd}
\usetikzlibrary{shapes.geometric,shapes.misc,matrix,arrows.meta,decorations.markings}
\usetikzlibrary{circuits.logic.US,patterns,shapes}
\usepackage{mathcommand,mathtools}
\usepackage{xfrac}
\newlength{\ketketspacing}
\newlength{\nketketspacing}
\newcommand{\ket}[1]{\mathord{\lvert #1 \rangle}\futurelet\next\ketspacing}
\newcommand{\nket}[1]{\mathord{\lvert #1 ]}\futurelet\next\nketspacing}
\def\ketspacing{
    \ifx\next\ket
        \hspace{\ketketspacing}
    \else\ifx\next\nket
        \hspace{\ketketspacing}
    \else
        \relax
    \fi\fi
}
\def\nketspacing{
    \ifx\next\ket
        \hspace{\nketketspacing}
    \else\ifx\next\nket
        \hspace{\nketketspacing}
    \else
        \relax
    \fi\fi
}
\usepackage{arydshln}

\makeatletter
\def\stringify#1{\expandafter\zap@space\detokenize{#1} \@empty}
\makeatother
\usepackage{tikzit}
\ifdef{\macrotikzfig}{}{\newcommand{\macrotikzfig}[2]{\input{#1.tikz}}}
\tikzstyle{black}=[fill=black, draw=black, shape=circle, minimum size=6pt, inner sep=0pt, outer sep=0pt]
\tikzstyle{dist}=[fill=none, draw=black, isosceles triangle, isosceles triangle apex angle=60, text height=0.6em, inner sep=0pt, shape border rotate=180, anchor=center, shape border uses incircle]
\tikzstyle{white}=[fill=white, draw=black, shape=circle, minimum size=6pt, inner sep=0pt, outer sep=0pt, tikzit draw=black, tikzit fill=white]
\tikzstyle{basic box}=[draw, inner sep=2pt, fill=white, rectangle, minimum height=1.2em, minimum width=1em]
\tikzstyle{and}=[fill=white, draw=black, and gate US, scale=0.6]
\tikzstyle{or}=[fill=white, draw=black, or gate US, anchor=center, scale=0.6]
\tikzstyle{not}=[fill=white, draw=black, not gate US, anchor=center, scale=0.6]
\tikzstyle{if}=[trapezium, trapezium angle=70, draw, inner xsep=0pt, outer sep=0pt, minimum height=15pt, rotate=270, text width=8pt, fill=white]
\tikzstyle{myif}=[trapezium, trapezium angle=60, draw, inner xsep=0pt, outer sep=0pt, minimum height=15pt, rotate=270, text width=10pt, fill=white, label={center:\rotatebox{270}{$\mathtt{if}$}}]
\tikzstyle{myif2}=[trapezium, trapezium angle=60, draw, inner xsep=0pt, outer sep=0pt, minimum height=10pt, rotate=270, text width=7pt, fill=white, label={center:\rotatebox{270}{$\mathtt{if}$}}]
\tikzstyle{convex}=[trapezium, trapezium angle=70, draw, inner xsep=0pt, outer sep=0pt, minimum height=15pt, rotate=270, text width=7pt, fill=white, label={center:$+_{\!p}$}]
\tikzstyle{convexp1}=[trapezium, trapezium angle=70, draw, inner xsep=0pt, outer sep=0pt, minimum height=15pt, rotate=270, text width=7pt, fill=white, label={center:$+_{\!\!p_{\b1}}$}]
\tikzstyle{convexp0}=[trapezium, trapezium angle=70, draw, inner xsep=0pt, outer sep=0pt, minimum height=15pt, rotate=270, text width=7pt, fill=white, label={center:$+_{\!\!p_{\b0}}$}]
\tikzstyle{convexq}=[trapezium, trapezium angle=70, draw, inner xsep=0pt, outer sep=0pt, minimum height=15pt, rotate=270, text width=7pt, fill=white, label={center:$+_{\!q}$}]
\tikzstyle{flip}=[draw={rgb,255: red,86; green,86; blue,86}, fill={rgb,255: red,86; green,86; blue,86}, rounded rectangle, rounded rectangle right arc=none, text=white, node font={\scriptsize}, inner sep=0.2em]
\tikzstyle{flipop}=[draw={rgb,255: red,86; green,86; blue,86}, fill={rgb,255: red,86; green,86; blue,86}, rounded rectangle, rounded rectangle left arc=none, text=white, node font={\scriptsize}, inner sep=0.2em]
\tikzstyle{flipup}=[draw={rgb,255: red,204; green,112; blue,0}, fill={rgb,255: red,204; green,112; blue,0}, rounded rectangle, rounded rectangle right arc=none, rotate=90, text=black, node font={\scriptsize}, inner sep=0.2em]
\tikzstyle{single}=[-, draw=black]
\tikzstyle{double}=[-, draw=black, line width=2pt, tikzit fill=cyan, tikzit draw=cyan]
\tikzstyle{boxes}=[-, fill=white, dotted]
\tikzstyle{unit}=[-, fill=white, dotted]

\newcommand{\id}{\mathrm{id}}
\newcommand{\tensor}{\otimes} 

\newcommand{\FStoch}{\mathsf{FStoch}}
\newcommand{\BStoch}{\mathsf{BStoch}}

\newcommand{\FSprod}{\FStoch^\otimes}
\newcommand{\BSprod}{\BStoch^\otimes}
\newcommand{\FSsum}{\FStoch^\oplus}
\newcommand{\FSklprod}{\FStoch_{\scriptscriptstyle \mathsf{kl}}^\otimes}

\newcommand{\BSklprod}{\BStoch_{\scriptscriptstyle \mathsf{kl}}^\otimes}
\newcommand{\BSreprod}[1]{\BStoch_{\scriptscriptstyle \textsf{ré}#1}^\otimes}
\newcommand{\FSklsum}{\FStoch_{\scriptscriptstyle \mathsf{kl}}^\oplus}
\newcommand{\FSresum}[1]{\FStoch_{\scriptscriptstyle \textsf{ré}#1}^\oplus}

\newcommand{\N}{\mathbb{N}}

\newcommand{\B}{\mathbb{B}}

\newrobustcmd{\Hclosesum}[1]{#1^{\mathsf{H}\vten}}
\newrobustcmd{\Pclosesum}[1]{#1^{\mathsf{P}\vten}}
\newrobustcmd{\Hclosemax}[1]{#1^{\mathsf{H}\vmeet}}
\newrobustcmd{\Pclosemax}[1]{#1^{\mathsf{P}\vmeet}}
\newrobustcmd{\Hclosestar}[1]{#1^{\mathsf{H}\star}}
\newrobustcmd{\Pclosestar}[1]{#1^{\mathsf{P}\star}}
\newrobustcmd{\tensorvten}{\mathbin{\boxtimes_{\vten}}}
\newrobustcmd{\tensorvmeet}{\mathbin{\boxtimes_{\vmeet}}}
\newrobustcmd{\tensorplus}{\mathbin{\boxtimes_{+}}}
\newrobustcmd{\tensormax}{\mathbin{\boxtimes_{\max}}}

\newrobustcmd{\Hclosesumsum}[1]{#1^{\mathsf{H}\vten\vten}}
\newrobustcmd{\Pclosesumsum}[1]{#1^{\mathsf{P}\vten\vten}}
\newrobustcmd{\Hclosesummax}[1]{#1^{\mathsf{H}\vten\vmeet}}
\newrobustcmd{\Pclosesummax}[1]{#1^{\mathsf{P}\vten\vmeet}}
\newrobustcmd{\Hclosemaxsum}[1]{#1^{\mathsf{H}\vmeet\vten}}
\newrobustcmd{\Pclosemaxsum}[1]{#1^{\mathsf{P}\vmeet\vten}}
\newrobustcmd{\Hclosemaxmax}[1]{#1^{\mathsf{H}\vmeet\vmeet}}
\newrobustcmd{\Pclosemaxmax}[1]{#1^{\mathsf{P}\vmeet\vmeet}}

\newrobustcmd{\anyclose}[1]{\overline{#1}}
\newcommand{\vle}{\sqsubseteq}
\newcommand{\vge}{\sqsupseteq}
\newcommand{\vjoin}{\sqcup}
\newcommand{\vJoin}{\bigsqcup}
\newcommand{\vMeet}{\bigsqcap}
\newcommand{\vten}{+}
\newcommand{\vmeet}{\sqcap}
\newcommand{\vun}{k}

\newcommand{\zeroinfQ}{{[0,\infty]_{\scriptscriptstyle +}}}

\newrobustcmd{\preordqmat}[1]{\mathsf{POHA}_{#1}}
\newrobustcmd{\metqmat}[1]{\mathsf{MHA}_{#1}}
\newrobustcmd{\RuleRefl}{\hyperlink{rulerefl}{\textnormal{\textsc{Refl}}}}
\newrobustcmd{\RuleTriang}{\hyperlink{ruletriang}{\textnormal{\textsc{Triang}}}}
\newrobustcmd{\RuleBot}{\hyperlink{rulebot}{\textnormal{\textsc{Bot}}}}
\newrobustcmd{\RuleMon}{\hyperlink{rulemon}{\textnormal{\textsc{Mon}}}}
\newrobustcmd{\RuleJoin}{\hyperlink{rulejoin}{\textnormal{\textsc{Join}}}}
\newrobustcmd{\RuleSeqsum}{\hyperlink{ruleseqsum}{$\textnormal{\textsc{Seq}}_{\vten}$}}
\newrobustcmd{\RuleTenssum}{\hyperlink{ruletenssum}{$\textnormal{\textsc{Par}}_{\vten}$}}
\newrobustcmd{\RuleSeqmeet}{\hyperlink{ruleseqmeet}{$\textnormal{\textsc{Seq}}_{\vmeet}$}}
\newrobustcmd{\RuleTensmeet}{\hyperlink{ruletensmeet}{$\textnormal{\textsc{Par}}_{\vmeet}$}}
\newrobustcmd{\RuleCont}{\hyperlink{rulecont}{$\textnormal{\textsc{Cont}}$}}
\newrobustcmd{\RuleSym}{\hyperlink{rulesym}{$\textnormal{\textsc{Symm}}$}}
\newrobustcmd{\letter}{f}
\newrobustcmd{\type}{n}

\newrobustcmd{\cc}[1]{\renewrobustcmd{\letter}{#1}\macrotikzfig{ccletter}{#1}}
\newrobustcmd{\distdiag}[1]{\renewrobustcmd{\letter}{#1}\macrotikzfig{distletter}{#1}}
\newrobustcmd{\statediag}[2]{\renewrobustcmd{\letter}{#1}\renewrobustcmd{\type}{#2}\macrotikzfig{stateletter}{#1}}
\newrobustcmd{\flip}[1]{\renewrobustcmd{\letter}{#1}\macrotikzfig{flip}{#1}}
\newmathcommand{\distvect}[2]{\scalebox{0.55}{$\begin{bmatrix} #1\\#2 \end{bmatrix}$}}
\makeatletter
\DeclareFontEncoding{LS1}{}{}
\DeclareFontEncoding{LS2}{}{\noaccents@}
\DeclareFontSubstitution{LS1}{stix}{m}{n}
\DeclareFontSubstitution{LS2}{stix}{m}{n}
\makeatother
\DeclareMathAlphabet\mathscr{LS1}{stixscr}{m}{n}
\SetMathAlphabet\mathscr{bold}{LS1}{stixscr}{b}{n}

\newrobustcmd{\Dset}{\mathcal{D}}
\newrobustcmd{\Dmet}{\overline{\mathcal{D}}}
\newrobustcmd{\dist}{\varphi}
\newrobustcmd{\distb}{\psi}
\newrobustcmd{\distc}{\tau}
\newrobustcmd{\Dist}{\Phi}
\newrobustcmd{\Distb}{\Psi}
\newrobustcmd{\dirac}[1]{\ket{#1}}
\newrobustcmd{\supp}[1]{\mathrm{supp}(#1)}
\newrobustcmd{\tv}{\mathsf{tv}}
\newrobustcmd{\tvmax}{\mathsf{tv}_{\times}}
\newrobustcmd{\kl}{\mathsf{kl}}
\newrobustcmd{\klmax}{\overline{\mathsf{kl}}}
\newrobustcmd{\tvplus}{\mathsf{tv}_{\otimes}}
\newrobustcmd{\re}[1]{\textsf{ré}_{#1}}
\newrobustcmd{\remax}[1]{\overline{\textsf{ré}}_{#1}}
\newrobustcmd{\Kant}[1]{#1_{\mathrm{K}}}
\newrobustcmd{\Cpl}{\mathfrak{C}}
\newrobustcmd{\discrete}{d_{\top}}
\newrobustcmd{\fset}[1]{\underline{#1}}
\newrobustcmd{\drel}[1]{\Delta_{#1}}
\newrobustcmd{\ndrel}[1]{\Delta^{\!\mathsf{c}}_{#1}}
\newrobustcmd{\Circ}{\scriptscriptstyle\mathsf{CC}}
\newrobustcmd{\scomp}{\mathbin{;}}
\newrobustcmd{\elist}{\mathsf{e}}

\newrobustcmd{\qenrcats}[1]{{#1}\mathsf{HMet}}

\newrobustcmd{\ConvAlg}{\scriptscriptstyle\mathsf{CA}}
\newrobustcmd{\HA}[1]{\mathsf{HA}_{#1}}
\newrobustcmd{\BarAlg}{\mathsf{Lib}}
\newrobustcmd{\KLsum}{\scriptscriptstyle \mathsf{KL}\oplus}
\newrobustcmd{\KLprod}{\scriptscriptstyle \mathsf{KL}\otimes}
\newrobustcmd{\Rprod}[1]{\scriptscriptstyle \mathsf{R}_{#1}^{\otimes}}
\newrobustcmd{\Rsum}[1]{\scriptscriptstyle \mathsf{R}_{#1}^{\oplus}}
\newrobustcmd{\Chainsum}{\textsc{Chain}_{\oplus}}
\newrobustcmd{\Chainprod}{\textsc{Chain}_{\otimes}}
\newrobustcmd{\Ifmax}{\textsc{If}_{\max}}
\newrobustcmd{\Parmax}{\textsc{Par}_{\max}}

\newrobustcmd{\lists}[1]{#1^*}

\newrobustcmd{\eqn}{\varphi}
\newrobustcmd{\eqnb}{\psi}
\newrobustcmd{\eqns}{\Gamma}
\newrobustcmd{\eqnsb}{\Delta}
\newrobustcmd{\assign}{\iota}

\newrobustcmd{\ctxO}{\mathsf{O}}
\newrobustcmd{\ctxM}{\mathsf{M}}
\newrobustcmd{\Ctx}{\mathbf{Ctx}}
\newrobustcmd{\sat}{\vDash}
\newrobustcmd{\dom}{\mathsf{dom}}
\newrobustcmd{\cod}{\mathsf{cod}}
\newrobustcmd{\sem}[1]{\left\llbracket #1 \right\rrbracket}
\renewrobustcmd{\th}{\mathscr{T}}

\newrobustcmd{\del}{\raisebox{0.3ex}{\scalebox{0.6}{\begin{tikzpicture}
	\begin{pgfonlayer}{nodelayer}
		\node [style=black] (5) at (-0.5, 0) {};
		\node [style=none] (6) at (0.5, 0) {};
	\end{pgfonlayer}
	\begin{pgfonlayer}{edgelayer}
		\draw (5) to (6.center);
	\end{pgfonlayer}
\end{tikzpicture}
}}}
\newrobustcmd{\cop}{\raisebox{0.3ex}{\scalebox{0.6}{\input{cop.tikz}}}}
\newrobustcmd{\ccOneMinus}{\raisebox{0.3ex}{\scalebox{0.6}{\input{ccOneMinus.tikz}}}}
\newrobustcmd{\swap}[1]{\sigma_{#1}}
\newrobustcmd{\sigCA}{\Sigma_{\mathsf{CA}}}
\newrobustcmd{\img}[1]{\vcenter{\hbox{\includegraphics[scale=0.15]{#1}}}}
\newrobustcmd{\one}{I}
\makeatletter
\newrobustcmd{\bigplus}{
	\DOTSB\mathop{\mathpalette\mattos@bigplus\relax}\slimits@
}
\newcommand\mattos@bigplus[2]{
	\vcenter{\hbox{
			\sbox\z@{$#1\sum$}
			\resizebox{!}{0.9\dimexpr\ht\z@+\dp\z@}{\raisebox{\depth}{$\m@th#1+$}}
	}}
	\vphantom{\sum}
}
\makeatother
\newrobustcmd{\sig}{\Sigma}
\newrobustcmd{\source}{\mathsf{in}}
\newrobustcmd{\target}{\mathsf{out}}
\newrobustcmd{\syncat}[2]{\mathcal{S}_{\scriptstyle #1,#2}}
\newrobustcmd{\qsyncat}[2]{\widehat{\mathcal{S}}_{\scriptstyle #1,#2}}
\newrobustcmd{\sync}[1]{\mathcal{S}_{\scriptstyle #1}}
\newrobustcmd{\vync}[1]{\mathcal{M}_{\scriptstyle #1}}
\newrobustcmd{\LT}[1]{\mathcal{L}_{\scriptstyle #1}}
\newrobustcmd{\Eq}[1]{\mathfrak{E}_{\scriptstyle #1}}
\newrobustcmd{\QEq}[1]{\mathfrak{Q}_{\scriptstyle #1}}
\newrobustcmd{\QI}[1]{\mathfrak{I}_{\scriptstyle #1}}

\LoopCommands\lettersUppercase[cat#1]
{\newmathcommand#2{\mathbf#1}}
\LoopCommands\lettersLowercase[tt#1]
{\newmathcommand#2{\mathtt#1}}
\LoopCommands\lettersLowercase[vv#1]
{\newmathcommand#2{\vec#1}}
\LoopCommands\lettersUppercase[v#1]
{\newmathcommand#2{\mathcal#1}}
\LoopCommands\lettersUppercase[vRel#1]
{\newmathcommand#2{\mathcal#1\mathbf{Rel}}}
\renewrobustcmd{\b}[1]{\mathtt{#1}}

\newrobustcmd{\Label}[1]{\RightLabel{\textnormal{\textsc{#1}}}}

\newcounter{theqn}
\def\conf{MFPS XLII} 	
\volume{NN}

\def\lastname{Sarkis and Zanasi}

\begin{document}
\begin{frontmatter}
	\title{Complete Diagrammatic Axiomatisations \\of Relative Entropy} 						
	\thanks[ALL]{This work was partially funded by the ARIA Safeguarded AI programme.}   
	\author{Ralph Sarkis\thanksref{a}}
	\author{Fabio Zanasi\thanksref{a}}
	\address[a]{Department of Computer Science\\ University College London\\				
		London, United Kingdom}
	\begin{abstract}
		Relative entropy is a fundamental class of distances between probability distributions, with widespread applications in probability theory, statistics, and machine learning. In this work, we study relative entropy from a categorical perspective, viewing it as a quantitative enrichment of categories of stochastic matrices. We consider two natural monoidal structures on stochastic matrices, given by the Kronecker product and the direct sum. Our main results are complete axiomatisations of Kullback--Leibler divergence and, more generally, of R\'enyi divergences of arbitrary order, for each such structure. Our axiomatic theories are formulated within the framework of quantitative monoidal algebra, using a graphical language of string diagrams enriched with quantitative equations.
	\end{abstract}
	\begin{keyword}
		string diagram, complete axiomatisation, categorical semantics, relative entropy, Kullback--Leibler divergence, R\'{e}nyi divergence
	\end{keyword}
\end{frontmatter}
\section{Introduction}

Programming language semantics traditionally investigates when two programs are \emph{equivalent} with respect to a given notion of observation, or whether a program satisfies properties such as correctness, termination, or safety. When programs exhibit random behaviour, however---as in probabilistic programming, statistical inference, and machine learning---equivalence is often too coarse to be practically informative. Rather than asking whether two programs produce identical outputs, it is more useful to measure \emph{how far apart} their behaviours are. This shift in perspective has led to extensive research on behavioural distances and program metrics~(\!\!\cite{Desharnais99,vBWorrell2005,Crubille15,Baldan2018,DalLago2023}).

A prominent line of work in this direction is the programme of \emph{quantitative algebraic theories}, initiated by Mardare et al.\@~(\!\!\cite{Mardare2016}). This framework provides axiomatic presentations of (pseudo)metrics via calculi whose judgements have the form $s =_{\varepsilon} t$, expressing that the distance between terms $s$ and $t$ is at most $\varepsilon \in \mathbb{R}$. Subsequent developments have extended the framework to Markov processes~(\!\!\cite{Bacci2018,Bacci2018a}), higher-order settings~(\!\!\cite{DalLago2022}), quantitative logics~(\!\!\cite{Bacci2023,Bacci2024}), string diagrams~(\!\!\cite{Lobbia2025}), and enriched algebraic theories~(\!\!\cite{Rosicky2023,Rosicky2024}).

Complete axiomatisations for the Kantorovich metric and total variation distance, introduced in~\cite{Mardare2016}, are chief examples of quantitative algebraic theories. In this paper, we focus on another fundamental class of distances between probability distributions: relative entropy. In particular, R\'enyi divergences~(\!\!\cite{Renyi1960}), which include the Kullback--Leibler (KL) divergence~(\!\!\cite{Kullback1951}), play a central role across mathematics and computer science, with applications in statistical inference and machine learning~(\!\!\cite{Fox2012,KingmaW13,Goodfellow2020}), differential privacy~(\!\!\cite{Dwork2014,Mironov2017}), information geometry~(\!\!\cite{Nielsen2020}), and program semantics~(\!\!\cite{Barthe2013,Sato2019}). Unlike the Kantorovich metric and total variation distance, however, no complete quantitative algebraic theory has yet been developed for KL divergence or, more generally, for relative entropies. Our work fills that gap.

Following~\cite{Lobbia2025}, we formulate our results at the level of symmetric monoidal categories, presenting our axiomatic calculi in the graphical language of \emph{string diagrams} within the framework of \emph{monoidal algebra}~(\!\!\cite{Selinger_2010,PiedeleuZanasi2025}). This choice is motivated mainly by two considerations. First, it allows us to reason axiomatically about distances between \emph{multidimensional} systems, such as stochastic matrices, rather than merely probability distributions. Indeed, stochastic matrices do not form a cartesian category; instead, they assemble into a symmetric monoidal category $\FStoch$, which supports a rich analysis of Bayesian inference, causality, and probabilistic graphical models~(\!\!\cite{JacobsKZ21,Lorenz2023,Lorenzin2025}). Second, it allows us to build on a substantial body of work on diagrammatic axiomatisations of probabilistic computation (particularly~\cite{Fritz09,Piedeleu2025b}, but \emph{cf.} also~\cite{Fritz2020,Stein2025,TorresRuiz2026}) and to extend it in a quantitative direction.

We shall study two enrichments of $\FStoch$ with KL divergence, distinguished by their monoidal structure. One uses the Kronecker product of matrices, denoted $\FSprod$, and the other the direct sum, denoted $\FSsum$. Both structures are natural and useful in different contexts. The category $\FSprod$ provides the standard setting for synthetic probability theory~(\!\!\cite{Fritz2020}) and for applications to Bayesian networks and causal reasoning~(\!\!\cite{JacobsKZ21,Lorenz2023,Lorenzin2025}). In contrast, $\FSsum$ is closely related to convex sets and barycentric algebras, which have been extensively studied since at least the '50s~(\!\!\cite{Stone1949PostulatesFT}), and to the interpretation of randomness as a monadic effect (see e.g.~\cite{Bonchi22,MioV20,MioSarkisVign21}).
Our main contributions are as follows.
\begin{enumerate}[nosep]
	\item \textbf{KL divergence.} We provide axiomatisations for KL divergence enrichments of $\BSprod$ (the restriction of $\FSprod$ to objects of the form $2^n$ for $n \in \mathbb{N}$)\footnote{This restriction is due to the fact that non-enriched $\BSprod$ has been axiomatised, in~\cite{Piedeleu2025b}, but no analogous result holds for the full $\FSprod$.} and of $\FSsum$~(\cref{defn:thklprod,defn:thklsum}), and show their completeness~(\cref{thm:axbsprodkl,thm:axfssumkl}).
	      Concretely, we present quantitative string diagrammatic theories whose freely generated enriched SMCs are isomorphic to enriched $\BSprod$ and $\FSsum$, respectively. A key ingredient in our axiomatisations is the chain rule, which decomposes the divergence between joint distributions in terms of the divergences between their conditionals. We express this decomposition as a quantitative implication: the premises bound the distances between conditionals, while the conclusion bounds the distance between the corresponding joint distributions. Two such implications, $\Chainprod$ and $\Chainsum$, appear in the axiom systems for $\BSprod$ and $\FSsum$, respectively, and play a central role in the completeness proofs~(\cref{lem:prod:isometryondist,lem:isometryondist}).

	\item \textbf{R\'{e}nyi divergences.} With minor adaptations, we show that our method extends to an axiomatisation of R\'enyi divergences of arbitrary order $\alpha \in [0,\infty]$, again for both monoidal structures. KL divergence arises as the special case $\alpha = 1$.

	\item \textbf{Implicational quantitative diagrammatic reasoning.} To state our results, we need to extend the notion of quantitative monoidal theory introduced in~\cite{Lobbia2025} to allow axioms formulated as implications between quantitative equations. This mirrors the original formulation of quantitative equational reasoning in a cartesian setting~(\!\!\cite{Mardare2016}), but now considered in the monoidal setting. The resulting framework is of independent interest: implicational axioms arise, for example, in the study of uniform traces~(\!\!\cite{Hasegawa2003,Bonchi2025}) and balanced Markov categories~(\!\!\cite{DiLavore2025,Fritz2026}).
\end{enumerate}

\textbf{Related work.} As noted above, quantitative algebraic theories provide the broader context for our results. Since $(\FSsum)^{\mathrm{op}}$ is cartesian, our axiomatisation could in principle yield an analogous result within the original framework of~\cite{Mardare2016}, which is based on classical cartesian algebra (we return to this observation in the conclusions). However, this is not the case for $\BSprod$ (and $\FSprod$), which therefore requires the more general setting of monoidal algebra.

Our work builds directly on~\cite{Lobbia2025}, where $\FSsum$ enriched with total variation distance is axiomatised. A different diagrammatic axiomatisation of the same enrichment for $\FSprod$ appears in~\cite{diGiorgio2025}, albeit outside the formal framework of~\cite{Lobbia2025}. By introducing implicational axioms for diagrammatic calculi, we reconcile these approaches and bring~\cite{diGiorgio2025} within the scope of quantitative monoidal algebra.

KL divergence originates in Shannon's information theory~(\!\!\cite{Shannon1948}) and has been axiomatised multiple times using functional equations~(\!\!\cite{Hobson1969,Kannappan1973,Kannappan1974}). More recently, categorical approaches have led to novel and simpler characterisations~(\!\!\cite{Baez2014,Fullwood2021,Gagne2023}). In these works, a category of stochastic processes based on $\FStoch$ is defined, and it is shown that any functor satisfying a small set of natural postulates must compute KL divergence. Related ideas also yield a concise non-categorical axiomatisation in~\cite{Leinster2019}. Perrone~(\!\!\cite{Perrone2024}) proves that relative entropy defines an enrichment of $\FSprod$, without characterising it. Our results provide precisely such a characterisation. We also mention works on categorical characterisations of (non-relative) entropy \cite{Baez2011,Parzygnat2022}.

\textbf{Synopsis.} \cref{sec:prelim} recalls necessary background on string diagrams, quantales, enriched categories, and relative entropy. \cref{sec:enrichmonth} presents our extension of quantitative monoidal theories with implicational axioms. We define the logical system, describe how it freely generates syntactic categories, and establish its soundness and completeness with respect to models in enriched monoidal categories. \cref{sec:axkl} develops the axiomatisations of KL divergence for $\BSprod$ and $\FSsum$. \cref{sec:axren} extends these results to the full family of R\'{e}nyi divergences. We conclude with a discussion of related and future work.

\section{Background}
\label{sec:prelim}

\subsection{Monoidal Algebra}\label{sec:prelim:diagrams}
We assume familiarity with basic category theory and in particular with monoidal categories (see e.g.~\cite{MacLane71}). We recall notions in monoidal algebra, the study of algebraic structures borne in monoidal categories. A detailed account can be found in~\cite{PiedeleuZanasi2025}. In the sequel, SMC is short for symmetric strict monoidal category.

\begin{definition}\label{defn:monsig}
	A \emph{monoidal signature} is the data of a set $\sig_0$ of \emph{sorts}, a set $\sig_1$ of \emph{generators}, and two maps $\source,\target: \sig_1 \rightarrow \sig_0^*$ assigning to each generator $f$ a finite list of sorts for its \emph{inputs} and for its \emph{outputs}. We write $f:u \rightarrow v \in \sig_1$ to mean that $f$ is a generator with inputs $\source(f) = u$ and outputs $\target(f) = v$. The empty list will be denoted by $\elist$.  We often write $\sig$ to refer to the tuple $(\sig_0,\sig_1,\source,\target)$.

	Given a monoidal signature $\sig$, let $\vync{\sig}$ denote the set of all monoidal terms built out of generators in $\sig$ with the following rules:
	\begin{gather*}
		\begin{bprooftree}
			\AxiomC{$f:u \rightarrow v \in \sig_1 $}
			\Label{Gen}
			\UnaryInfC{$f: u \rightarrow v \in \vync{\sig}$}
		\end{bprooftree}
		\begin{bprooftree}
			\AxiomC{$x,y \in \sig_0$}
			\Label{Swap}
			\UnaryInfC{$\sigma_{x,y}:xy \rightarrow yx \in \vync{\sig}$}
		\end{bprooftree}
		\begin{bprooftree}
			\AxiomC{$x \in \sig_0$}
			\Label{id}
			\UnaryInfC{$\id_x: x \rightarrow x \in \vync{\sig}$}
		\end{bprooftree}
		\begin{bprooftree}
			\AxiomC{$\phantom{\sig_0}$}
			\Label{Unit}
			\UnaryInfC{$\id_{\elist}: \elist \rightarrow \elist \in \vync{\sig}$}
		\end{bprooftree}\\[0.3em]
		\begin{bprooftree}
			\AxiomC{$f:u \rightarrow v\in \vync{\sig}$}
			\AxiomC{$g: v \rightarrow w\in \vync{\sig}$}
			\Label{Seq}
			\BinaryInfC{$f\scomp g : u \rightarrow w\in \vync{\sig}$}
		\end{bprooftree}
		\begin{bprooftree}
			\AxiomC{$f:u \rightarrow v\in \vync{\sig}$}
			\AxiomC{$f:u' \rightarrow v'\in \vync{\sig}$}
			\Label{Par}
			\BinaryInfC{$f \otimes f' : uu' \rightarrow vv'\in \vync{\sig}$}
		\end{bprooftree}
	\end{gather*}
	Throughout the paper, we represent monoidal terms with string diagrams: a term $f: u \rightarrow v \in \vync{\sig}$ is drawn as $\input{fuv.tikz}$, and the rules above are depicted as follows. We often omit input and output labels.
	\[ \swap{x,y} = \raisebox{0.3ex}{\scalebox{0.7}{$\input{swapxy.tikz}$}} \qquad \id_x = \raisebox{0.3ex}{\scalebox{0.7}{$\begin{tikzpicture}
	\begin{pgfonlayer}{nodelayer}
		\node [style=none] (0) at (-0.5, 0) {};
		\node [style=none] (1) at (0.5, 0) {};
		\node [style=none] (2) at (-1, 0) {$x$};
		\node [style=none] (3) at (1, 0) {$x$};
		\node [style=none] (4) at (0, 0.5) {};
	\end{pgfonlayer}
	\begin{pgfonlayer}{edgelayer}
		\draw (0.center) to (1.center);
	\end{pgfonlayer}
\end{tikzpicture}
$}} \qquad \id_{\elist} = \raisebox{0.3ex}{\scalebox{0.7}{$\begin{tikzpicture}
	\begin{pgfonlayer}{nodelayer}
		\node [style=none] (0) at (-0.5, 0) {};
		\node [style=none] (1) at (0.5, 0) {};
		\node [style=none] (2) at (0, 0.5) {};
	\end{pgfonlayer}
	\begin{pgfonlayer}{edgelayer}
		\draw [style=unit] (0.center) to (1.center);
	\end{pgfonlayer}
\end{tikzpicture}
$}} \qquad f\scomp g = \raisebox{0.3ex}{\scalebox{0.7}{$\input{seqcompfguw.tikz}$}} \qquad f \otimes f' = \raisebox{0.3ex}{\scalebox{0.7}{$\input{parcompffprime.tikz}$}}\]
	We consider string diagrams modulo the smallest congruence (with respect to $\scomp$ and $\otimes$) containing the equations in \cref{fig:axiomssmc}. Intuitively, deformations that do not break strings or change the order of inputs and outputs do not alter the meaning of the picture. String diagrams assemble into an SMC, also denoted by $\vync{\sig}$. Its objects are finite lists of sorts in $\sig_0$, morphisms $u \rightarrow v$ are string diagrams $\input{fuv.tikz}$, composition is described by \textsc{Seq}, and monoidal product is concatenation on objects and \textsc{Par} on morphisms.
	\setcounter{theqn}{0}
	\begin{figure}[!htb]
		\begin{minipage}{0.33\textwidth}
			\begin{gather*}
				\raisebox{0.4ex}{\scalebox{0.7}{\input{monoidal/assocseqcomp.tikz}}} \\[0.4em]
				\raisebox{0.4ex}{\scalebox{0.7}{\input{monoidal/neutralid.tikz}}} \\[0.4em]
				\raisebox{0.4ex}{\scalebox{0.7}{\input{monoidal/neutralidzer.tikz}}}
			\end{gather*}
		\end{minipage}
		\begin{minipage}{0.17\textwidth}
			\begin{gather*}
				\raisebox{0.4ex}{\scalebox{0.7}{\input{monoidal/assocparcomp.tikz}}}
			\end{gather*}
		\end{minipage}
		\begin{minipage}{0.25\textwidth}
			\begin{gather*}
				\raisebox{0.4ex}{\scalebox{0.7}{\input{monoidal/interchange.tikz}}}  
			\end{gather*}
		\end{minipage}
		\begin{minipage}{0.20\textwidth}
			\begin{gather*}
				\ \ \raisebox{0.4ex}{\scalebox{0.7}{\input{monoidal/overthewire.tikz}}} \\[0.4em]
				\raisebox{0.4ex}{\scalebox{0.7}{\input{monoidal/swapswap.tikz}}} \\
			\end{gather*}
		\end{minipage}
		\caption{Laws of symmetric monoidal categories. Dotted boxes denote bracketing that is later omitted thanks to the laws.}\label{fig:axiomssmc}
	\end{figure}
\end{definition}
\begin{definition}
	A \emph{monoidal theory} $\th$ comprises a signature $\sig$ and a set $E$ of \emph{axioms}, which are pairs of monoidal terms $f,g \in \vync{\sig}$, with matching inputs and outputs, denoted $f=g$ to convey their interpretation. A \emph{model} of $\th$ is an SMC $\catC$ equipped with a symmetric monoidal functor $\sem{-}: \vync{\sig} \rightarrow \catC$, such that $\sem{f} = \sem{g}$ for all $f=g \in E$. Every such model factors uniquely through the \emph{syntactic category} $\sync{\th}$ built as a quotient of $\vync{\sig}$ by the smallest congruence containing the axioms in $E$.
\end{definition}
A recurring goal in monoidal algebra is the axiomatisation of an SMC $\catC$ with a monoidal theory, that is, to find $\th = (\sig,E)$ and a model $\vync{\sig} \rightarrow \catC$ that factors through an isomorphism $\sync{\th} \cong \catC$. We say that $\th$ \emph{presents} $\catC$. Our main results are based on existing axiomatisations of SMCs of stochastic matrices.

\begin{example}
	The category $\FStoch$ has natural numbers as objects, and as morphisms $n \rightarrow m$, the $m\times n$ stochastic matrices (matrices with entries in $[0,1]$ whose columns each sum up to $1$). When $n = 0$, there is a unique empty stochastic matrix $[]: n \rightarrow m$, and there are no morphisms $n \rightarrow 0$ when $n > 0$. Composition is by matrix multiplication and identity morphisms are matrices with $1$ on every entry of the diagonal.

	We consider two different monoidal products on stochastic matrices: the Kronecker product, denoted $\otimes$, and the direct sum, denoted $\oplus$. Both are standard notions in linear algebra and specifically for stochastic matrices. They are defined below in block matrix form with their units and symmetries.
	\[
		A \otimes B \coloneq \scalebox{0.8}{$\begin{bmatrix} a_{11}B & \dots  & a_{1n}B \\
                \vdots  & \ddots & \vdots  \\
                a_{m1}B & \dots  & a_{mn}B\end{bmatrix}$} \quad I^\otimes \coloneq 1 \quad \swap{n,m}^{\otimes} \coloneq \scalebox{0.8}{$\begin{bmatrix} E_{11} & \dots  & E_{1n} \\
                \vdots & \ddots & \vdots \\
                E_{m1} & \dots  & E_{mn}\end{bmatrix}$} \qquad
		A \oplus B \coloneq \scalebox{0.8}{$\begin{bmatrix} A & 0 \\
                0 & B\end{bmatrix}$} \quad I^\oplus \coloneq 0 \quad \swap{n,m}^{\oplus} \coloneq \scalebox{0.8}{$\begin{bmatrix} 0   & I_m \\
                I_n & 0\end{bmatrix}$}
		,\]
	where $E_{ij}$ is the $n\times m$ matrix with a single $1$ at column $i$, row $j$, and $I_n$ is the identity $n\times n$ matrix. Note that $\otimes$ acts by multiplication on objects, while $\oplus$ acts by addition.

	These form the SMCs $\FSprod$ and $\FSsum$. Only the latter has been fully axiomatised with a monoidal theory, but there is an axiomatisation of a full subcategory of $\FSprod$: $\BSprod$, the SMC of stochastic matrices of dimensions that are powers of $2$. Its objects are still natural numbers, $\BStoch(n,m) = \FStoch(2^n,2^m)$, and $\otimes$ is still the Kronecker product, but it acts by addition on objects (since $2^{n+m} = 2^n2^m$). We briefly recall the monoidal theories $\th_{\Circ}$ and $\th_{\ConvAlg}$ that present $\BSprod$ and $\FSsum$ respectively.

	The monoidal theory $\th_{\!\Circ}$ (standing for \emph{causal circuits}) has generators \raisebox{0.3ex}{\scalebox{0.7}{$\begin{tikzpicture}
	\begin{pgfonlayer}{nodelayer}
		\node [style=black] (0) at (0.5, 0) {};
		\node [style=none] (1) at (-0.5, 0) {};
	\end{pgfonlayer}
	\begin{pgfonlayer}{edgelayer}
		\draw [style=single] (0) to (1.center);
	\end{pgfonlayer}
\end{tikzpicture}
$}}, \raisebox{0.3ex}{\scalebox{0.7}{$\input{cccopy.tikz}$}}, \raisebox{0.3ex}{\scalebox{0.7}{$\input{ccand.tikz}$}}, \raisebox{0.3ex}{\scalebox{0.7}{$\begin{tikzpicture}
	\begin{pgfonlayer}{nodelayer}
		\node [style=none] (0) at (-0.8, 0) {};
		\node [style=none] (1) at (0.875, 0) {};
		\node [style=not] (2) at (0, 0) {};
	\end{pgfonlayer}
	\begin{pgfonlayer}{edgelayer}
		\draw [style=single] (0.center) to (2.input);
		\draw [style=single] (2.output) to (1.center);
	\end{pgfonlayer}
\end{tikzpicture}
$}}, and \raisebox{0.3ex}{\scalebox{0.7}{$\flip{p}$}} for each $p \in [0,1]$, and axioms as in \cref{fig:causcircaxiom} below. We write $\sync{\Circ}$ for its syntactic category.

	\begin{figure}[!htb]
		{
			\tiny
			\begin{gather*}
				\scalebox{0.8}{\input{cc/assoc.tikz}}\qquad \scalebox{0.8}{\input{cc/unit.tikz}}  \qquad \scalebox{0.8}{\input{cc/comm.tikz}} \\[0.5em]
				\scalebox{0.8}{\input{cc/and_assoc.tikz}} \qquad \scalebox{0.8}{\input{cc/and_unit.tikz}}  \qquad \scalebox{0.8}{\input{cc/and_comm.tikz}}          \\[0.5em]
				\scalebox{0.8}{\input{cc/notnot.tikz}} \qquad \scalebox{0.8}{\input{cc/copyand.tikz}} \qquad \scalebox{0.8}{\input{cc/copynotand.tikz}} \qquad \scalebox{0.8}{\input{cc/andordist.tikz}} \\[0.5em]
				\scalebox{0.8}{\input{cc/andcopy.tikz}} \qquad \scalebox{0.8}{\input{cc/notcopy.tikz}} \qquad \scalebox{0.8}{\input{cc/anddel.tikz}} \qquad \scalebox{0.8}{\input{cc/notdel.tikz}} \\[0.5em]
				\scalebox{0.8}{\input{cc/copyzero.tikz}} \qquad \scalebox{0.8}{\input{cc/copyone.tikz}} \qquad \scalebox{0.8}{\input{cc/pdel.tikz}} \qquad \scalebox{0.8}{\input{cc/pnot.tikz}}\\[0.5em]
				\scalebox{0.8}{\input{cc/geninverse.tikz}} \ \scalebox{0.8}{\input{cc/ifassoc.tikz}} \ \scalebox{0.8}{\input{cc/ifdisintegration.tikz}}
			\end{gather*}
		}
		\caption{Axiomatisation of $\BSprod$ from~\cite[Fig.~4]{Piedeleu2025b}, where $p, q, r, s, t$ are universally quantified over $[0,1]$, $\tilde{r} \coloneq rp + (1-r)q$, $\tilde{p}\coloneq \sfrac{rp}{\tilde{r}}$, $\tilde{q} \coloneq \sfrac{r(1-p)}{1-\tilde{r}}$, $\tilde{s} \coloneq st$, and $\tilde{t} \coloneq \sfrac{s(1-t)}{1-st}$. When the denominators are $0$, the values of $\tilde{p}$, $\tilde{q}$, and $\tilde{t}$ are discarded with diagrammatic reasoning, so we can define them to be anything in $[0,1]$.}
		\label{fig:causcircaxiom}
	\end{figure}
	We record below some syntactic sugar used in \cref{fig:causcircaxiom} and later. We define the or gate, the if gates, and the convex combination gates for each $p \in [0,1]$ (we use thin and thick strings together to explicitly distinguish between the type $1$ and an arbitrary type $n$):
	\[ \scalebox{0.8}{\input{ccor.tikz}} \coloneq \scalebox{0.8}{\input{ordef.tikz}} \quad \scalebox{0.65}{\input{ccif.tikz}} \coloneq \scalebox{0.8}{\input{ifdef.tikz}} \quad \scalebox{0.65}{\input{ifbunch.tikz}} \coloneq \scalebox{0.65}{\input{ifbunchdef.tikz}} \quad \scalebox{0.8}{\input{convexsugar.tikz} }\coloneq \scalebox{0.65}{\input{convexdef.tikz}}\]

	In~\cite[Theorem~3.13]{Piedeleu2025b}, it is shown that the assignment $\sem{-}: \vync{\sig_{\Circ}} \rightarrow \BSprod$ defined below yields a model of $\th_{\!\Circ}$ that factors through an isomorphism $\sync{\Circ} \cong \BSprod$.
	\begin{equation}\label{eqn:sembsprod}
		\sem{\raisebox{0.3ex}{\scalebox{0.7}{$$}}} = [] \qquad
		\sem{\raisebox{0.3ex}{\scalebox{0.7}{$\input{cccopy.tikz}$}}} = \left[ \begin{smallmatrix}
				1 & 0 \\
				0 & 0 \\
				0 & 0 \\
				0 & 1
			\end{smallmatrix} \right] \qquad
		\sem{\raisebox{0.3ex}{\scalebox{0.7}{$\input{ccand.tikz}$}}} = \left[\begin{smallmatrix}
				1 & 0 & 0 & 0 \\
				0 & 1 & 1 & 1
			\end{smallmatrix}\right] \qquad
		\sem{\raisebox{0.3ex}{\scalebox{0.7}{$$}}} = \left[\begin{smallmatrix}
				0 & 1 \\
				1 & 0
			\end{smallmatrix}\right]
	\end{equation}
	We note that under this correspondence, the diagram $\scalebox{0.7}{\input{convexsugar.tikz}}$ really acts by convex combination. Namely, if we precompose two diagrams $f: m \rightarrow n$ and $g: m' \rightarrow n$ whose semantics are two stochastic matrices of dimension $2^n \times 2^m$ and $2^n \times 2^{m'}$ respectively, then the resulting diagram's semantics is the matrix obtained by convex combining the columns of $\sem{f}$ with those of $\sem{g}$, with coefficient $p$. Explicitly,
	\begin{equation}\label{eqn:semconvex}
		\sem{\scalebox{0.8}{\begin{tikzpicture}
	\begin{pgfonlayer}{nodelayer}
		\node [style=basic box] (24) at (0, 0) {$f$};
		\node [style=none] (25) at (1, 0) {};
		\node [style=none] (27) at (-1, 0) {};
	\end{pgfonlayer}
	\begin{pgfonlayer}{edgelayer}
		\draw [style=double] (24) to (25.center);
		\draw [style=double] (27.center) to (24);
	\end{pgfonlayer}
\end{tikzpicture}
}} = A , \sem{\scalebox{0.8}{\begin{tikzpicture}
	\begin{pgfonlayer}{nodelayer}
		\node [style=basic box] (24) at (0, 0) {$g$};
		\node [style=none] (25) at (1, 0) {};
		\node [style=none] (27) at (-1, 0) {};
	\end{pgfonlayer}
	\begin{pgfonlayer}{edgelayer}
		\draw [style=double] (24) to (25.center);
		\draw [style=double] (27.center) to (24);
	\end{pgfonlayer}
\end{tikzpicture}
}} = B  \implies \sem{\scalebox{0.8}{\input{convfg.tikz}}} = \begin{bmatrix} A_1+_p B_1 & A_1+_pB_2 & \cdots & A_{2^m}+_p B_{2^{m'}}\end{bmatrix},
	\end{equation}
	where $A_i$ denotes the $i$th column of $A$ and similarly for $B$, and $A_i+_p B_j \coloneq pA_i+(1-p)B_j$.

	The monoidal theory $\th_{\!\ConvAlg}$ (standing for \emph{convex algebras}) has generators \raisebox{0.3ex}{\scalebox{0.7}{$$}}, \raisebox{0.3ex}{\scalebox{0.7}{$\input{cop.tikz}$}}, and \raisebox{0.2ex}{\scalebox{0.7}{$\cc{p}$}} for each $p \in [0,1]$, and axioms as in \cref{fig:fritzaxiom} below. We write $\sync{\ConvAlg}$ for its syntactic category.
	\begin{figure}[!htb]
		{
			\tiny
			\begin{gather*}
				\scalebox{0.8}{\input{assoc.tikz}} \qquad \scalebox{0.8}{\input{comm.tikz}} \qquad \scalebox{0.8}{\input{unit.tikz}}\qquad
				\scalebox{0.8}{\input{convassoc.tikz}} \qquad \scalebox{0.8}{\input{convcomm.tikz}} \\[0.5em]
				\scalebox{0.8}{\input{natdel.tikz}} \qquad \scalebox{0.8}{\input{zprob.tikz}}  \qquad \scalebox{0.8}{\input{idemp.tikz}} \qquad  \scalebox{0.8}{\input{cccop.tikz}}
			\end{gather*}
		}
		\caption{Axiomatisation of $\FSsum$ from~\cite[Definition~3.1]{Fritz09}, where $\tilde{p} = pq$ and $\tilde{q}=\frac{p -pq}{1-pq}$ (with $\frac{0}{0} = 1$).}
		\label{fig:fritzaxiom}
	\end{figure}

	In~\cite[Theorem~3.14]{Fritz09}, it is shown that the assignment $\sem{-}:\vync{\sig_{\ConvAlg}} \rightarrow \FSsum$ defined below yields a model of $\th_{\!\ConvAlg}$ that factors through an isomorphism $\sync{\ConvAlg} \cong \FSsum$.
	\begin{equation}\label{eqn:semfssum}
		\sem{\raisebox{0.3ex}{\scalebox{0.7}{$$}}} = [] \qquad
		\sem{\raisebox{0.3ex}{\scalebox{0.7}{$\input{cop.tikz}$}}} = \left[\begin{smallmatrix}
				1 & 1
			\end{smallmatrix}\right] \qquad
		\sem{\raisebox{0.3ex}{\scalebox{0.7}{$\cc{p}$}}} = \distvect{p}{1-p}
	\end{equation}
\end{example}
\subsection{Quantale-Valued Relations}
While our main applications are about axiomatising real-valued distances, our theoretical framework is presented in more general terms with quantale-valued relations.
\begin{definition}
	A \emph{quantale} $\vV$ consists of a set $\vV$ equipped with a partial order $\vle$ and a binary operation $\vten : \vV \times \vV \rightarrow \vV$ satisfying the following properties:
	\begin{itemize}[nosep]
		\item any subset $S \subseteq \vV$ has a supremum $\vJoin S$ and an infimum $\vMeet S$, so that $(\vV, \vle)$ is a \emph{complete lattice}. In particular, $\vV$ has a \emph{bottom element} $\bot = \vJoin \emptyset = \vMeet \vV$ and a \emph{top element} $\top = \vJoin \vV = \vMeet \emptyset$;
		\item $\vten$ is \emph{associative}, namely, for any $x,y,z \in \vV$, $x \vten (y \vten z) = (x \vten y) \vten z$; and
		\item $\vten$ \emph{preserves supremums}, namely, for any $x \in \vV$ and $S \subseteq \vV$, $x \vten \vJoin S = \vJoin \left\{ x \vten a \mid a \in S \right\}$ and $\vJoin S \vten x = \vJoin \left\{ a \vten x \mid a \in S \right\}$.
	\end{itemize}
\end{definition}
We say that $\vV$ is \emph{unital} if it contains an element $\vun \in \vV$ that is neutral for $\vten$ ($x \vten \vun = x = \vun \vten x$ for all $x \in \vV$).
In this work, we will assume that quantales are unital. The main example we consider is the extended nonnegative reals, also called Lawvere's quantale~(\!\!\cite{Lawvere73}).
\begin{example}
	Nonnegative real numbers with $\infty$ form a complete lattice. By defining addition with infinity as $x + \infty = \infty$ for all $x \in [0,\infty]$, we obtain an associative operation that preserves infimums. Thus, $[0,\infty]$ with the opposite order $\geq$ and addition forms a quantale with unit $0$, denoted $\zeroinfQ$.
\end{example}

\begin{definition}
	A \emph{$\vV$-relation} is a set $X$ equipped with a function $d: X\times X \rightarrow \vV$, often denoted as a pair $(X,d)$. A morphism between $\vV$-relations $(X,d_X)$ and $(Y,d_Y)$ is a function $f: X \rightarrow Y$ satisfying, for all $x,x' \in X$, $d_X(x,x') \vle d_Y(f(x),f(x'))$. We denote by $\vRelV$ the category of $\vV$-relations, where identities and compositions are as for functions.
\end{definition}
An isomorphism between $\vV$-relations $(X,d_X)$ and $(Y,d_Y)$ is a function $f: X \rightarrow Y$ that is a bijection and satisfies, for all $x,x' \in X$, $d_X(x,x') = d_Y(f(x),f(x'))$.
\begin{example}
	A $\zeroinfQ$-relation $(X,d)$ is often thought of as a set equipped with a \emph{distance}. For $x,y \in X$, $d(x,y)$ is a number that could represent the physical distance, or time, or amount of work, required to go from $x$ to $y$. A morphism of $\zeroinfQ$-relations is called a \emph{nonexpansive map} because $d_X(x,x') \geq d_Y(f(x),f(x'))$ is interpreted as $f$ mapping elements closer together (not expanding the space). An isomorphism of $\zeroinfQ$-relations is also called an \emph{isometry}.
\end{example}
The work on quantitative algebraic reasoning initiated in~\cite{Mardare2016} relies on representing distances as families of binary predicates. More precisely, given $\varepsilon \in \vV$, the \emph{equality up to~$\varepsilon$} predicate on a $\vV$-relation $(X,d)$, denoted by $=_{\varepsilon}$, is defined as $x =_\varepsilon y \Leftrightarrow d(x,y) \vge \varepsilon$. For $\zeroinfQ$, we read $x=_{\varepsilon} y$ as saying that the distance from $x$ to $y$ is less than $\varepsilon$. These predicates determine $d$, which enables reasoning about $\vV$-relations in a logical system using these predicates, after enforcing monotonicity and continuity conditions.
\begin{proposition}\!\!\!\!\cite[Proposition~2.21]{Sarkis2024}\label{prop:vrelpreds}
	There is a bijective correspondence between $\vV$-relations on a set $X$ and families of relations ${=_{\varepsilon}} \subseteq X\times X$, indexed by $\varepsilon \in \vV$, that are \emph{monotone} (if $\varepsilon \vge \varepsilon'$ and $x =_{\varepsilon} y$ then $x=_{\varepsilon'} y$) and \emph{continuous} (if $x=_{\varepsilon_i} y$ for all $i$ in some index set $I$, then $x=_{\vjoin_i \varepsilon_i} y$).
\end{proposition}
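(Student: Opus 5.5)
The plan is to write down two explicit constructions and check that each is inverse to the other. For the first direction, take a $\vV$-relation $d: X\times X \rightarrow \vV$ and set $x =_{\varepsilon} y \Leftrightarrow d(x,y) \vge \varepsilon$. For the second, take a family $(=_\varepsilon)_{\varepsilon\in\vV}$ and set
\[ d(x,y) \coloneq \vJoin\{\varepsilon \in \vV \mid x =_\varepsilon y\}. \]
This supremum exists because $(\vV,\vle)$ is a complete lattice.

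First I check that the family built from $d$ is monotone and continuous.
\begin{itemize}[nosep]
\item Monotonicity: if $\varepsilon \vge \varepsilon'$ and $d(x,y) \vge \varepsilon$, then $d(x,y)\vge\varepsilon'$ by transitivity of $\vle$.
\item Continuity: if $d(x,y) \vge \varepsilon_i$ for all $i \in I$, then $d(x,y)$ is an upper bound of $\{\varepsilon_i\}$. Hence $d(x,y) \vge \vjoin_i \varepsilon_i$.
\end{itemize}

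Next I show the two constructions are mutually inverse.

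Starting from $d$, the recovered value is $\vJoin\{\varepsilon \mid d(x,y)\vge\varepsilon\}$. This set is the principal down-set of $d(x,y)$, so its supremum is $d(x,y)$ itself.

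Starting from a monotone, continuous family, let $\delta = \vJoin\{\varepsilon \mid x=_\varepsilon y\}$. I must show $x =_{\varepsilon'} y \Leftrightarrow \delta \vge \varepsilon'$.
\begin{itemize}[nosep]
\item Forward: if $x =_{\varepsilon'} y$, then $\varepsilon'$ lies in the indexing set, so $\delta\vge\varepsilon'$.
\item Backward: continuity applied to the whole set $S=\{\varepsilon \mid x =_\varepsilon y\}$ gives $x =_\delta y$. Then monotonicity gives $x=_{\varepsilon'} y$ for every $\varepsilon' \vle \delta$.
\end{itemize}

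The only delicate point is the edge case $S=\emptyset$. It is handled by continuity with the empty index set: this yields $x =_{\bot} y$ for all $x,y$, so $S$ is in fact never empty. The backward step therefore relies on continuity for arbitrary, possibly empty or infinite, index sets, and that is the step I would double-check.

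The quantale operation $\vten$ plays no role here; only the complete lattice structure is used.
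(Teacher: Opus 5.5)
Your proof is correct and is the standard argument. The two maps you use, $x=_\varepsilon y \Leftrightarrow d(x,y)\vge\varepsilon$ and $d(x,y)=\vJoin\{\varepsilon \mid x=_\varepsilon y\}$, are exactly the ones the paper relies on: it cites \cite[Proposition~2.21]{Sarkis2024} rather than giving a proof, and reuses the supremum construction to define $d^{\th}$ on the syntactic category. The step you flagged holds, because continuity is assumed for arbitrary index sets, so applying it directly to $S$ gives $x=_{\vJoin S} y$ even when $S=\emptyset$; this empty-index case is genuinely needed, since $d(x,y)\vge\bot$ always holds.
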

The category $\vRelV$ has categorical products that lift cartesian products of sets. Given two $\vV$-relations $(X,d_X)$ and $(Y,d_Y)$, the $\vV$-relation on $X\times Y$ defined by $d((x,y),(x',y')) = d_X(x,x') \vmeet d_Y(y,y')$ satisfies the universal property of the product. This defines a symmetric monoidal structure on $\vRelV$, which we denote by $\tensorvmeet$. Its unit is the reflexive $\vV$-relation on a singleton, $\mathbf{1}$. 
We may consider categories enriched in $(\vRelV,\tensorvmeet,\mathbf{1})$, but in the following section, we will use another monoidal structure.

Given $(X,d_X)$ and $(Y,d_Y)$, there is a $\vV$-relation on $X \times Y$ defined by $d((x,y),(x',y')) = d_X(x,x') \vten d_Y(y,y')$. When $\vV$ is commutative (i.e.\@ $a \vten b = b \vten a$), this also defines a symmetric monoidal structure on $\vRelV$, which we denote by $\tensorvten$. Its unit is also $\mathbf{1}$. In the sequel, we assume $\vV$ is commutative.

\subsection{Categories Enriched over \texorpdfstring{$\vRelV$}{VRel}}
Enriched $\vRelV$-categories enjoy a straightforward concrete characterisation that will be more convenient to us than the abstract definition of enrichment in e.g.\@ \cite{KellyBook}. Indeed, a $(\vRelV,\tensorvten,\mathbf{1})$-enriched category $\catC$ is simply a category $\catC$ equipped with a $\vV$-relation on each of its homsets where composition is a morphism of $\vV$-relations. Unrolled: for every objects $X,Y \in \catC$, there is a function $d^{\scriptscriptstyle \catC}_{X,Y} : \catC(X,Y) \times \catC(X,Y) \rightarrow \vV$, we often simply write $d^{\scriptscriptstyle \catC}$ or $d$, and for every morphisms $f,f': X \rightarrow Y$ and $g,g': Y \rightarrow Z$ in $\catC$,
\begin{equation}\label{eqn:nexpcomposition}
	d^{\scriptscriptstyle \catC}_{X,Z}(f\scomp g,f'\scomp g') \vge d^{\scriptscriptstyle \catC}_{X,Y}(f,f') \vten d^{\scriptscriptstyle \catC}_{Y,Z}(g,g').
\end{equation}
We can further impose on a $\vRelV$-category that it is monoidally enriched, or that it is an enriched monoidal category, meaning that it has a monoidal structure compatible with the enrichment. Concretely, if $(\catC,\tensor,I)$ is a symmetric monoidal category and $\catC$ is enriched over $\vRelV$, then we say it is \emph{monoidally enriched} if $\tensor$ is a morphism of $\vV$-relations, i.e.\@ for all $f,f':X \rightarrow Y$ and $g,g':X' \rightarrow Y'$,
\begin{equation}\label{eqn:nexptensor}
	d^{\scriptscriptstyle \catC}_{X\tensor X', Y\tensor Y'}(f\tensor g,f'\tensor g') \vge d^{\scriptscriptstyle \catC}_{X,Y}(f,f') \vten d^{\scriptscriptstyle \catC}_{X',Y'}(g,g').
\end{equation}
We will call such a category a \emph{$\vRelV$-SMC}.

We also mention that $\vRelV$-enriched functors are functors between the underlying categories that are locally morphisms of $\vV$-relations, that is, $F: \catC \rightarrow \catD$ is enriched whenever $d^{\scriptscriptstyle\catC}(f,g) \vle d^{\scriptscriptstyle\catD}(F(f),F(g))$ for all morphisms $f,g \in \catC(X,Y)$. Similarly, an enriched isomorphism is an isomorphism between the underlying categories that is locally an isomorphism of $\vV$-relations.

In case $\vV = \zeroinfQ$, we say an enriched functor is \emph{locally nonexpansive} and an enriched isomorphism is \emph{locally an isometry}.

\subsection{Relative Entropy}
Let $\Dset X$ denote the set of \emph{finitely supported distributions} over a set $X$, namely,
\[\Dset X \coloneq \{ \dist : X \rightarrow [0,1] \mid \dist^{-1}(0,1] \text{ is finite and } \textstyle\sum_{x \in X} \dist(x) = 1 \}. \]
We write $\dirac{x} \in \Dset X$ for the \emph{Dirac distribution} at $x \in X$ defined by $\dirac{x}(x') = 0$ if $x \neq x'$ and $\dirac{x}(x) = 1$.

Relative entropy, also called Kullback--Leibler (KL) divergence~(\!\!\cite{Kullback1951}), is a $\zeroinfQ$-relation on $\Dset X$.
\begin{definition}\label{defn:kl}
	\emph{KL divergence} is a map $\Dset X \times \Dset X \rightarrow \zeroinfQ$ defined by $\kl(\dist,\distb) \coloneq \sum_{x \in X} \dist(x) \log\frac{\dist(x)}{\distb(x)}$, with $\log\frac{x}{0} = \infty$ and $0\cdot \infty = 0$ as conventions. In particular, $\kl(\dist,\distb) = \infty$ if and only if there exists some $x \in X$ such that $\dist(x) > 0 = \distb(x)$, and $\kl(\dist,\dist) = 0$.
\end{definition}
\begin{example}
	For any natural number $n \in \N$, let $\fset{n} \coloneq \{1,\dots, n\}$. We can represent a distribution in $\Dset \fset{n}$ as a vector in $[0,1]^n$ whose entries sum up to $1$. Then, the KL divergence between two distributions is
	\[\kl([a_1,\dots,a_n],[b_1,\dots,b_n]) = \textstyle\sum_{i =1}^n a_i\log\frac{a_i}{b_i}.\]
	The following instance, with $n = 2$, will appear several times in the rest of the paper.
	\begin{equation*}
		\kl(\raisebox{0.3ex}{$\distvect{p}{1-p}$},\raisebox{0.3ex}{$\distvect{q}{1-q}$}) = p\log\frac{p}{q} + (1-p)\log\frac{1-p}{1-q}.
	\end{equation*}
	Note that if $q\in \{0,1\}$ and $p \neq q$, then $\kl(\raisebox{0.3ex}{$\distvect{p}{1-p}$},\raisebox{0.3ex}{$\distvect{q}{1-q}$}) = \infty$. If $p = q$, then $\kl(\raisebox{0.3ex}{$\distvect{p}{1-p}$},\raisebox{0.3ex}{\scalebox{0.45}{$\begin{bmatrix} q\\1-q \end{bmatrix}$}}) = 0$.
\end{example}
A renowned property of KL divergence, called the \emph{chain rule}~(\!\!\cite[Theorem~2.15]{Polyanskiy2025}), allows to decompose the computation of the divergence for distributions over products or coproducts of sets. We state two specific instances of the chain rule that will be key in the axiomatisations of \cref{sec:axiombstoch,sec:axiomfstoch}.

\begin{lemma}\label{lem:klprodchainrule}
	Let $Y \neq \emptyset$ be a set, $\B \coloneq \{\b0,\b1\}$, $\dist,\distb \in \Dset(\B \times Y)$, $p \coloneq \sum_{y \in Y} \dist(\b1,y)$, and $q \coloneq \sum_{y \in Y} \distb(\b1,y)$. Writing $\overline{\dist}(\b1,-) \coloneq y \mapsto \sfrac{\dist(\b1,y)}{p}$ and $\overline{\dist}(\b0,-) \coloneq y \mapsto \sfrac{\dist(\b0,y)}{1-p}$ for the conditional distributions on $Y$,\footnote{If $p=0$ (resp.\@ $p=1$) $\overline{\dist}(\b1,-)$ (resp.\@ $\overline{\dist}(\b0,-)$) is assumed to be an arbitrary distribution on $Y$.} and similarly for $\overline{\distb}$, we have
	\begin{equation}\label{eqn:klprodrecurse}
		\kl(\dist,\distb) = \kl(\raisebox{0.3ex}{$\distvect{p}{1-p}$},\raisebox{0.3ex}{$\distvect{q}{1-q}$})+ p\kl(\overline{\dist}(\b1,-),\overline{\distb}(\b1,-)) + (1-p)\kl(\overline{\dist}(\b0,-),\overline{\distb}(\b0,-)).
	\end{equation}
\end{lemma}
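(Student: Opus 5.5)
We prove \eqref{eqn:klprodrecurse} by splitting the sum defining $\kl(\dist,\distb)$ according to the first coordinate and factoring each joint probability as a marginal times a conditional. By \cref{defn:kl},
\[\kl(\dist,\distb) = \sum_{b\in\B}\sum_{y\in Y}\dist(b,y)\log\frac{\dist(b,y)}{\distb(b,y)},\]
and we treat the inner sums for $b=\b1$ and $b=\b0$ separately.

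\textbf{Generic case.} Assume first that $p,q\in(0,1)$. We write $\dist(\b1,y)=p\,\overline{\dist}(\b1,y)$ and $\distb(\b1,y)=q\,\overline{\distb}(\b1,y)$. The logarithm of the ratio then splits as
\[\log\frac{p}{q}+\log\frac{\overline{\dist}(\b1,y)}{\overline{\distb}(\b1,y)}.\]
Summing over $y$ against $\dist(\b1,y)$ and using $\sum_y\dist(\b1,y)=p$ gives
\[p\log\frac{p}{q} + p\,\kl(\overline{\dist}(\b1,-),\overline{\distb}(\b1,-)).\]
The $\b0$ branch gives $(1-p)\log\frac{1-p}{1-q}+(1-p)\,\kl(\overline{\dist}(\b0,-),\overline{\distb}(\b0,-))$ in the same way. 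Adding the two branches and recognising the binary KL term yields \eqref{eqn:klprodrecurse}. The additive splitting is legitimate in $[0,\infty]$: each conditional KL term lies in $[0,\infty]$, so no $\infty-\infty$ arises. When some summand is infinite, both sides are infinite simultaneously, because a term $\dist(b,y)>0=\distb(b,y)$ with $q\in(0,1)$ exactly means $\overline{\dist}(b,y)>0=\overline{\distb}(b,y)$.

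\textbf{Degenerate cases.} The remaining work, and the main obstacle, is to check that the conventions $0\cdot\infty=0$, $\log\frac{x}{0}=\infty$, and the arbitrary choice of conditionals when $p\in\{0,1\}$ (footnote) are all consistent.

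\begin{itemize}[nosep]
\item If $p=0$, then $\dist(\b1,-)=0$, so the $\b1$ branch of the left side vanishes. On the right side, $p\log\frac pq=0$ and $p\cdot\kl(\dots)=0$ regardless of the arbitrary conditional, even if that KL term is $\infty$. The same holds symmetrically when $p=1$ for the $\b0$ branch.
\item If $p>0$ and $q=0$, then $\distb(\b1,-)=0$ while some $\dist(\b1,y)>0$, so the left side is $\infty$. On the right side, $p\log\frac p0=\infty$, and all terms are nonnegative, so the right side is $\infty$ too. The same argument applies when $1-p>0$ and $q=1$.
\item If $q\in\{0,1\}$ and $p=q$, only one branch survives, and the computation reduces to the generic one on that branch.
\end{itemize}

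Nonnegativity of each term, which is Gibbs' inequality, is used only to rule out cancellation of infinities on the right side. The assumption $Y\neq\emptyset$ guarantees that the arbitrary conditional distributions in the footnote exist.
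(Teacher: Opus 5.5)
Your proof is correct. The paper does not prove this lemma itself: it presents it as the special case, with first factor $\B$, of the general chain rule for KL divergence cited from \cite[Theorem~2.15]{Polyanskiy2025}. Your argument is the direct computation behind that theorem, specialised to a binary first coordinate: factor each joint probability as a marginal times a conditional and split the logarithm. The citation is shorter and more general, covering arbitrary first factors and non-discrete settings. Your version is self-contained, and it explicitly checks the degenerate cases $p$ or $q\in\{0,1\}$, where the footnote's arbitrary conditionals and the convention $0\cdot\infty=0$ come into play. These are exactly the cases the soundness of $\Chainprod$ relies on, since that rule is stated for all $p,q\in[0,1]$ and arbitrary $f_{\b x},g_{\b x}$.

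Two small tightenings. First, your generic case assumes $p,q\in(0,1)$, but the case $p=0$, $q\in(0,1)$ still needs that computation on the $\b0$ branch, which your first bullet leaves implicit. It is cleaner to split the right-hand side per branch:
\[
\Bigl[p\log\frac{p}{q} + p\,\kl(\overline{\dist}(\b1,-),\overline{\distb}(\b1,-))\Bigr] + \Bigl[(1-p)\log\frac{1-p}{1-q} + (1-p)\,\kl(\overline{\dist}(\b0,-),\overline{\distb}(\b0,-))\Bigr].
\]
Each bracket lies in $(-\infty,\infty]$ and can be matched with the corresponding branch of the left-hand side under three exhaustive subcases:
\begin{enumerate}
\item The $\dist$-marginal of that branch is $0$; then both sides are $0$.
\item The $\dist$-marginal is positive but the $\distb$-marginal is $0$; then both sides are $\infty$.
\item Both marginals are positive; this is your generic computation.
\end{enumerate}

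Second, ``all terms are nonnegative'' is slightly off, since the summand $(1-p)\log(1-p)$ of the binary term can be negative. What you actually use is that no summand is $-\infty$, so addition in $(-\infty,\infty]$ is well defined.
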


\begin{lemma}\label{lem:klsumchainrule}
	Let $X,Y \neq \emptyset$ be sets, $\dist, \distb \in \Dset(X+Y)$, $p \coloneq \sum_{x \in X}\dist(x)$ and $q \coloneq \sum_{x \in X}\distb(x)$. Writing $\dist_X \coloneq x \mapsto \dist(x)/p$ for the conditional distribution on $X$, and similarly for $\dist_Y$, $\distb_X$, and $\distb_Y$, we have
	\begin{equation}\label{eqn:klsumchainrule}
		\kl(\dist,\distb) = \kl(\raisebox{0.3ex}{$\distvect{p}{1-p}$},\raisebox{0.3ex}{$\distvect{q}{1-q}$}) + p\kl(\dist_X,\distb_X) + (1-p)\kl(\dist_Y,\distb_Y).
	\end{equation}
\end{lemma}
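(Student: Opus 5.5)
The statement to prove is \cref{lem:klsumchainrule}. I will prove it by direct computation from \cref{defn:kl}: split the sum over $X+Y$ into the part over $X$ and the part over $Y$, then rewrite each part using the conditional distributions.

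I start with the generic case $0<p<1$ and $0<q<1$. Then $\dist(x)=p\,\dist_X(x)$ and $\distb(x)=q\,\distb_X(x)$ for $x\in X$, and likewise $\dist(y)=(1-p)\dist_Y(y)$ and $\distb(y)=(1-q)\distb_Y(y)$ for $y \in Y$. For the $X$-part,
\[\textstyle\sum_{x\in X}\dist(x)\log\frac{\dist(x)}{\distb(x)}=\sum_{x\in X} p\,\dist_X(x)\Bigl(\log\frac{p}{q}+\log\frac{\dist_X(x)}{\distb_X(x)}\Bigr)=p\log\frac pq+p\,\kl(\dist_X,\distb_X),\]
using $\sum_x\dist_X(x)=1$. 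The $Y$-part gives $(1-p)\log\frac{1-p}{1-q}+(1-p)\kl(\dist_Y,\distb_Y)$ in the same way. Adding the two parts gives \eqref{eqn:klsumchainrule}, since $p\log\frac pq+(1-p)\log\frac{1-p}{1-q}$ is exactly the binary divergence term. When infinities occur (some $\dist(x)>0=\distb(x)$), the splitting of the logarithm is still valid in $[0,\infty]$: $\log\frac{p}{q}$ is finite and $\log\frac{\dist_X(x)}{0}=\infty$, and all summands are either finite reals or $+\infty$. No $\infty-\infty$ ever appears, because each term is rewritten as a sum, never as a difference.

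The main obstacle is the degenerate cases, where the conditionals are undefined. These require using the conventions $0\cdot\infty=0$ and $\log\frac x0=\infty$, and the convention that an undefined conditional is an arbitrary distribution.

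\textbf{Case $p=0$.} Every $x$-term vanishes, and $p\,\kl(\dist_X,\distb_X)=0\cdot(\text{anything})=0$ even though $\dist_X$ is arbitrary. If $q<1$, the $Y$-computation above still works. If $q=1$, then $\distb$ vanishes on $Y$ while $\dist$ has mass $1$ there, so both sides equal $\infty$. On the right-hand side this comes from $\log\frac{1}{0}=\infty$ in the binary term.

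\textbf{Case $p=1$.} This is symmetric to $p=0$.

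\textbf{Case $0<p<1$ with $q\in\{0,1\}$.} Say $q=0$. Then $\distb$ vanishes on $X$ while $\dist$ has positive mass there, so the left-hand side is $\infty$. The binary term on the right is also $\infty$. The conditional $\distb_X$ is arbitrary, but it does not matter, since every summand is nonnegative and the total is already $\infty$. The case $q=1$ is symmetric.

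Nonnegativity of each KL term, by Gibbs' inequality, is what makes the values of the arbitrary conditionals irrelevant whenever the total is infinite.
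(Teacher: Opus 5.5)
Your proof is correct, but it is not how the paper obtains the lemma. The paper gives no proof and simply instantiates the general chain rule of Polyanskiy--Wu (their Theorem~2.15), $D(P_{ZW}\|Q_{ZW}) = D(P_Z\|Q_Z) + \mathbb{E}_{z\sim P_Z}\, D(P_{W\mid Z=z}\|Q_{W\mid Z=z})$. Here $W$ is the point of $X+Y$ and $Z$ is the summand it lies in. Since $Z$ is a deterministic function of $W$, the joint divergence equals $\kl(\dist,\distb)$. The marginals of $Z$ are $[p,1-p]$ and $[q,1-q]$, and the conditionals are $\dist_X,\dist_Y,\distb_X,\distb_Y$.

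You instead prove the identity from first principles, by splitting the sum over $X+Y$ and factoring $\dist(x)=p\,\dist_X(x)$, $\distb(x)=q\,\distb_X(x)$, and so on.

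The citation buys uniformity. The same theorem also gives the $\B\times Y$ version; indeed $\B\times Y\cong Y+Y$, so the paper's two chain-rule lemmas are really one statement. It also absorbs the degenerate cases automatically, because the conditional term is an expectation under $P_Z$, so conditionals on null labels never matter.

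Your computation buys self-containment and makes the conventions explicit. This is worthwhile here, since the paper states this lemma without the footnote about arbitrary conditionals that it gives for the product version. Your case analysis for $p\in\{0,1\}$ or $q\in\{0,1\}$ uses $0\cdot\infty=0$ and the fact that the KL terms never take the value $-\infty$ (indeed are nonnegative). It shows precisely why those arbitrary choices of conditionals never affect the right-hand side.
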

We remark that, in both cases, $\kl(\dist,\distb)$ is a monotone function of the divergences between the conditional distributions. More precisely, looking at \cref{eqn:klprodrecurse}, if you bound $\kl(\overline{\dist}(\b{x},-),\overline{\distb}(\b{x},-))$ from above, then you get an upper bound for $\kl(\dist,\distb)$, and similarly for \cref{eqn:klsumchainrule}:
\begin{align}
	\label{eqn:monotoneklprod}
	\kl(\overline{\dist}(\b1,-),\overline{\distb}(\b1,-)) \leq \varepsilon \text{ and } \kl(\overline{\dist}(\b0,-),\overline{\distb}(\b0,-)) \leq \delta & \implies \kl(\dist,\distb) \leq \kl(\raisebox{0.3ex}{$\distvect{p}{1-p}$},\raisebox{0.3ex}{$\distvect{q}{1-q}$}) + p\varepsilon + (1-p)\delta  \\
	\label{eqn:monotoneklsum}
	\kl(\dist_X,\distb_X) \leq \varepsilon \text{ and } \kl(\dist_Y,\distb_Y) \leq \delta                                                                 & \implies \kl(\dist,\distb) \leq \kl(\raisebox{0.3ex}{$\distvect{p}{1-p}$},\raisebox{0.3ex}{$\distvect{q}{1-q}$}) + p\varepsilon + (1-p)\delta.
\end{align}
This is particularly relevant for us because our logical framework does not reason about exact distance. We use the predicates $=_{\varepsilon}$ which are interpreted as upper bounds on the distance by $\varepsilon \in \zeroinfQ$.

\section{\texorpdfstring{$\vRelV$}{VRel}-Enriched Monoidal Theories}\label{sec:enrichmonth}

The framework of quantitative monoidal algebra, as developed in~\cite{Lobbia2025}, only allows for inference rules describing the quantitative behaviour of sequential composition and parallel composition of string diagrams (on top of the usual pseudometric axioms like triangle inequality). In order to axiomatise relative entropy in \cref{sec:axkl,sec:axren} below, we need a more general setting, allowing to model the chain rule as an inference rule of our diagrammatic theory. For this purpose, we now develop an extension of quantitative monoidal algebra, which allows the formation of inference rules not necessarily restricted to sequential and parallel composition of string diagrams. Besides contributing to our later results, this study is of interest for other works on quantitative algebra, as discussed in Section~\ref{sec:conclusions}.

\subsection{Quantitative Implications}
Let $\sig$ be a monoidal signature and $\vync{\sig}$ be its set of terms. We define \emph{quantitative implications}.
\begin{definition}
	A \emph{$\vV$-equation} between terms is a pair of terms $(s,t)$ in $\vync{\sig}$ along with an element $\varepsilon \in \vV$, which we shall write $s=_{\varepsilon} t$. 
	A \emph{$\vV$-implication} consists of a set $\Gamma$ of ($\vV$-)equations called the \emph{premises} along with a ($\vV$-)equation $\eqn$ called the \emph{conclusion}, which we will write as $\eqns \Rightarrow \eqn$.

	Given a $\vRelV$-SMC $\catC$ and a symmetric monoidal functor $\sem{-}: \vync{\sig} \rightarrow \catC$, we define \emph{satisfaction}. An equation $f=g$ is satisfied when $\sem{f} = \sem{g}$. A $\vV$-equation $f=_{\varepsilon} g$ is satisfied when $d^{\scriptscriptstyle \catC}(\sem{f},\sem{g}) \vge \varepsilon$. A $\vV$-implication $\eqns \Rightarrow \eqn$ is satisfied when either $\eqn$ is satisfied or at least one ($\vV$)-equation in $\Gamma$ is not satisfied.
\end{definition}
Oftentimes, we write an expression that is interpreted as a family of $\vV$-implications. For instance, 
we can write the following $\vV$-implications that are always satisfied in $\vRelV$-SMCs by \cref{prop:vrelpreds}.
\begin{align}
	\input{implicationmon.tikz}  &  & \forall \varepsilon,\varepsilon' \in \vV \text{ such that } \varepsilon\vge \varepsilon'\label{eqn:implicationmon} \\
	\input{implicationcont.tikz} &  & \forall \left\{ \varepsilon_i \right\}_{i \in I} \subseteq \vV\label{eqn:implicationcont}
\end{align}
There is an explicit quantification over elements in $\vV$, but there is another implicit quantification of $f$ and $g$:
$f$ and $g$ are \emph{metavariables}, and we interpret these expressions as families of $\vV$-implications where $f$ and $g$ are replaced by every possible terms that typecheck (e.g.~$f$ and $g$ must have the same inputs and outputs to be comparable in \cref{eqn:implicationmon,eqn:implicationcont}).

In the sequel, we will denote metavariables with symbols that are not defined in the scope to implicitly imply that they are universally quantified. We often write $f$, $g$, $h$, etc.\@ for generic terms, and $\varepsilon$, $\delta$, etc.\@ for generic elements of $\vV$.

\begin{example}\label{exmp:validimplications}
	Let us continue the list of $\vV$-implications that are \emph{valid}, i.e.\@ satisfied by any functor to a $\vRelV$-SMC. Recall that \cref{eqn:implicationmon,eqn:implicationcont} are satisfied by virtue of every hom-set of a $\vRelV$-SMC being a $\vV$-relation (along with \cref{prop:vrelpreds}).
	\begin{itemize}[nosep]
		\item The following implication is valid because sequential composition is a $\vV$-relation morphism \cref{eqn:nexpcomposition}.
		      \begin{equation}\label{eqn:implicationnexpseq}
			      \input{implicationnexpseq.tikz}
		      \end{equation}
		\item The following implication is valid because parallel composition is a $\vV$-relation morphism \cref{eqn:nexptensor}.
		      \begin{equation}\label{eqn:implicationnexppar}
			      \input{implicationnexppar.tikz}
		      \end{equation}
		\item The following implications are valid because equality is an equivalence relation congruent with respect to both compositions and to the $\vV$-relations on homsets.
		      \begin{equation}\label{eqn:equality}
			      \begin{array}{c}
				      \emptyset \Rightarrow f=f \qquad f=g \Rightarrow g=f \qquad f=g,g=h \Rightarrow f=h \\ f=f', g=g' \Rightarrow f\scomp g = f'\scomp g' \qquad f=f', g=g' \Rightarrow f\tensor g = f' \tensor g'\\
				      f=f',f=_{\varepsilon} g \Rightarrow  f'=_{\varepsilon} g \qquad  g=g', f=_{\varepsilon} g \Rightarrow f =_{\varepsilon} g'.
			      \end{array}
		      \end{equation}
		\item The equations in \cref{fig:axiomssmc}, seen as implications with no premises, are all valid because $\vRelV$-SMCs satisfy the laws of SMCs.
	\end{itemize}
\end{example}
We often omit writing $\emptyset$ when a $\vV$-implication has no premise.
\begin{remark}
	We consider SMCs whose homsets are mere $\vV$-relations and not metrics, so the list of valid implications does not include $\ \Rightarrow f=_0f$, $f=_0 g \Rightarrow f = g$, nor $f=_{\varepsilon} g, g =_{\delta} h \Rightarrow f =_{\varepsilon \vten \delta} g$. Each or all of these can be asserted in a case by case basis (e.g.~$\ \Rightarrow f=_0f$ is an axiom of the theories defined in \cref{defn:thklprod,defn:thklsum,defn:threprod,defn:thresum}).
\end{remark}
\begin{definition}
	A \emph{$\vV$-theory} $\th$ consists of a monoidal signature $\sig$ and a set of $\vV$-implications over $\sig$ called \emph{axioms}. A \emph{model} of $\th$ is a $\vRelV$-SMC equipped with a symmetric monoidal functor $\vync{\sig} \rightarrow \catC$ that satisfies all the implications in $\th$.
\end{definition}

A theory can be closed under standard implicational reasoning to contain all additional implications that are satisfied by all its models. For example, for any ($\vV$-)equation $\eqn$, all models trivially satisfy $\eqn \Rightarrow \eqn$, so that implication should be in the closure of $\th$. More generally, the closure is a consequence relation.
\begin{definition}
	Let $\vdash$ denote a set of $\vV$-implications and write $\eqns \Rightarrow \eqn \in {\vdash}$ infix as $\eqns \vdash \eqn$. We say that $\vdash$ is a \emph{consequence relation} if the following holds.
	\begin{enumerate}[nosep]
		\item $\eqn \in \eqns$ implies $\eqns \vdash \eqn$.
		\item $\eqns \vdash \eqn$ and $\eqns \subseteq \eqns'$ implies $\eqns' \vdash \eqn$.
		\item If $\eqns \vdash \eqn$ for all $\eqn \in \eqnsb$ and $\eqnsb\cup \eqnsb' \vdash \eqnb$, then $\eqns\cup \eqnsb' \vdash \eqnb$.
	\end{enumerate}
\end{definition}

Moreover, as we said in \cref{exmp:validimplications}, \cref{eqn:implicationmon,eqn:implicationcont,eqn:implicationnexpseq,eqn:implicationnexppar,eqn:equality} and \cref{fig:axiomssmc} are always satisfied, so they must also be put in the closure of $\th$. Perhaps unsurprisingly, the laws of consequence relations, \cref{eqn:implicationmon,eqn:implicationcont,eqn:implicationnexpseq,eqn:implicationnexppar,eqn:equality}, and \cref{fig:axiomssmc} are necessary and sufficient to construct the set of \emph{all} $\vV$-implications satisfied by all models of $\th$.

\begin{definition}
	Given a theory $\th$, let $\vdash^{\th}$ denote the smallest consequence relation containing $\th$ and all the $\vV$-implications in \cref{eqn:implicationmon,eqn:implicationcont,eqn:implicationnexpseq,eqn:implicationnexppar,eqn:equality} and \cref{fig:axiomssmc}. 
	We call it the \emph{closure} of $\th$.
\end{definition}

\begin{remark}
	It can be more convenient to write a $\vV$-implication as an inference rule:
	\[
		\eqn_1, \dots, \eqn_n \Rightarrow \eqn \qquad\rightsquigarrow\qquad \begin{bprooftree}
			\AxiomC{$\eqn_1$}
			\AxiomC{$...$}
			\AxiomC{$\eqn_n$}
			\TrinaryInfC{$\eqn$}
		\end{bprooftree}.
	\]
	Then, one can show that $\eqns \vdash^{\th} \eqn$ if and only if there is a proof tree with conclusion $\eqn$ and inference rules corresponding to implications in $\th$, \cref{eqn:implicationmon,eqn:implicationcont,eqn:implicationnexpseq,eqn:implicationnexppar,eqn:equality}, or \cref{fig:axiomssmc} such that all leaves are in $\eqns$ or empty. The laws of consequence relations are reflected in how proof trees are constructed.
\end{remark}
In order to prove our claim that $\vdash^{\th}$ contains all implications satisfied by all models, we construct the initial model as follows.
\begin{definition}\label{defn:synvcat}
	Given a monoidal signature $\sig$, a $\vV$-theory $\th$, and its closure $\vdash^{\th}$, let the \emph{syntactic $\vRelV$-SMC}, denoted $\sync{\th}$ be defined by
	\begin{itemize}[nosep]
		\item Its objects are lists of sorts in $\sig_0$.
		\item Its morphisms are terms in $\vync{\sig}$ quotiented by the relation $\equiv$ defined by $f \equiv g \Leftrightarrow {\vdash^{\th} f = g}$.
		\item The $\vV$-relation $d^{\th}$ on homsets is defined by $d^{\th}(f,g) = \vJoin\left\{ \varepsilon \mid {\vdash^{\th} f=_{\varepsilon} g} \right\}$.
	\end{itemize}
\end{definition}
We prove that this is a valid definition of a $\vRelV$-SMC and that it is the initial model of $\th$ at once.

\begin{lemma}\label{lem:syntactic}
	The canonical quotient map $\vync{\sig} \rightarrow \sync{\th}$ is a model of $\th$, and any model $\vync{\sig} \rightarrow \catC$ of $\th$  uniquely factorises through an enriched symmetric monoidal functor $\sync{\th} \rightarrow \catC$.
\end{lemma}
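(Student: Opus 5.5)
The argument has two halves. First, $\sync{\th}$ is a well-defined $\vRelV$-SMC and the quotient map is a model. Second, every model factors uniquely through it. Throughout, I would use the characterisation of $\vdash^{\th}$ by proof trees, and \cref{prop:vrelpreds} to pass between the $\vV$-relation $d^{\th}$ and the family of predicates $\vdash^{\th} f=_{\varepsilon} g$.

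\textbf{Well-definedness.} By the equality implications in \cref{eqn:equality}, $\equiv$ is an equivalence relation and a congruence for $\scomp$ and $\otimes$. The laws of \cref{fig:axiomssmc} hold in the quotient, so it is an SMC; the objects and typing come from $\vync{\sig}$. The last two lines of \cref{eqn:equality} say that $\vdash^{\th} f=_{\varepsilon} g$ depends only on the $\equiv$-classes of $f$ and $g$, so $d^{\th}$ is well defined on classes.

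Next I would show that the family $R_\varepsilon = \{(f,g) \mid {\vdash^{\th} f=_\varepsilon g}\}$ is monotone and continuous. This follows from \cref{eqn:implicationmon,eqn:implicationcont} together with the cut law for consequence relations. Hence, by \cref{prop:vrelpreds}, $d^{\th}(f,g)\vge\varepsilon$ iff $\vdash^{\th} f=_\varepsilon g$; in particular, the supremum defining $d^{\th}$ is itself derivable, by continuity.

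Nonexpansiveness of $\scomp$ then follows. Take $\varepsilon=d^{\th}(f,f')$ and $\delta=d^{\th}(g,g')$; both $f=_\varepsilon f'$ and $g=_\delta g'$ are derivable. Applying \cref{eqn:implicationnexpseq} and cutting gives $\vdash^{\th} f\scomp g=_{\varepsilon\vten\delta} f'\scomp g'$, which is \cref{eqn:nexpcomposition}. Monoidal enrichment, \cref{eqn:nexptensor}, follows in the same way from \cref{eqn:implicationnexppar}.

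\textbf{The quotient is a model.} It is a strict symmetric monoidal functor by construction. For an axiom $\eqns\Rightarrow\eqn$ of $\th$, suppose every premise is satisfied in $\sync{\th}$. By the equivalence above, each premise is then derivable from $\emptyset$, and cutting with the axiom shows that $\eqn$ is derivable, hence satisfied.

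\textbf{Factorisation.} Let $\sem{-}:\vync{\sig}\to\catC$ be a model. The key step is a soundness lemma: if $\vdash^{\th}\eqn$, then $\sem{-}$ satisfies $\eqn$. More generally, the set of implications satisfied by $\sem{-}$ is a consequence relation. This set contains $\th$ by assumption, and it contains \cref{eqn:implicationmon,eqn:implicationcont,eqn:implicationnexpseq,eqn:implicationnexppar,eqn:equality} and \cref{fig:axiomssmc} by \cref{exmp:validimplications}. It therefore contains $\vdash^{\th}$ by minimality of the closure.

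\textbf{Checking that satisfaction forms a consequence relation.} Conditions 1 and 2 are immediate. For condition 3, suppose $\eqns \vdash \eqn$ for all $\eqn\in\eqnsb$ and $\eqnsb\cup\eqnsb'\vdash\eqnb$, and that all of $\eqns\cup\eqnsb'$ is satisfied. Then all of $\eqnsb$ is satisfied, hence so is $\eqnb$. I expect this bookkeeping to be the only subtle point, and it is straightforward once phrased semantically.

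\textbf{Consequences of soundness.} If $f\equiv g$, then $\sem{f}=\sem{g}$, so $\sem{-}$ descends to a symmetric monoidal functor $\sync{\th}\to\catC$. It is enriched because $\vdash^{\th}f=_{d^{\th}(f,g)}g$ gives $d^{\catC}(\sem f,\sem g)\vge d^{\th}(f,g)$. Uniqueness is automatic, since the quotient map is surjective on morphisms and the identity on objects.
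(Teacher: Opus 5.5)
Your proof is correct and follows the paper's argument: congruence of $\equiv$ from the equality rules, nonexpansiveness of both compositions from the corresponding implications, and soundness of $\vdash^{\th}$ in every model, which gives the unique enriched factorisation. The only differences are that you get soundness from minimality of the closure (satisfaction by a fixed model is a consequence relation) rather than by structural induction on proof trees. You also make explicit, via monotonicity and continuity, why the supremum defining $d^{\th}$ is itself derivable and why the quotient map satisfies the axioms of $\th$, which the paper leaves implicit.
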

\begin{proof}
	The terms $f$ and $g$ are equated in $\sync{\th}$ whenever there is a proof of $f=g$ using the inference rules in $\th$ and \cref{eqn:implicationmon,eqn:implicationcont,eqn:implicationnexpseq,eqn:implicationnexppar,eqn:equality} whose leaves are empty. It follows from \cref{eqn:equality} and the axioms of consequence relations that (i) $\equiv$ is an equivalence relation congruent with respect to the structure of an SMC. Moreover, one can show by structural induction on proofs of $\vdash^{\th} f = g$ that (ii) for any model $\sem{-}:\vync{\sig} \rightarrow \catC$ of $\th$, and any terms $f,g: u \rightarrow v$, $f\equiv g$ implies that $\sem{f} = \sem{g}$.

	Next, $d^{\th}(f,g)$ is the largest $\varepsilon$ for which there is a proof of $f=_{\varepsilon} g$ using the inference rules in $\th$ and \cref{eqn:implicationmon,eqn:implicationcont,eqn:implicationnexpseq,eqn:implicationnexppar,eqn:equality} whose leaves are empty. It follows from \cref{eqn:equality} that (a) $d^{\th}$ is congruent with $\equiv$ and from \cref{eqn:implicationnexpseq,eqn:implicationnexppar} that (b) both compositions are nonexpansive with respect to $d^{\th}$.  Moreover, one can show by structural induction on proofs of $\vdash^{\th} f=_{\varepsilon} g$ that (c) for any model $\sem{-}:\vync{\sig} \rightarrow \catC$ of $\th$, and any terms $f,g: u \rightarrow v$, $d^{\th}(f,g) \vle d^{\scriptscriptstyle \catC}(\sem{f},\sem{g})$.

	By items (i), (a), and (b), and the axioms in \cref{fig:axiomssmc}, the quotient of $\vync{\sig}$ by $\equiv$ equipped with the $\vV$-relations $d^{\th}$ is a well-defined $\vRelV$-SMC, and that the quotient map $\vync{\sig} \rightarrow \sync{\th}$ is a model of $\th$. By items (ii) and (c), any model $\sem{-}: \vync{\sig} \rightarrow \catC$ can be uniquely factored through a $\vRelV$-enriched symmetric monoidal functor $\sync{\th} \rightarrow \catC$.
\end{proof}
We can now show $\vdash^{\th}$ satisfies the desired property of a closure.
\begin{theorem}\label{thm:completeness}
	For any $\vV$-implication $\eqns \Rightarrow \eqn$, $\eqns \vdash^{\th} \eqn$ if and only if all models of $\th$ satisfy $\eqns \Rightarrow \eqn$.
\end{theorem}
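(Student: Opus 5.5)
The proof splits into soundness (left to right) and completeness (right to left). For completeness, the plan is to reduce an arbitrary implication $\eqns \Rightarrow \eqn$ to the premise-free case by adding the premises as axioms, and then use the initial model of \cref{lem:syntactic}.

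\emph{Soundness.} Suppose $\eqns \vdash^{\th} \eqn$, and fix a model $\sem{-}: \vync{\sig} \rightarrow \catC$ of $\th$ that satisfies every ($\vV$-)equation in $\eqns$. Consider the set of all $\vV$-implications $\eqnsb \Rightarrow \eqnb$ that are satisfied by $\sem{-}$.
\begin{itemize}[nosep]
	\item This set is a consequence relation; the three closure conditions are immediate from the definition of satisfaction.
	\item It contains the axioms of $\th$, because $\sem{-}$ is a model.
	\item It contains \cref{eqn:implicationmon,eqn:implicationcont,eqn:implicationnexpseq,eqn:implicationnexppar,eqn:equality} and \cref{fig:axiomssmc}, as argued in \cref{exmp:validimplications} (using \cref{prop:vrelpreds}, \cref{eqn:nexpcomposition} and \cref{eqn:nexptensor}).
\end{itemize}
Since $\vdash^{\th}$ is the smallest consequence relation with these properties, it is contained in this set. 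Hence $\sem{-}$ satisfies $\eqns \Rightarrow \eqn$, and since $\sem{-}$ satisfies all of $\eqns$, it satisfies $\eqn$. Equivalently, one can argue by induction on proof trees.

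\emph{Completeness, step 1: adding the premises as axioms.} Suppose $\eqns \Rightarrow \eqn$ holds in every model of $\th$. Let $\th_{\eqns}$ be the theory $\th$ extended with the axioms $\emptyset \Rightarrow \eqnb$ for every $\eqnb \in \eqns$. I will show that
\[ {\vdash^{\th_{\eqns}}} = \left\{ \eqnsb \Rightarrow \eqnb \;\middle|\; \eqns \cup \eqnsb \vdash^{\th} \eqnb \right\}. \]
\begin{itemize}[nosep]
	\item[($\supseteq$)] If $\eqns \cup \eqnsb \vdash^{\th} \eqnb$, then $\eqns \cup \eqnsb \vdash^{\th_{\eqns}} \eqnb$ since $\vdash^{\th} \subseteq \vdash^{\th_{\eqns}}$. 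Every $\eqnb' \in \eqns$ is derivable from no premises in $\th_{\eqns}$, so cut (condition 3) removes $\eqns$ and gives $\eqnsb \vdash^{\th_{\eqns}} \eqnb$.
	\item[($\subseteq$)] The right-hand side is a consequence relation: conditions 1 and 2 are immediate, and condition 3 follows by applying cut in $\vdash^{\th}$ with $\eqns$ carried along. It contains $\th$, the structural rules, and the new axioms $\emptyset \Rightarrow \eqnb$, since $\eqnb \in \eqns$ gives $\eqns \vdash^{\th} \eqnb$ by condition 1. Minimality of $\vdash^{\th_{\eqns}}$ then gives the inclusion.
\end{itemize}
So it suffices to show $\vdash^{\th_{\eqns}} \eqn$ with no premises.

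\emph{Completeness, step 2: the syntactic model.} By \cref{lem:syntactic}, the quotient map $\vync{\sig} \rightarrow \sync{\th_{\eqns}}$ is a model of $\th_{\eqns}$, hence also a model of $\th$. It satisfies every element of $\eqns$, because each one is an axiom of $\th_{\eqns}$. By hypothesis it therefore satisfies $\eqn$. It remains to read off derivability from satisfaction in this model, in two cases.
\begin{itemize}[nosep]
	\item If $\eqn$ is $f = g$, satisfaction means $f \equiv g$, which by definition is $\vdash^{\th_{\eqns}} f = g$.
	\item If $\eqn$ is $f =_{\varepsilon} g$, satisfaction means $d^{\th_{\eqns}}(f,g) = \vJoin S \vge \varepsilon$, where $S = \{\delta \mid \vdash^{\th_{\eqns}} f =_{\delta} g\}$. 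The continuity rule \cref{eqn:implicationcont}, applied to the family $S$ together with cut, gives $\vdash^{\th_{\eqns}} f =_{\vJoin S} g$. The monotonicity rule \cref{eqn:implicationmon} then yields $\vdash^{\th_{\eqns}} f =_{\varepsilon} g$. The case $S = \emptyset$ is covered, since continuity with an empty index set derives $f =_{\bot} g$ from no premises.
\end{itemize}

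The main point of care is step 1: checking that the set described there really is a consequence relation, so that adding $\eqns$ as axioms corresponds exactly to relative derivability. The rest is direct bookkeeping using \cref{lem:syntactic}.
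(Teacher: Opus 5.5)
Your proof is correct and follows the paper's route: soundness by showing every model satisfies everything in the closure (equivalent to the paper's induction on derivations), and completeness by adding $\Gamma$ as premise-free axioms and evaluating in the syntactic model of \cref{lem:syntactic}. Your Step 1, which identifies $\vdash^{\th_{\Gamma}}$ with derivability relative to $\Gamma$, and your use of the continuity and monotonicity rules to recover $f=_{\varepsilon} g$ from $d^{\th_{\Gamma}}(f,g) \vge \varepsilon$, spell out two steps the paper only asserts.
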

\begin{proof}
	The forward implication is shown by structural induction on the proof of $\Gamma \vdash^{\th} \eqn$. It resembles the steps in the proof of \cref{lem:syntactic} showing that models factorise through $\sync{\th}$, but with a non-empty context.

	For the converse direction, assume the $\vV$-implication $\eqns \Rightarrow \eqn$ is satisfied by all models of $\th$. We wish to show $\eqns \vdash^{\th} \eqn$. Let $\th'$ be the theory consisting of the signature $\sig$ and the axioms of $\th$ along with the set of implications $\{\emptyset \Rightarrow \eqnb \mid \eqnb \in \eqns\}$. By definition, $\th'$ is a theory whose models are exactly the models of $\th$ that also satisfy every ($\vV$)-equation in $\eqns$.

	By \cref{lem:syntactic}, the syntactic category $\sync{\th'}$ is a model of $\th'$, and it is in particular a model of $\th$. As $\eqns \Rightarrow \eqn$ is satisfied by all models of $\th$ by hypothesis, it is satisfied by $\sync{\th'}$. Furthermore, by the construction of $\th'$, every ($\vV$)-equation in $\eqns$ is satisfied in $\sync{\th'}$. Hence, the conclusion $\eqn$ must also be satisfied in $\sync{\th'}$, which means there is a derivation of $\vdash^{\th'} \eqn$. This is a proof tree using the axioms of $\th$ or $\eqns$, which we can see as a derivation (in the original theory) of $\eqns \vdash^{\th} \eqn$ as desired.
\end{proof}

\section{Axiomatisation of KL Divergence for Stochastic Matrices}\label{sec:axkl}
The two monoidal theories we recalled in \cref{fig:causcircaxiom,fig:fritzaxiom} allow us to reason about stochastic matrices using string diagrams. The goal of this section is to extend this reasoning with quantitative implications in order to reason diagrammatically about the KL divergence between stochastic matrices.

In \cref{sec:enrichfstoch}, we show that KL divergence is compatible with the categorical structures of stochastic matrices, yielding two $\zeroinfQ$-SMCs: $\BSklprod$ and $\FSklsum$. In \cref{sec:axiombstoch}, we extend the theory $\th_{\!\Circ}$ with $\zeroinfQ$-implications to obtain an axiomatisation of $\BSklprod$. In \cref{sec:axiomfstoch}, we similarly extend $\th_{\!\ConvAlg}$ to obtain an axiomatisation of $\FSklsum$.

\subsection{\texorpdfstring{$\zeroinfQ$}{[0,infty]\_+}-SMCs of Stochastic Matrices}\label{sec:enrichfstoch}
KL divergence~(\cref{defn:kl}) was defined on probability distributions, but we can extend it to stochastic matrices via a column-wise maximum. Indeed, seeing each column of an $m\times n$ stochastic matrix as a distribution in $\Dset \fset{m}$ sending $j$ to the value of the $j$th entry of that column, we define $\klmax$ on $\FStoch(n,m)$ by
\begin{equation}\label{eqn:klmax}
	\klmax(A,B) \coloneq \max_{i \in \fset{n}}\kl(A_i,B_i),
\end{equation}
where $A_i$ (resp.\@ $B_i$) is the $i$th column of $A$ (resp.\@ $B$), and $\kl(A_i,B_i)$ is the KL divergence between the corresponding distributions in $\Dset \fset{m}$. When $m$ or $n$ are zero, we set $\klmax([],[]) \coloneq 0$. Perrone showed in~\cite[Proposition~2.10]{Perrone2024} that this definition satisfies both requirements~(\cref{eqn:nexpcomposition,eqn:nexptensor}) to obtain a $\zeroinfQ$-SMC $\FSklprod$, which is $\FSprod$ equipped with $\klmax$ on each of its homsets.

It readily follows that equipping $\BStoch$ with $\klmax$ on its homsets yields another $\zeroinfQ$-SMC that we write $\BSklprod$. By viewing $2^m\times 2^n$ stochastic matrices as functions taking bit strings in $\B^n$ and returning distributions in $\Dset(\B^m)$, we can rewrite \cref{eqn:klmax} as
\begin{equation}\label{eqn:klmaxalt}
	\klmax(A,B) = \max_{u \in \B^n}\kl(A(u),B(u)).
\end{equation}
Furthermore, we can state the chain rule in \cref{lem:klprodchainrule} when $Y = \B^n$ with string diagrams since distributions in $\Dset(\B^n)$ correspond, via the isomorphism $\sem{-}$ defined in \cref{eqn:sembsprod}, to morphisms in $\sync{\Circ}(0,n)$. For any morphisms $f_{\b1},f_{\b0},g_{\b1},g_{\b0} \in \sync{\Circ}(0,n)$, and any $p,q \in [0,1]$, we have (omitting applications of $\sem{-}$)
\begin{equation}\label{eqn:chainproddiags}
	\klmax(\scalebox{0.7}{\input{distdecompose.tikz}},\scalebox{0.7}{\input{distsbdecompose.tikz}}) = \kl(\raisebox{0.3ex}{$\distvect{p}{1-p}$},\raisebox{0.3ex}{$\distvect{q}{1-q}$})+p\klmax(\scalebox{0.7}{$\statediag{f_{\b1}}{}$},\scalebox{0.7}{$\statediag{g_{\b1}}{}$}) + (1-p)\klmax(\scalebox{0.7}{$\statediag{f_{\b0}}{}$},\scalebox{0.7}{$\statediag{g_{\b0}}{}$})
\end{equation}
This is exactly \cref{eqn:klprodrecurse} after noting that when $f_{\b{x}}$ corresponds to $\overline{\dist}(\b{x},-)$, $\scalebox{0.7}{\input{distdecompose.tikz}}$ corresponds to $\dist$. Indeed, with the first step being an instance of \cref{eqn:semconvex} (with only one column), we have
\[\sem{\scalebox{0.7}{\input{distdecompose.tikz}}} = p\sem{\flip{\b1} \otimes f_{\b1}} + (1-p)\sem{\flip{\b0}\otimes f_{\b0}} = p(\ket{\b1}\otimes \overline{\dist}(\b1,-)) + (1-p)(\ket{\b0}\otimes \overline{\dist}(\b0,-)) = \dist\]
We have a similar decomposition for $g$ and $\distb$.

We record for later another property of $\BSklprod$: for any morphisms $f_{\b1},g_{\b1} : n \rightarrow m$ and $f_{\b0},g_{\b0}: n' \rightarrow m$ in $\BSklprod$, we have
\begin{equation}\label{eqn:ifmaxdiags}
	\klmax(\scalebox{0.7}{$\input{iff.tikz}$},\scalebox{0.7}{$\input{ifg.tikz}$}) = \max \left\{ \klmax(\scalebox{0.7}{$\begin{tikzpicture}
	\begin{pgfonlayer}{nodelayer}
		\node [style=basic box] (24) at (0, 0) {$f_{\b1}$};
		\node [style=none] (25) at (1, 0) {};
		\node [style=none] (27) at (-1, 0) {};
	\end{pgfonlayer}
	\begin{pgfonlayer}{edgelayer}
		\draw (24) to (25.center);
		\draw (27.center) to (24);
	\end{pgfonlayer}
\end{tikzpicture}
$},\scalebox{0.7}{$\begin{tikzpicture}
	\begin{pgfonlayer}{nodelayer}
		\node [style=basic box] (24) at (0, 0) {$g_{\b1}$};
		\node [style=none] (25) at (1, 0) {};
		\node [style=none] (27) at (-1, 0) {};
	\end{pgfonlayer}
	\begin{pgfonlayer}{edgelayer}
		\draw (24) to (25.center);
		\draw (27.center) to (24);
	\end{pgfonlayer}
\end{tikzpicture}
$}), \klmax(\scalebox{0.7}{$\begin{tikzpicture}
	\begin{pgfonlayer}{nodelayer}
		\node [style=basic box] (24) at (0, 0) {$f_{\b0}$};
		\node [style=none] (25) at (1, 0) {};
		\node [style=none] (27) at (-1, 0) {};
	\end{pgfonlayer}
	\begin{pgfonlayer}{edgelayer}
		\draw (24) to (25.center);
		\draw (27.center) to (24);
	\end{pgfonlayer}
\end{tikzpicture}
$},\scalebox{0.7}{$\begin{tikzpicture}
	\begin{pgfonlayer}{nodelayer}
		\node [style=basic box] (24) at (0, 0) {$g_{\b0}$};
		\node [style=none] (25) at (1, 0) {};
		\node [style=none] (27) at (-1, 0) {};
	\end{pgfonlayer}
	\begin{pgfonlayer}{edgelayer}
		\draw (24) to (25.center);
		\draw (27.center) to (24);
	\end{pgfonlayer}
\end{tikzpicture}
$}) \right\}.
\end{equation}
\begin{proof*}{Proof of \cref{eqn:ifmaxdiags}.}
	Let $A$ be the matrix corresponding to \raisebox{0.5ex}{\scalebox{0.7}{$\input{iff.tikz}$}} and $B$ correspond to \raisebox{0.5ex}{\scalebox{0.7}{$\input{ifg.tikz}$}}. Any bit string in $\B^{1+n+n'}$ can be written as $\b{x}u_{\b1}u_{\b0}$ with $\b{x}\in \B$, $u_{\b1} \in \B^n$, and $u_{\b0} \in \B^{n'}$ so that $A(\b{x}u_{\b1}u_{\b0}) =\sem{f_\b{x}}(u_{\b{x}})$ and $B(\b{x}u_{\b1}u_{\b0}) =\sem{g_\b{x}}(u_{\b{x}})$.
	Thus,
	\begin{align*}
		\klmax(A,B) & = \max_{\b{x}, u_{\b1}, u_{\b0}} \kl(A(\b{x}u_{\b1}u_{\b0}),B(\b{x}u_{\b1}u_{\b0}))                                                                                                                           &  & \text{by \cref{eqn:klmaxalt}} \\
		            & = \max \{ \max_{u_{\b1}} \kl(\sem{f_{\b1}}(u_{\b1}),\sem{g_{\b1}}(u_{\b1})), \max_{u_{\b0}} \kl(\sem{f_{\b0}}(u_{\b0}),\sem{g_{\b0}}(u_{\b0})) \}                                                                                                \\
		            & =\max \{ \klmax(\scalebox{0.7}{$$},\scalebox{0.7}{$$}), \klmax(\scalebox{0.7}{$$},\scalebox{0.7}{$$}) \}. &  & \text{by \cref{eqn:klmaxalt}}
	\end{align*}

	\vspace{-1\baselineskip}
	\vspace{-1\baselineskip}
\end{proof*}

We can also consider the direct sum instead of the Kronecker product, and $\klmax$ still satisfies \cref{eqn:nexptensor}:
\begin{align*}
	\klmax(A \oplus A', B\oplus B') & = \max_{i} \kl((A\oplus A')_i,(B\oplus B')_i)                \\
	                                & = \max \{ \max_{i} \kl(A_i,B_i) , \max_{j} \kl(A'_j,B'_j) \} \\
	                                & = \max \{ \klmax(A,B), \klmax(A',B') \}                      \\
	                                & \leq \klmax(A,B) + \klmax(A',B'),
\end{align*}
where the second step holds because KL divergence is not affected by elements that are not in the support of the distributions involved. We obtain a $\zeroinfQ$-SMC $\FSklsum$, which is $\FSsum$ equipped with $\klmax$ on its homsets.
We record for later the stronger intermediate step in the derivation above: for any morphisms $f,g: n \rightarrow m$ and $f',g': n' \rightarrow m'$ in $\FSklsum$, we have
\begin{equation}\label{eqn:klnexppar}
	\klmax(\scalebox{0.7}{$\input{parletters.tikz}$},\scalebox{0.7}{$\input{parletterstwo.tikz}$}) \leq \max \left\{ \klmax(\scalebox{0.7}{$\begin{tikzpicture}
	\begin{pgfonlayer}{nodelayer}
		\node [style=basic box] (24) at (0, 0) {$f$};
		\node [style=none] (25) at (1, 0) {};
		\node [style=none] (27) at (-1, 0) {};
	\end{pgfonlayer}
	\begin{pgfonlayer}{edgelayer}
		\draw (24) to (25.center);
		\draw (27.center) to (24);
	\end{pgfonlayer}
\end{tikzpicture}
$},\scalebox{0.7}{$\begin{tikzpicture}
	\begin{pgfonlayer}{nodelayer}
		\node [style=basic box] (24) at (0, 0) {$g$};
		\node [style=none] (25) at (1, 0) {};
		\node [style=none] (27) at (-1, 0) {};
	\end{pgfonlayer}
	\begin{pgfonlayer}{edgelayer}
		\draw (24) to (25.center);
		\draw (27.center) to (24);
	\end{pgfonlayer}
\end{tikzpicture}
$}), \klmax(\scalebox{0.7}{$\begin{tikzpicture}
	\begin{pgfonlayer}{nodelayer}
		\node [style=basic box] (24) at (0, 0) {$f'$};
		\node [style=none] (25) at (1, 0) {};
		\node [style=none] (27) at (-1, 0) {};
	\end{pgfonlayer}
	\begin{pgfonlayer}{edgelayer}
		\draw (24) to (25.center);
		\draw (27.center) to (24);
	\end{pgfonlayer}
\end{tikzpicture}
$},\scalebox{0.7}{$\begin{tikzpicture}
	\begin{pgfonlayer}{nodelayer}
		\node [style=basic box] (24) at (0, 0) {$g'$};
		\node [style=none] (25) at (1, 0) {};
		\node [style=none] (27) at (-1, 0) {};
	\end{pgfonlayer}
	\begin{pgfonlayer}{edgelayer}
		\draw (24) to (25.center);
		\draw (27.center) to (24);
	\end{pgfonlayer}
\end{tikzpicture}
$}) \right\}.
\end{equation}

Analogously to \cref{eqn:chainproddiags}, we can state the chain rule of \cref{lem:klsumchainrule} with string diagrams, except now distributions in $\Dset \fset{m}$ correspond to terms in $\FSsum(1,m)$. For any morphisms $f, g \in \FStoch(1,m)$, $f',g' \in \FStoch(1,m')$, and any $p,q \in [0,1]$, we have
\begin{equation}\label{eqn:chainsumdiags}
	\klmax(\scalebox{0.7}{$\input{decomposedistm.tikz}$},\scalebox{0.7}{$\input{decomposedistbm.tikz}$}) = \kl(\raisebox{0.3ex}{$\distvect{p}{1-p}$},\raisebox{0.3ex}{$\distvect{q}{1-q}$})+p\klmax(\scalebox{0.7}{$\input{distf.tikz}$},\scalebox{0.7}{$\input{distg.tikz}$}) + (1-p)\klmax(\scalebox{0.7}{$\input{distfp.tikz}$},\scalebox{0.7}{$\input{distgp.tikz}$}).
\end{equation}

\subsection{Axiomatisation of \texorpdfstring{$\BSklprod$}{BStoch\^otimes\_kl}}\label{sec:axiombstoch}
In this section, we provide a quantitative monoidal axiomatisation of $\BSklprod$. We rely on the axiomatisation of $\BStoch$ in~\cite{Piedeleu2025b} with the monoidal theory $\th_{\!\Circ}$ recalled in \cref{fig:causcircaxiom}.

\begin{definition}\label{defn:thklprod}
	Let $\sig$ be the signature of $\th_{\!\Circ}$. The theory $\th_{\!\KLprod}$ is defined with the same signature $\sig$ and the following $\zeroinfQ$-implications as axioms.
	\begin{itemize}[nosep]
		\item For any axiom $f=g$ in $\th_{\!\Circ}$, $\th_{\!\KLprod}$ contains $\ \Rightarrow f=g$.
		\item For any $f: n \rightarrow m \in \vync{\sig}$, $\th_{\!\KLprod}$ contains $\ \Rightarrow  f=_0 f$.
		\item For any $f_{\b1},g_{\b1}: 0 \rightarrow m$, $f_{\b0},g_{\b0}: 0 \rightarrow m$, and $p,q \in [0,1]$, $\th_{\!\KLprod}$ contains $\Chainprod$.
		\item For any $f_{\b1},g_{\b1}: n \rightarrow m$ and $f_{\b0},g_{\b0}: n' \rightarrow m$, $\th_{\!\KLprod}$ contains $\Ifmax$.
		      \[\begin{bprooftree}
				      \AxiomC{$\raisebox{0.3ex}{\scalebox{0.7}{$\begin{tikzpicture}
	\begin{pgfonlayer}{nodelayer}
		\node [style=basic box] (24) at (0, 0) {$f_{\b1}$};
		\node [style=none] (25) at (1, 0) {};
	\end{pgfonlayer}
	\begin{pgfonlayer}{edgelayer}
		\draw [style=double] (24) to (25.center);
	\end{pgfonlayer}
\end{tikzpicture}
$}} =_{\varepsilon} \raisebox{0.3ex}{\scalebox{0.7}{$\begin{tikzpicture}
	\begin{pgfonlayer}{nodelayer}
		\node [style=basic box] (24) at (0, 0) {$g_{\b1}$};
		\node [style=none] (25) at (1, 0) {};
	\end{pgfonlayer}
	\begin{pgfonlayer}{edgelayer}
		\draw [style=double] (24) to (25.center);
	\end{pgfonlayer}
\end{tikzpicture}
$}}$}
				      \AxiomC{$\raisebox{0.3ex}{\scalebox{0.7}{$\begin{tikzpicture}
	\begin{pgfonlayer}{nodelayer}
		\node [style=basic box] (24) at (0, 0) {$f_{\b0}$};
		\node [style=none] (25) at (1, 0) {};
	\end{pgfonlayer}
	\begin{pgfonlayer}{edgelayer}
		\draw [style=double] (24) to (25.center);
	\end{pgfonlayer}
\end{tikzpicture}
$}} =_{\delta} \raisebox{0.3ex}{\scalebox{0.7}{$\begin{tikzpicture}
	\begin{pgfonlayer}{nodelayer}
		\node [style=basic box] (24) at (0, 0) {$g_{\b0}$};
		\node [style=none] (25) at (1, 0) {};
	\end{pgfonlayer}
	\begin{pgfonlayer}{edgelayer}
		\draw [style=double] (24) to (25.center);
	\end{pgfonlayer}
\end{tikzpicture}
$}}$}
				      \RightLabel{\textnormal{$\Chainprod$}}
				      \BinaryInfC{$\scalebox{0.7}{$\input{distdecompose.tikz}$} =_{\kl(\raisebox{0.3ex}{$\distvect{p}{1-p}$},\raisebox{0.3ex}{$\distvect{q}{1-q}$})+p\varepsilon + (1-p)\delta}\scalebox{0.7}{$\input{distsbdecompose.tikz}$}$}
			      \end{bprooftree} \quad \begin{bprooftree}
				      \AxiomC{$\raisebox{0.3ex}{\scalebox{0.7}{$\begin{tikzpicture}
	\begin{pgfonlayer}{nodelayer}
		\node [style=basic box] (24) at (0, 0) {$f_{\b1}$};
		\node [style=none] (25) at (1, 0) {};
		\node [style=none] (27) at (-1, 0) {};
	\end{pgfonlayer}
	\begin{pgfonlayer}{edgelayer}
		\draw[style=double] (24) to (25.center);
		\draw[style=double] (27.center) to (24);
	\end{pgfonlayer}
\end{tikzpicture}
$}} =_{\varepsilon} \raisebox{0.3ex}{\scalebox{0.7}{$\begin{tikzpicture}
	\begin{pgfonlayer}{nodelayer}
		\node [style=basic box] (24) at (0, 0) {$g_{\b1}$};
		\node [style=none] (25) at (1, 0) {};
		\node [style=none] (27) at (-1, 0) {};
	\end{pgfonlayer}
	\begin{pgfonlayer}{edgelayer}
		\draw [style=double] (24) to (25.center);
		\draw [style=double] (27.center) to (24);
	\end{pgfonlayer}
\end{tikzpicture}
$}}$}
				      \AxiomC{$\raisebox{0.3ex}{\scalebox{0.7}{$\begin{tikzpicture}
	\begin{pgfonlayer}{nodelayer}
		\node [style=basic box] (24) at (0, 0) {$f_{\b0}$};
		\node [style=none] (25) at (1, 0) {};
		\node [style=none] (27) at (-1, 0) {};
	\end{pgfonlayer}
	\begin{pgfonlayer}{edgelayer}
		\draw [style=double] (24) to (25.center);
		\draw [style=double] (27.center) to (24);
	\end{pgfonlayer}
\end{tikzpicture}
$}} =_{\delta} \raisebox{0.3ex}{\scalebox{0.7}{$\begin{tikzpicture}
	\begin{pgfonlayer}{nodelayer}
		\node [style=basic box] (24) at (0, 0) {$g_{\b0}$};
		\node [style=none] (25) at (1, 0) {};
		\node [style=none] (27) at (-1, 0) {};
	\end{pgfonlayer}
	\begin{pgfonlayer}{edgelayer}
		\draw [style=double] (24) to (25.center);
		\draw [style=double] (27.center) to (24);
	\end{pgfonlayer}
\end{tikzpicture}
$}}$}
				      \RightLabel{{\textnormal{$\Ifmax$}}}
				      \BinaryInfC{$\scalebox{0.7}{$\input{iff.tikz}$} =_{\max\left\{ \varepsilon,\delta \right\}} \scalebox{0.7}{$\input{ifg.tikz}$}$}
			      \end{bprooftree}\]
	\end{itemize}
	We write $\sync{\KLprod}$ for the syntactic $\zeroinfQ$-SMC generated by $\th_{\!\KLprod}$ (see \cref{defn:synvcat}).
\end{definition}
Our axiomatisation result amounts to showing that the functor $\sem{-}$ defined in \cref{eqn:sembsprod} yields an isomorphism of enriched categories between $\sync{\KLprod}$ and $\BSklprod$.  First, we show $\sem{-}$ satisfies the axioms in $\th_{\!\KLprod}$. 

\begin{lemma}\label{lem:semprodsat}
	The $\zeroinfQ$-implications in $\th_{\!\KLprod}$ are satisfied by $\sem{-}: \vync{\sig} \rightarrow \BSklprod$.
\end{lemma}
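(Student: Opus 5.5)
We check the four families of axioms of $\th_{\!\KLprod}$ one at a time against the interpretation $\sem{-}$ of \cref{eqn:sembsprod}, now viewed as landing in $\BSklprod$. Since $\BSklprod$ is a $\zeroinfQ$-SMC (inheriting \cref{eqn:nexpcomposition,eqn:nexptensor} from Perrone's result for $\FSklprod$), satisfaction of a $\zeroinfQ$-equation $f =_\varepsilon g$ just means $\klmax(\sem{f},\sem{g}) \leq \varepsilon$.

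\emph{Premise-free axioms.} The implications $\Rightarrow f = g$ for axioms $f=g$ of $\th_{\!\Circ}$ hold because $\sem{-}$ is a model of $\th_{\!\Circ}$ by \cite[Theorem~3.13]{Piedeleu2025b}. For $\Rightarrow f=_0 f$, note that every column of $\sem{f}$ satisfies $\kl(A_i,A_i)=0$ by \cref{defn:kl}, hence $\klmax(\sem{f},\sem{f})=0$.

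\emph{$\Ifmax$.} Assume $\klmax(\sem{f_{\b1}},\sem{g_{\b1}})\le\varepsilon$ and $\klmax(\sem{f_{\b0}},\sem{g_{\b0}})\le\delta$. By \cref{eqn:ifmaxdiags}, the divergence between the two if-diagrams equals the maximum of these two quantities, hence is at most $\max\{\varepsilon,\delta\}$.

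\emph{$\Chainprod$.} Assume $\klmax(\sem{f_{\b x}},\sem{g_{\b x}})$ is bounded by $\varepsilon$ for $\b x=\b1$ and by $\delta$ for $\b x=\b0$. By \cref{eqn:chainproddiags}, the divergence between the two decomposed states equals
\[\kl(\raisebox{0.3ex}{$\distvect{p}{1-p}$},\raisebox{0.3ex}{$\distvect{q}{1-q}$})+p\klmax(\sem{f_{\b1}},\sem{g_{\b1}})+(1-p)\klmax(\sem{f_{\b0}},\sem{g_{\b0}}).\]
This expression is monotone in the last two terms since $p,1-p\ge 0$, which is exactly \cref{eqn:monotoneklprod}, so it is bounded by $\kl(\cdot,\cdot)+p\varepsilon+(1-p)\delta$.

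The main point of care is the degenerate cases of \cref{eqn:chainproddiags}, where $p$ or $q$ lies in $\{0,1\}$ or some quantities are infinite. The approach is to verify that \cref{lem:klprodchainrule}, with the footnote allowing arbitrary conditionals, still identifies $\sem{f_{\b x}}$ and $\sem{g_{\b x}}$ with conditionals of the joint distributions $\dist,\distb$. When $p=0$ the term $p\varepsilon$ vanishes even if $\varepsilon=\infty$, using $0\cdot\infty=0$. When $q\in\{0,1\}$ and $p\neq q$, both sides are $\infty$, so the bound holds trivially. The conventions of \cref{defn:kl} make these computations consistent.
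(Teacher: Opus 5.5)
Your proposal is correct and follows the paper's proof. Both argue the same way: the premise-free axioms hold by \cite[Theorem~3.13]{Piedeleu2025b} and $\kl(\dist,\dist)=0$, and $\Chainprod$ and $\Ifmax$ hold by the exact identities \cref{eqn:chainproddiags,eqn:ifmaxdiags} together with monotonicity \cref{eqn:monotoneklprod}. Your treatment of the degenerate cases $p,q\in\{0,1\}$ is also sound, and it makes explicit what the paper leaves to the conventions of \cref{defn:kl} and the footnote to \cref{lem:klprodchainrule}.
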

\begin{proof}
	As the underlying SMC of $\BSklprod$ is $\BSprod$, the functor $\sem{-}$ still factors through an isomorphism $\sync{\Circ} \cong \BSprod$ by \cite[Theorem~3.13]{Piedeleu2025b}. It follows that $\sem{f} = \sem{g}$ whenever $f=g$ is in $\th_{\!\Circ}$, hence the implications in the first item of \cref{defn:thklprod} are satisfied. The KL divergence satisfies $\kl(\dist,\dist) = 0$ for any distribution $\dist$, hence $\ \Rightarrow f=_0 f$ is satisfied for any term $f$. The implications represented by the inference rules $\Chainprod$ and $\Ifmax$ are satisfied by \cref{eqn:chainproddiags} and \cref{eqn:ifmaxdiags} respectively. Indeed, while \cref{eqn:chainproddiags} and \cref{eqn:ifmaxdiags} are strict equalities, the inference rules with equalities up to $\varepsilon$ are still valid because the distances in the conclusions are monotone functions of the distances in the premises (recall \cref{eqn:monotoneklprod}).
\end{proof}
By \cref{lem:syntactic}, we obtain an enriched symmetric monoidal functor $\sem{-}: \sync{\KLprod} \rightarrow \BSklprod$. Note that this is simply the isomorphism in the axiomatisation result of~\cite{Piedeleu2025b}, but we confirmed it is enriched. In particular, the underlying SMCs of $\sync{\KLprod}$ and $\BSklprod$ are $\sync{\Circ}$ and $\BSprod$ respectively. For the latter, it is by definition, and for the former, it suffices to note that the only axioms that derive equations between terms in $\th_{\!\KLprod}$ come from $\th_{\!\Circ}$.

It remains to show that $\sem{-}$ is locally an isometry. We first tackle the case of terms $0 \rightarrow m$.
\begin{lemma}\label{lem:prod:isometryondist}
	For any $f,g: 0 \rightarrow m \in \sync{\KLprod}$, if $\varepsilon = \klmax(\sem{f},\sem{g})$, then $\ \Rightarrow  f=_{\varepsilon} g$ is in the closure of $\th_{\!\KLprod}$.
\end{lemma}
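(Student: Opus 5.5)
We proceed by induction on $m$. A morphism $f: 0 \rightarrow m$ in $\sync{\KLprod}$ is a distribution on $\B^m$, and by the isomorphism $\sync{\Circ} \cong \BSprod$ together with the remark that equations in $\th_{\!\KLprod}$ come only from $\th_{\!\Circ}$, every such $f$ can be rewritten (as an equation derivable from the first item of \cref{defn:thklprod}) into the normal form appearing in $\Chainprod$: a flip $\flip{p}$ controlling a convex combination, via the $+_p$ gate, of $\flip{\b1}\otimes f_{\b1}$ and $\flip{\b0}\otimes f_{\b0}$, where $f_{\b1},f_{\b0}: 0 \rightarrow m-1$. Here $p$ is the marginal of the first bit and $f_{\b x}$ are diagrams whose semantics are the conditionals $\overline{\dist}(\b x,-)$; this is justified semantically by the computation after \cref{eqn:chainproddiags} and then transported back through the isomorphism. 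Similarly $g$ decomposes with $q$, $g_{\b1}$, $g_{\b0}$. When $p\in\{0,1\}$ the undetermined conditional is chosen arbitrarily; we pick it equal for $f$ and $g$ (e.g.\ both equal to the corresponding $g$-conditional) so that it does not contribute.

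Base case $m=0$: $\sync{\KLprod}(0,0)$ has the single morphism $\id_{\elist}$ (up to $\equiv$), $\varepsilon = 0$, and $\Rightarrow f=_0 f$ is an axiom; rewriting with the derived equality $f = g$ and \cref{eqn:equality} gives $f=_0 g$. Inductive step: by the induction hypothesis applied to $f_{\b x}, g_{\b x}: 0\rightarrow m-1$ we derive $f_{\b1} =_{\varepsilon_1} g_{\b1}$ and $f_{\b0} =_{\varepsilon_0} g_{\b0}$ with $\varepsilon_{\b x} = \klmax(\sem{f_{\b x}},\sem{g_{\b x}})$. Applying $\Chainprod$ yields the normal forms equal up to $\kl(\distvect{p}{1-p},\distvect{q}{1-q}) + p\varepsilon_1 + (1-p)\varepsilon_0$, which by \cref{eqn:chainproddiags} is exactly $\klmax(\sem f,\sem g)=\varepsilon$. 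Finally the equality rules in \cref{eqn:equality} transfer this to $f =_\varepsilon g$.

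The points needing care are: (i) the case $\varepsilon=\infty$ is unproblematic since $\infty$ is the bottom of $\zeroinfQ$ and the resulting judgement is trivial (it also follows from \cref{eqn:implicationcont} with empty index set); (ii) when $p=0$ the term $p\varepsilon_1$ must be $0$ even if $\varepsilon_1=\infty$, which is why I choose the arbitrary conditionals to coincide, making $\varepsilon_1=0$ and avoiding the $0\cdot\infty$ convention mismatch. The main obstacle I expect is step one, the syntactic decomposition: one must be sure the normal-form diagram with arbitrary $f_{\b x}$ is provably equal to $f$, which I would obtain purely from completeness of $\th_{\!\Circ}$ (semantic equality implies derivable equality) rather than by explicit diagrammatic rewriting.
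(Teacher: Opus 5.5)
Your proposal is correct and follows essentially the same route as the paper. Both arguments go by induction on $m$, use completeness of $\th_{\!\Circ}$ to rewrite $f$ and $g$ into the $\Chainprod$ normal form, apply the induction hypothesis to the conditionals, and then invoke $\Chainprod$ and the chain rule \cref{eqn:klprodrecurse} to identify the bound with $\klmax(\sem{f},\sem{g})$. Your choice of matching conditionals when $p\in\{0,1\}$, and your remark on $\varepsilon=\infty$ via \cref{eqn:implicationcont}, are harmless refinements of points the paper leaves implicit.
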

\begin{proof}
	We proceed by induction on $m$. For $m = 0$, the result is handled by the axiom $f=_0 f$ because there is a single morphism in $\sync{\KLprod}(0,0)$ whose interpretation is the empty matrix $[]$.

	Let $f, g: 0 \rightarrow m+1$ be two morphisms in $\sync{\Circ}$. By reasoning solely within $\th_{\!\Circ}$, we can decompose $f$ and $g$ as follows
	\[\scalebox{0.7}{$\statediag{f}{}$} = \scalebox{0.7}{$\input{distdecompose.tikz}$} \qquad \qquad \scalebox{0.7}{$\statediag{g}{}$} = \scalebox{0.7}{$\input{distsbdecompose.tikz}$},\]
	where $p$ is the total weight that $\sem{f}$ assigns to all bit strings starting with $\b1$, $q$ is that for $\sem{g}$, and $f_{\b{x}}$ and $g_{\b{x}}$ correspond to the distributions conditioned on the first bit being $\b{x}$ (\textit{c.f.}\@ $\overline{\dist}$ and $\overline{\distb}$ in \cref{lem:klprodchainrule}).
	We rely here on the completeness of $\th_{\Circ}$ with respect to the semantics in $\BSprod$~\cite[Theorem~3.13]{Piedeleu2025b}. Both equations are true in the semantics, so there must be a syntactic proof using the equations in $\th{\Circ}$.

	Our induction hypothesis says that there is a derivation, for each $\b{x} \in \B$, of $f_{\b{x}} =_{\varepsilon_{\b{x}}} g_{\b{x}}$, where $\varepsilon_{\b{x}} = \klmax(\sem{f_{\b{x}}},\sem{g_{\b{x}}})$. By applying the $\Chainprod$ rule, we derive $f=_{\varepsilon} g$ with $\varepsilon = \kl(\raisebox{0.3ex}{$\distvect{p}{1-p}$},\raisebox{0.3ex}{$\distvect{q}{1-q}$})+ p\varepsilon_{\b1} + (1-p)\varepsilon_{\b0}$. By \cref{eqn:klprodrecurse}, we find $\varepsilon = \kl(\dist,\distb)$ where $\dist = p\cdot \dirac{\b1} \otimes \sem{f_{\b1}} + (1-p)\cdot \dirac{\b0}\otimes \sem{f_{\b0}} = \sem{f}$, and similarly $\distb = \sem{g}$. Thus, we have derived $f=_{\klmax(\sem{f},\sem{g})} g$ as desired.
\end{proof}
\begin{remark}
	The proof above implicitly relies on a normal form for diagrams $0 \rightarrow m$. If we apply the same decomposition above to \raisebox{0.3ex}{\scalebox{0.7}{$\statediag{f_{\b1}}{}$}} and \raisebox{0.3ex}{\scalebox{0.7}{$\statediag{f_{\b0}}{}$}}, and continue iterating, we end up writing $f$ as a balanced binary tree of convex combinations (recall \cref{eqn:semconvex}) of the generators $\raisebox{0.3ex}{\scalebox{0.7}{\flip{p}}}$. This tree always has the same branching structure for a fixed $m$, only the coefficients change for different diagrams. For example, diagrams $0 \rightarrow 2$ rewrite to the normal form on the left and diagrams $0 \rightarrow 3$ rewrite to the normal form on the right (all the $p$s vary in $[0,1]$): 
	\[\scalebox{0.9}{\input{nfontwo.tikz}} \qquad \scalebox{0.9}{\input{nfonthree.tikz}}\]
	This is different from the normal form for diagrams $0 \rightarrow m$ in \cite[Proof of Theorem 23]{diGiorgio2025} as their sorted weighted trees are not balanced. We note that our normal form is not unique because some convex combinations might have a coefficient $0$ (resp.\@ $1$), making the diagram on their top (resp.\@ bottom) branch irrelevant, i.e.\@ changing the coefficients on that branch does not affect the semantics of the whole diagram.
\end{remark}
The rest of the proof follows the axiomatisation of total variation distance in~\cite{diGiorgio2025}.
\begin{theorem}\label{thm:axbsprodkl}
	The function $f \mapsto \sem{f}$ is a bijective isometry $\sync{\KLprod}(n,m) \rightarrow \BSklprod(n,m)$.
\end{theorem}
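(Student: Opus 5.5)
Bijectivity is already settled: the underlying SMC of $\sync{\KLprod}$ is $\sync{\Circ}$, and $\sem{-}$ factors through the isomorphism $\sync{\Circ}\cong\BSprod$ of \cite[Theorem~3.13]{Piedeleu2025b}. Enrichment (\cref{lem:semprodsat} with \cref{lem:syntactic}) gives $d^{\KLprod}(f,g)\geq \klmax(\sem{f},\sem{g})$ in the usual order on reals. So it remains to derive $\Rightarrow f=_{\varepsilon} g$ in the closure of $\th_{\!\KLprod}$ for $\varepsilon=\klmax(\sem{f},\sem{g})$. I will do this by induction on the number $n$ of inputs, reducing to the case $n=0$ handled by \cref{lem:prod:isometryondist}.

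\textbf{Base case.} For $n=0$ the claim is exactly \cref{lem:prod:isometryondist}.

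\textbf{Inductive step.} For $f,g:n+1\rightarrow m$, I use completeness of $\th_{\!\Circ}$ to rewrite each diagram as an if-gate on its first input wire:
\[
f = \mathtt{if}(f_{\b1},f_{\b0}), \qquad f_{\b{x}} \coloneq (\flip{\b{x}}\otimes \id_n)\scomp f : n\rightarrow m,
\]
and similarly for $g$. This equation holds semantically, since the if gate applied to the first bit selects the column block $\sem{f}(\b{x}u)$, so it is derivable in $\th_{\!\Circ}$.

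\textbf{One subtlety.} $\Ifmax$ as stated has $f_{\b1}$ acting on $n$ wires and $f_{\b0}$ on $n'$ wires, both feeding a shared if gate. The intended decomposition should be the bunched version, in which the remaining $n$ inputs are copied to both branches. I would rather pick $n'=n$ with inputs that are separate but semantically duplicated, and precompose the if-diagram with copies; equivalently, precompose with the copy-bunch on $n$ wires. Precomposition with a fixed diagram is nonexpansive by \cref{eqn:implicationnexpseq} together with $\Rightarrow h=_0h$. Hence the bound derived for $\mathtt{if}(f_{\b1},f_{\b0})$ versus $\mathtt{if}(g_{\b1},g_{\b0})$, with separate inputs, transfers to $f$ versus $g$.

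\textbf{Combining.} The induction hypothesis gives $f_{\b{x}}=_{\varepsilon_{\b{x}}} g_{\b{x}}$ with $\varepsilon_{\b{x}}=\klmax(\sem{f_{\b{x}}},\sem{g_{\b{x}}})$. Applying $\Ifmax$ yields the bound $\max\{\varepsilon_{\b1},\varepsilon_{\b0}\}$. By \cref{eqn:ifmaxdiags}, this equals the $\klmax$ of the separated if-diagrams. It also equals $\max_{\b{x}u}\kl(\sem{f}(\b{x}u),\sem{g}(\b{x}u)) = \klmax(\sem{f},\sem{g})$ directly via \cref{eqn:klmaxalt}. Thus the copy-precomposition loses nothing: $\varepsilon_{\b{x}}$ is itself a maximum over $u$, so the max over $\b{x}$ reconstructs the full column-wise maximum. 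Finally, the equality rules \cref{eqn:equality} transport the quantitative equation along the $\th_{\!\Circ}$-derivable equalities $f=\ldots$ and $g=\ldots$.

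\textbf{Main obstacle.} The delicate step is the diagrammatic bookkeeping. I need to derive in $\th_{\!\Circ}$ the equation expressing $f$ as the copy-bunch followed by $\mathtt{if}(f_{\b1},f_{\b0})$, and to match the wiring exactly to the form required by $\Ifmax$. I would discharge this purely semantically via completeness of $\th_{\!\Circ}$, checking that both sides denote the same matrix column by column.
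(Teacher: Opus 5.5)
Your proposal is correct and follows essentially the same route as the paper. Both arguments use induction on $n$ with base case \cref{lem:prod:isometryondist}, decompose $f$ and $g$ as an if gate on the first input (the paper cites \cite[Equation~17]{Bonchi2026}, you appeal to completeness of $\th_{\!\Circ}$), apply $\Ifmax$, and conclude with $\max_{\b{x}}\klmax(\sem{f_{\b{x}}},\sem{g_{\b{x}}})=\klmax(\sem{f},\sem{g})$. Your explicit handling of the copy-precomposition, via \cref{eqn:implicationnexpseq} together with $\Rightarrow h=_0 h$, spells out a step the paper leaves implicit when it says $\Ifmax$ ``derives $f=_{\varepsilon} g$''.
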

\begin{proof}
	We already know it is a nonexpansive function because $\sem{-}$ is an enriched functor. It is a bijection because $\sem{-}$ is an isomorphism of the underlying SMCs, $\sync{\Circ}$ and $\BSprod$. Thus, we only need to show that $\klmax(\sem{f},\sem{g}) \geq d^{\KLprod}(f,g)$ for any two morphisms $f,g: n \rightarrow m$.

	We proceed by induction on $n$. The base case $n=0$ is handled by \cref{lem:prod:isometryondist}.
	Given, $f,g: n+1 \rightarrow m$, we have the following equations by reasoning in $\th_{\!\Circ}$ (see e.g.~\cite[Equation~17]{Bonchi2026}). 
	\[\scalebox{0.7}{$\input{fnp1.tikz}$} = \scalebox{0.7}{$\input{fshandecompose.tikz}$} \qquad \scalebox{0.7}{$\input{gnp1.tikz}$} = \scalebox{0.7}{$\input{gshandecompose.tikz}$}\]
	Let $f_{\b{x}} = \raisebox{0.3ex}{\scalebox{0.7}{$\input{fbx.tikz}$}}$ and $g_{\b{x}} = \raisebox{0.3ex}{\scalebox{0.7}{$\input{gbx.tikz}$}}$ for $\b{x} \in \B$. Our induction hypothesis says that $f_{\b{x}} =_{\klmax(\sem{f_{\b{x}}},\sem{g_{\b{x}}})} g_{\b{x}}$ is derivable for each $\b{x}$, then the $\Ifmax$ rule derives $f =_{\varepsilon} g$, where $\varepsilon = \max\left\{ \klmax(\sem{f_{\b{1}}},\sem{g_{\b{1}}}),\klmax(\sem{f_{\b{0}}},\sem{g_{\b{0}}}) \right\}$, and finally,
	\begin{align*}
		\klmax(\sem{f},\sem{g}) \stackrel{\cref{eqn:klmaxalt}}{=} \max_{u \in \B^{n+1}} \kl(\sem{f}(u),\sem{g}(u))
		= \max_{\b{x} \in \B} \max_{v \in \B^n} \kl(\sem{f}(\b{x}v),\sem{g}(\b{x}v))
		= \max_{\b{x} \in \B} \klmax(\sem{f_{\b{x}}},\sem{g_{\b{x}}})
		= \varepsilon.
	\end{align*}
	We conclude that $\klmax(\sem{f},\sem{g}) \geq d^{\KLprod}(f,g)$, hence that $\sem{-}$ is locally an isometry on all of $\sync{\KLprod}$.
\end{proof}

\subsection{Axiomatisation of \texorpdfstring{$\FSklsum$}{FStoch\^oplus\_kl}}\label{sec:axiomfstoch}
In this section, we provide a quantitative monoidal axiomatisation of $\FSklsum$. We closely follow the structure of \cref{sec:axiombstoch}, relying now on the monoidal presentation of $\FSsum$ in~\cite{Fritz09}.

\begin{definition}\label{defn:thklsum}
	Let $\sig$ be the signature of $\th_{\!\ConvAlg}$. The theory $\th_{\!\KLsum}$ is defined with the same signature $\sig$ and the following $\zeroinfQ$-implications as axioms.
	\begin{itemize}[nosep]
		\item For any axiom $f=g$ in $\th_{\!\ConvAlg}$, $\th_{\!\KLsum}$ contains $\ \Rightarrow f=g$.
		\item For any $f: n \rightarrow m \in \vync{\sig}$, $\th_{\!\KLsum}$ contains $\ \Rightarrow  f=_0 f$.
		\item For any $f,g: 1 \rightarrow n$, $f',g': 1 \rightarrow m$, and $p,q \in [0,1]$, $\th_{\!\KLsum}$ contains $\Chainsum$.
		\item For any $f,g: n \rightarrow m$ and $f',g': n' \rightarrow m'$, $\th_{\!\KLsum}$ contains $\Parmax$.
		      \[\begin{bprooftree}
				      \AxiomC{$\scalebox{0.7}{$\distdiag{f}$} =_\varepsilon \scalebox{0.7}{$\distdiag{g}$}$}
				      \AxiomC{$\scalebox{0.7}{$\distdiag{f'}$} =_{\delta} \scalebox{0.7}{$\distdiag{g'}$}$}
				      \RightLabel{\textnormal{$\Chainsum$}}
				      \BinaryInfC{$\scalebox{0.7}{$\input{pffprime.tikz}$} =_{\kl(\raisebox{0.3ex}{$\distvect{p}{1-p}$},\raisebox{0.3ex}{$\distvect{q}{1-q}$})+p\varepsilon + (1-p)\delta} \scalebox{0.7}{$\input{qggprime.tikz}$}$}
			      \end{bprooftree}\qquad \begin{bprooftree}
				      \AxiomC{$\scalebox{0.7}{$$} =_{\varepsilon} \scalebox{0.7}{$$}$}
				      \AxiomC{$\scalebox{0.7}{$$} =_{\delta} \scalebox{0.7}{$$}$}
				      \RightLabel{\textnormal{$\Parmax$}}
				      \BinaryInfC{$\scalebox{0.7}{$\input{parletters.tikz}$} =_{\max\{ \varepsilon,\delta \}} \scalebox{0.7}{$\input{parletterstwo.tikz}$}$}
			      \end{bprooftree}\]
	\end{itemize}
	We write $\sync{\KLsum}$ for the syntactic $\zeroinfQ$-SMC generated by $\th_{\!\KLsum}$ (see \cref{defn:synvcat}).
\end{definition}
Our axiomatisation result amounts to showing that the functor $\sem{-}$ defined in \cref{eqn:semfssum} yields an isomorphism of enriched categories between $\sync{\KLsum}$ and $\FSklsum$. First, we show $\sem{-}$ satisfies the axioms in $\th_{\!\KLsum}$.

\begin{lemma}\label{lem:semsumsat}
	The $\zeroinfQ$-implications in $\th_{\!\KLsum}$ are satisfied by $\sem{-}: \vync{\sig} \rightarrow \FSklsum$. 
\end{lemma}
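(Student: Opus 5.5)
I will mirror the proof of \cref{lem:semprodsat}, checking each of the four kinds of axioms of \cref{defn:thklsum} in turn against the semantics of \cref{eqn:semfssum}.

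\textbf{Equational axioms and reflexivity.} For the equational axioms, the underlying SMC of $\FSklsum$ is $\FSsum$ by construction. By \cite[Theorem~3.14]{Fritz09}, $\sem{-}$ factors through the isomorphism $\sync{\ConvAlg}\cong\FSsum$. Hence $\sem{f}=\sem{g}$ for every axiom $f=g$ of $\th_{\!\ConvAlg}$, and the implications $\ \Rightarrow f=g$ are satisfied.

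For the axioms $\ \Rightarrow f=_0 f$, I will use that $\kl(\dist,\dist)=0$ (\cref{defn:kl}). Applied columnwise, this gives $\klmax(\sem{f},\sem{f})=0$. The edge cases $m=0$ or $n=0$ also give $0$, since $\klmax([],[])\coloneq 0$ by definition.

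\textbf{$\Parmax$.} Suppose the premises hold in $\FSklsum$, i.e.\@ $\klmax(\sem f,\sem g)\le\varepsilon$ and $\klmax(\sem{f'},\sem{g'})\le\delta$. Since $\sem{f\otimes f'}=\sem f\oplus\sem{f'}$, \cref{eqn:klnexppar} gives
\[\klmax(\sem f\oplus\sem{f'},\sem g\oplus\sem{g'})\le\max\{\klmax(\sem f,\sem g),\klmax(\sem{f'},\sem{g'})\}\le\max\{\varepsilon,\delta\},\]
because $\max$ is monotone. This is exactly the conclusion of $\Parmax$ read in the order $\geq$ of $\zeroinfQ$.

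\textbf{$\Chainsum$.} Assume $\klmax(\sem f,\sem g)\le\varepsilon$ and $\klmax(\sem{f'},\sem{g'})\le\delta$. I first compute the semantics of the conclusion diagram. It is the $+_p$ generator followed by $f\oplus f'$, which denotes the distribution $p\,\sem f\oplus(1-p)\,\sem{f'}$ on $\fset{n}+\fset{m}$; the other side similarly denotes $q\,\sem g\oplus(1-q)\,\sem{g'}$. Its conditionals on $\fset n$ and $\fset m$ are $\sem f$ and $\sem{f'}$, with first-component mass $p$.

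The main obstacle is the degenerate cases. When $p\in\{0,1\}$, one conditional is not determined by the joint distribution. I will handle this by observing that the chain rule \cref{eqn:chainsumdiags} holds for arbitrary choices of the diagrams $f,f',g,g'$: the term whose coefficient is $p=0$ (resp.\@ $1-p=0$) vanishes under the convention $0\cdot\infty=0$. The same convention applies when $q\in\{0,1\}$ while $p\ne q$, where the binary term is $\infty$ and the bound is trivially true.

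With that settled, \cref{eqn:chainsumdiags} gives the exact value of the distance. Monotonicity of $\kl(\cdot)+p\varepsilon+(1-p)\delta$ in $\varepsilon,\delta$, i.e.\@ \cref{eqn:monotoneklsum}, then yields the stated bound. If $m$ or $n$ is $0$, the corresponding side has no morphism $1\to 0$, so that instance of the rule is vacuous.
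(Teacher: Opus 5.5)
Your proposal is correct and takes essentially the same route as the paper. The paper checks the equational axioms via the isomorphism $\sync{\ConvAlg}\cong\FSsum$, reflexivity via $\kl(\dist,\dist)=0$, $\Parmax$ via \cref{eqn:klnexppar}, and $\Chainsum$ via \cref{eqn:chainsumdiags} together with \cref{eqn:monotoneklsum}; your explicit handling of $p,q\in\{0,1\}$ and of the empty (co)domain cases correctly fills in what the paper leaves implicit.
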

\begin{proof}
	As in the proof of \cref{lem:semprodsat}, $\sem{-}$ factors through an isomorphism $\sync{\ConvAlg} \cong \FSsum$ by~\cite[Theorem~3.14]{Fritz09}, and it follows that $\sem{f} = \sem{g}$ whenever $f=g$ is in $\th_{\!\ConvAlg}$. The implications $\ \Rightarrow f=_0 f$ are satisfied because $\kl(\dist,\dist)=0$ for any $\dist$. The implications $\Chainsum$ and $\Parmax$ are satisfied by \cref{eqn:chainsumdiags} and by \cref{eqn:klnexppar} respectively (we also use \cref{eqn:monotoneklsum}).
\end{proof}

By \cref{lem:syntactic}, we obtain an enriched symmetric monoidal functor $\sem{-}: \sync{\KLsum} \rightarrow \FSklsum$. Again, this is simply the isomorphism in the axiomatisation result of~\cite{Fritz09}, which is enriched. In particular, the underlying SMCs of $\sync{\KLsum}$ and $\FSklsum$ are $\sync{\ConvAlg}$ and $\FSsum$ respectively. 

It remains to show that $\sem{-}$ is locally an isometry. We first tackle the case of terms $1 \rightarrow m$.
\begin{lemma}\label{lem:isometryondist}
	For any $f,g: 1 \rightarrow m \in \sync{\KLsum}$, $\ \Rightarrow  f=_{\kl(\sem{f},\sem{g})} g$ is derivable in $\th_{\!\KLsum}$.
\end{lemma}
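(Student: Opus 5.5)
Our plan mirrors the proof of \cref{lem:prod:isometryondist}: we argue by induction on $m$, using the completeness of $\th_{\!\ConvAlg}$~\cite[Theorem~3.14]{Fritz09} to put $f$ and $g$ into a decomposed form, and then applying $\Chainsum$ once. Throughout, a term $1 \rightarrow m$ has semantics a single-column stochastic matrix, i.e.\ a distribution in $\Dset\fset{m}$, so $\klmax(\sem{f},\sem{g}) = \kl(\sem{f},\sem{g})$.

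\textbf{Base cases.} For $m=0$ there are no morphisms $1 \rightarrow 0$, so the claim is vacuous. For $m=1$, $\FSsum(1,1)$ contains only the identity matrix. Hence $f \equiv g$ by completeness of $\th_{\!\ConvAlg}$, and $\kl(\sem{f},\sem{g}) = 0$. The axiom $\ \Rightarrow f=_0 f$ together with the congruence rules in \cref{eqn:equality} then gives $f =_0 g$.

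\textbf{Inductive step.} Take $f,g: 1 \rightarrow m+1$ with $\sem{f} = \dist$ and $\sem{g} = \distb$, and split the output as $1 + m$. Set $p = \dist(1)$ and $q = \distb(1)$. Semantically, $\dist$ is the convex combination with weight $p$ of the Dirac distribution on the first output and the conditional $\dist_Y$ on the remaining $m$ outputs. If $p=1$, the conditional is arbitrary and we pick any distribution, for instance the one equal to $\distb_Y$ when that is defined. Correspondingly, $f$ equals the diagram of the $\Chainsum$ shape: $\cc{p}$ followed by $f_1 \oplus f'$, with $f_1 = \id_1$ and $f'$ a term $1 \rightarrow m$ denoting $\dist_Y$. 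We decompose $g$ in the same way with $q$, $g_1 = \id_1$ and $g'$ denoting $\distb_Y$. Both equations hold in $\FSsum$, so they are derivable in $\th_{\!\ConvAlg}$ and hence in $\th_{\!\KLsum}$.

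By the base case we have $f_1 =_0 g_1$. By the induction hypothesis we have $f' =_{\kl(\dist_Y,\distb_Y)} g'$. Applying $\Chainsum$ yields $f =_{\varepsilon} g$ with
\[
\varepsilon = \kl\bigl(\distvect{p}{1-p},\distvect{q}{1-q}\bigr) + p\cdot 0 + (1-p)\kl(\dist_Y,\distb_Y),
\]
which equals $\kl(\dist,\distb)$ by \cref{lem:klsumchainrule}, or equivalently by \cref{eqn:chainsumdiags}.

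\textbf{Main obstacle.} The delicate part is the degenerate cases, where $p$ or $q$ lies in $\{0,1\}$ and a conditional distribution is not determined.
\begin{itemize}[nosep]
\item When $p=1$, the term $(1-p)\delta$ vanishes under the convention $0\cdot\infty = 0$, so the choice of $f'$ is irrelevant provided we invoke the induction hypothesis for whatever $f'$, $g'$ we chose.
\item When $q \in \{0,1\}$ and $p \neq q$, the binary KL term is already $\infty$, so the bound is trivially correct.
\item When $q=1$ but $p<1$, $\distb_Y$ is undefined but $\kl(\dist,\distb)=\infty$ anyway, and we choose $g'$ arbitrarily.
\end{itemize}
In each case we must check that the decomposition equations still hold semantically, which they do because the arbitrary branch has weight $0$. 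Finally, because derivability is closed under monotonicity \cref{eqn:implicationmon}, obtaining exactly the value $\kl(\dist,\distb)$ (rather than a weaker bound) is all that is required.
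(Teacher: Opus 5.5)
Your proposal is correct and follows essentially the same route as the paper: induction on $m$, decomposing $f$ and $g$ within $\th_{\!\ConvAlg}$ as the weight-$p$ (resp.\ weight-$q$) convex-combination generator followed by $\id_1 \otimes f'$ (resp.\ $\id_1 \otimes g'$), applying $\Chainsum$ once with $\id_1 =_0 \id_1$ and the induction hypothesis, and identifying the resulting bound with $\kl(\sem{f},\sem{g})$ via \cref{lem:klsumchainrule}. Your treatment of the degenerate cases $p,q \in \{0,1\}$ is sound and simply makes explicit what the paper's proof leaves implicit.
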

\begin{proof}
	We proceed by induction on $m$. For $m = 0$ or $m=1$, the result is trivial because there is no morphism in $\sync{\KLsum}(1,0)$ and a single one in $\sync{\KLsum}(1,1)$.

	Let $f, g: 1 \rightarrow m$ be two morphisms in $\sync{\KLsum}$. By reasoning solely within $\th_{\!\ConvAlg}$, we can decompose $f$ and $g$ as follows, where $p$ is first entry of $\sem{f}$, $q$ is that for $\sem{g}$, and $f'$ and $g'$ correspond the columns consisting of the remaining entries after normalisation. (This is essentially \cite[Lemma~3.2]{Fritz09} written recursively.) 
	\[\scalebox{0.7}{$\distdiag{f}$} = \scalebox{0.7}{$\input{decomposedist.tikz}$}  \qquad \scalebox{0.7}{$\distdiag{g}$} = \scalebox{0.7}{$\input{decomposedistb.tikz}$}\]
	Our induction hypothesis says that there is a proof of $f'=_{\varepsilon'} g'$ where $\varepsilon' = \kl(\sem{f'},\sem{g'})$. By applying the $\Chainsum$ rule (and $\id_1 =_0 \id_1$), we get $f=_{\varepsilon} g$ with $\varepsilon = \kl(\raisebox{0.3ex}{$\distvect{p}{1-p}$},\raisebox{0.3ex}{$\distvect{q}{1-q}$})+ (1-p)\varepsilon'$. By \cref{eqn:klsumchainrule}, we find $\varepsilon = \kl(\dist,\distb)$ where $\dist = p\dirac{1}+(1-p)\sem{f'} = \sem{f}$ and $\distb=q\dirac{1}+(1-q)\sem{g'} = \sem{g}$. We thus derived $f=_{\kl(\sem{f},\sem{g})} g$.
\end{proof}
The rest of the proof follows the axiomatisation of total variation distance in~\cite{Lobbia2025}.

\begin{theorem}\label{thm:axfssumkl}
	The function $f \mapsto \sem{f}$ is a bijective isometry $\sync{\KLsum}(n,m) \rightarrow \FSklsum(n,m)$. 
\end{theorem}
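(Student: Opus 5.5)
We follow the pattern of the proof of \cref{thm:axbsprodkl}, with $\Parmax$ now playing the role of $\Ifmax$. First, $f \mapsto \sem{f}$ is nonexpansive because $\sem{-}: \sync{\KLsum} \rightarrow \FSklsum$ is an enriched functor (by \cref{lem:semsumsat} and \cref{lem:syntactic}). It is bijective on each homset because its underlying functor is the isomorphism $\sync{\ConvAlg} \cong \FSsum$ of~\cite[Theorem~3.14]{Fritz09}; only axioms of $\th_{\!\ConvAlg}$ derive equations between terms. It therefore remains to show that $\vdash^{\KLsum} f =_{\varepsilon} g$ holds with $\varepsilon = \klmax(\sem{f},\sem{g})$ for all $f,g: n \rightarrow m$, which gives $d^{\KLsum}(f,g) \geq \klmax(\sem{f},\sem{g})$ in the order of $\zeroinfQ$.

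We proceed by induction on $n$. For $n=0$, the homset $\FSsum(0,m)$ contains only the empty matrix, so by bijectivity $f \equiv g$ and the axiom $\ \Rightarrow f=_0 f$ gives $f =_0 g$; this matches $\klmax([],[]) = 0$. For $n = 1$, the claim is exactly \cref{lem:isometryondist}, since $\klmax$ on one column equals $\kl$. For $n+1 \geq 2$, we use the fact that $\oplus$ is addition on objects and that a stochastic matrix $n+1 \rightarrow m$ splits into its first column and the remaining $n$ columns. Syntactically, I would write $f = (f_1 \oplus f_2) \scomp \nabla$ for $f_1 : 1 \rightarrow m$ and $f_2: n \rightarrow m$. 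Here $\nabla: m + m \rightarrow m$ is the codiagonal built from $m$ copies of the generator $\cop$ together with symmetries. The same decomposition $g = (g_1 \oplus g_2) \scomp \nabla$ is taken for $g$. These equations hold in $\FSsum$, since $\nabla$ is interpreted as $[I_m \ I_m]$ and $(\sem{f_1}\oplus\sem{f_2})\scomp[I_m\ I_m] = [\sem{f_1}\ \sem{f_2}]$. By completeness of $\th_{\!\ConvAlg}$ (\cite[Theorem~3.14]{Fritz09}) they are therefore derivable.

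By the induction hypothesis (or \cref{lem:isometryondist} for the one-column part), we derive $f_1 =_{\varepsilon_1} g_1$ and $f_2 =_{\varepsilon_2} g_2$, where $\varepsilon_i = \klmax(\sem{f_i},\sem{g_i})$. The $\Parmax$ rule then gives $f_1\oplus f_2 =_{\max\{\varepsilon_1,\varepsilon_2\}} g_1 \oplus g_2$. Sequential composition with $\nabla =_0 \nabla$ via \cref{eqn:implicationnexpseq} gives $f =_{\max\{\varepsilon_1,\varepsilon_2\}} g$, where we use $\varepsilon + 0 = \varepsilon$ and the equality rules \cref{eqn:equality} to replace terms by derivably equal ones. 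Finally, $\klmax(\sem{f},\sem{g})$ is the maximum over columns by \cref{eqn:klmax}, and the columns of $\sem{f}$ are those of $\sem{f_1}$ followed by those of $\sem{f_2}$, and similarly for $g$. Hence $\klmax(\sem{f},\sem{g}) = \max\{\varepsilon_1,\varepsilon_2\}$, which closes the induction.

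The main obstacle is the syntactic column-splitting step: checking that $\nabla$ is definable in the signature of $\th_{\!\ConvAlg}$ and that the decomposition is derivable. Because we only need semantic equality and can appeal to completeness of $\th_{\!\ConvAlg}$, this reduces to defining $\nabla$ correctly (for $m=0$ it is the empty diagram) and computing its semantics. The quantitative part is routine, since $\Parmax$ already matches the column-wise maximum exactly.
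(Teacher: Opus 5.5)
Your proof is correct and follows essentially the same argument as the paper: decompose $f$ and $g$ into columns, apply \cref{lem:isometryondist} to each column, combine with $\Parmax$, and post-compose with a codiagonal at distance $0$ via \cref{eqn:implicationnexpseq}. The only difference is presentational. The paper splits off all $n$ columns at once using the normal form $(f_1\otimes\cdots\otimes f_n)\scomp p^n_m$ from Fritz's Propositions~3.12--3.13. You instead peel off one column at a time with a binary codiagonal and justify each split by completeness of $\th_{\!\ConvAlg}$, which is the same argument unrolled as an induction.
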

\begin{proof}
	We already know it is a nonexpansive function because $\sem{-}$ is an enriched functor. It is a bijection because $\sem{-}$ is an isomorphism of the underlying SMCs, $\sync{\ConvAlg}$ and $\FSsum$. Thus, we only need to show that $\klmax(\sem{f},\sem{g}) \geq d^{\KLsum}(f,g)$ for any $f,g: n \rightarrow m$. By \cite[Propositions~3.12 and 3.13]{Fritz09}, we have a decomposition $f = (f_1 \otimes \cdots \otimes f_n)\scomp p_m^n$ and $g = (g_1 \otimes \cdots \otimes g_n)\scomp p_m^n$,
	where $f_i: 1 \rightarrow m$ and $\sem{f_i}$ is the $i$th column of $\sem{f}$, similarly for $g$, and $p_m^n : nm \rightarrow m$. By \cref{lem:isometryondist}, $f_i =_{\kl(\sem{f_i},\sem{g_i})} g_i$ is derivable in $\th_{\!\KLsum}$ for each $i$, and $p_m^n =_0 p_m^n$ is derived by the second axiom in \cref{defn:thklsum}. Then, we can apply \cref{eqn:implicationnexpseq} and $\Parmax$, to derive $f=_{\varepsilon} g$ with $		\varepsilon = \max_{i \in \fset{n}}\kl(\sem{f_i},\sem{g_i}) + 0 \stackrel{\cref{eqn:klmax}}{=} \klmax(\sem{f},\sem{g})$.
	We conclude that the distance between $\sem{f}$ and $\sem{g}$ in $\sync{\KLsum}$ is below $\klmax(\sem{f},\sem{g})$ as desired.
\end{proof}

\section{Axiomatisation of R\'{e}nyi Divergences}\label{sec:axren}
KL divergence is an instance of a family of generalised relative entropies called R\'{e}nyi divergences~\cite{Renyi1960}, which are also $\zeroinfQ$-relations on $\Dset X$. These satisfy chain rules similar to \cref{lem:klprodchainrule,lem:klsumchainrule}, which means that our axiomatisations in \cref{sec:axkl} can be generalised as well. In fact, the proofs barely change.\footnote{Since KL divergence is an instance of the Rényi divergences, \cref{sec:axren} could have been written as a strict generalisation of \cref{sec:axkl}, but it would require more burdensome case analysis.}

In \cref{sec:rendiv}, we recall the definition of R\'{e}nyi divergences between distributions, and their extensions to stochastic matrices yielding two $\zeroinfQ$-SMCs $\BSreprod{\alpha}$ and $\FSresum{\alpha}$. In \cref{sec:axiomsrendiv}, we explain what changes to \cref{sec:axkl} lead to axiomatisations of these $\zeroinfQ$-SMCs.

\subsection{R\'{e}nyi Divergences}\label{sec:rendiv}
\begin{definition}
	Given a parameter $0 < \alpha < \infty$ with $\alpha \neq 1$, the \emph{R\'{e}nyi divergence of order $\alpha$} is a map $\Dset X \times \Dset X \rightarrow \zeroinfQ$ defined by $\re{\alpha}(\dist,\distb) \coloneq \frac{1}{\alpha-1}\log(\sum_{x \in X}\frac{\dist(x)^\alpha}{\distb(x)^{\alpha-1}})$, where $\frac{x}{0}$ is $0$ when $x=0$ and $\infty$ otherwise, $\log(\infty) = \infty$, and $0^0 = 0$. For $\alpha = 0,1,\infty$, the R\'{e}nyi divergence is defined as a limit, or equivalently with the following formulas. 
	\[\re{0}(\dist,\distb) = -\log(\sum_{x \in X, \dist(x) > 0}\!\!\!\!\!\!\!\!\distb(x)) \qquad \re{1}(\dist,\distb) = \kl(\dist,\distb) \qquad \re{\infty}(\dist,\distb) = \sup_{x \in X}\log\frac{\dist(x)}{\distb(x)}\]
\end{definition}

The R\'{e}nyi divergences also satisfy a chain rule (see e.g.~\cite[Section 7.12]{Polyanskiy2025}). We state the specific variants that we will need (\emph{cf.}~\cref{lem:klprodchainrule,lem:klsumchainrule}), omitting the case $\alpha=1$ which is the KL divergence.
\begin{lemma}\label{lem:reprodchainrule}
	Given $0\leq \alpha<\infty$ with $\alpha\neq 1$ and $\dist,\distb\in\Dset(\B\times Y)$ with $p \coloneq \dist(\b1,Y)$ and $q \coloneq \distb(\b1,Y)$,
	\begin{equation}\label{eqn:reprodrecurse}
		\re{\alpha}(\dist,\distb)
		=
		\frac{1}{\alpha-1}\log\big(
		\frac{p^\alpha}{q^{\alpha-1}}e^{(\alpha-1)\re{\alpha}(\overline{\dist}(\b1,-),\overline{\distb}(\b1,-))}
		+
		\frac{(1-p)^\alpha}{(1-q)^{\alpha-1}}e^{(\alpha-1)\re{\alpha}(\overline{\dist}(\b0,-),\overline{\distb}(\b0,-))}
		\big).
	\end{equation}
	For $\alpha = \infty$, we have
	\begin{equation}\label{eqn:reprodrecurseinfty}
		\re{\infty}(\dist,\distb)
		= \max \big\{ \log \frac{p}{q} + \re{\infty}(\overline{\dist}(\b1,-),\overline{\distb}(\b1,-)), \log\frac{1-p}{1-q} + \re{\infty}(\overline{\dist}(\b0,-),\overline{\distb}(\b0,-))\big\}.
	\end{equation}
\end{lemma}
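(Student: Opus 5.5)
The approach is a direct computation from the definition of $\re{\alpha}$, splitting the sum over $\B\times Y$ according to the first coordinate, exactly as one proves \cref{lem:klprodchainrule}. First I would treat $0<\alpha<\infty$, $\alpha\neq 1$. Write $S(\dist,\distb)\coloneq\sum_{z}\dist(z)^\alpha/\distb(z)^{\alpha-1}$, so that $\re{\alpha}(\dist,\distb)=\frac{1}{\alpha-1}\log S(\dist,\distb)$ and $e^{(\alpha-1)\re{\alpha}(\dist,\distb)}=S(\dist,\distb)$. Split $S(\dist,\distb)=\sum_{y}\dist(\b1,y)^\alpha/\distb(\b1,y)^{\alpha-1}+\sum_y\dist(\b0,y)^\alpha/\distb(\b0,y)^{\alpha-1}$. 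Substituting $\dist(\b1,y)=p\,\overline{\dist}(\b1,y)$ and $\distb(\b1,y)=q\,\overline{\distb}(\b1,y)$, and pulling out $p^\alpha/q^{\alpha-1}$, the first sum becomes $\frac{p^\alpha}{q^{\alpha-1}}S(\overline{\dist}(\b1,-),\overline{\distb}(\b1,-))=\frac{p^\alpha}{q^{\alpha-1}}e^{(\alpha-1)\re{\alpha}(\overline{\dist}(\b1,-),\overline{\distb}(\b1,-))}$. The second sum is handled the same way with $1-p,1-q$. Applying $\frac{1}{\alpha-1}\log$ then gives \cref{eqn:reprodrecurse}.

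The case $\alpha=0$ is handled separately, since there the formula is a limit. Here $\re{0}(\dist,\distb)=-\log\distb(\supp{\dist})$. Splitting gives $\distb(\supp{\dist})=q\,\overline{\distb}(\b1,\supp{\overline{\dist}(\b1,-)})+(1-q)\,\overline{\distb}(\b0,\supp{\overline{\dist}(\b0,-)})$ when $p\in(0,1)$. This matches \cref{eqn:reprodrecurse} at $\alpha=0$ under the convention $p^0=1$ for $p>0$ and $0^0=0$. For $\alpha=\infty$, I would take the supremum of $\log\frac{\dist(x)}{\distb(x)}$ over $x=(\b{b},y)$. For the component $\b b=\b1$ this equals $\log\frac{p}{q}+\log\frac{\overline{\dist}(\b1,y)}{\overline{\distb}(\b1,y)}$, and splitting the sup into the two components yields \cref{eqn:reprodrecurseinfty}.

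The main obstacle I expect is the degenerate cases: $p\in\{0,1\}$, $q\in\{0,1\}$, zeros in the supports, and infinite values. For $p=0$, the term $\frac{0^\alpha}{q^{\alpha-1}}$ must vanish (including when $q=0$, via $0/0=0$ and $0^0=0$) regardless of the arbitrary conditional $\overline{\dist}(\b1,-)$. I would check this by noting that the corresponding part of the original sum is identically zero. When $q=0<p$, both sides are $\infty$ for $\alpha>1$. For $\alpha<1$ the factor $q^{1-\alpha}=0$ kills the term, which matches the original sum since $\distb(\b1,y)=0$ there. Whenever the conditional divergence is infinite while its coefficient is positive, both sides are $\infty$ for $\alpha>1$. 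For $\alpha<1$ I would use the convention $e^{-\infty}=0$ and verify agreement termwise. I would organise this as a short case check done termwise on each half of the split, rather than on the logarithm as a whole.
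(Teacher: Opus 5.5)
Your proposal is correct, and it takes a different route from the paper only in that the paper gives no argument of its own for this lemma. It simply invokes the general chain rule for R\'enyi divergences from~\cite[Section~7.12]{Polyanskiy2025}, and the displayed formulas are the instance where the conditioning variable is a single bit. Your direct expansion is precisely the computation behind that rule in this special case: you split the defining sum along the first bit and factor out $p^\alpha/q^{\alpha-1}$ and $(1-p)^\alpha/(1-q)^{\alpha-1}$, and you split the supremum analogously for $\alpha=\infty$. The citation buys brevity and generality. Your route makes explicit the conventions that the degenerate cases need, which the paper leaves implicit.

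Two small refinements. First, $\alpha=0$ needs no separate treatment. With $0^0=0$, the summand of the defining sum at $\alpha=0$ is $\distb(z)$ if $\dist(z)>0$ and $0$ otherwise. So the general formula already equals $-\log\distb(\supp{\dist})$, and your main computation covers it.

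Second, the conditional on a null branch is arbitrary, so its divergence may be $\infty$. For $\alpha>1$ and $p=0$, the vanishing of the first term therefore relies on $0\cdot\infty=0$. For $\alpha=\infty$ with $p=0$, which your case check does not address explicitly, the first branch of the maximum is $\log\frac{0}{q}+\re{\infty}(\overline{\dist}(\b1,-),\overline{\distb}(\b1,-))$. This can be $-\infty+\infty$, and you must set it to $-\infty$. That value then matches $\sup_y\log\frac{\dist(\b1,y)}{\distb(\b1,y)}=-\infty$ exactly, and the case $p=1$ is symmetric.
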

\begin{lemma}\label{lem:rerecursivedefn}
	Given $0\leq\alpha<\infty$ with $\alpha\neq 1$, $\dist,\distb\in \Dset(X+Y)$, $p \coloneq \sum_{x \in X}\dist(x)$ and $q \coloneq \sum_{x \in X}\distb(x)$, writing $\dist_X$ for the conditional distribution on $X$, and similarly for $\dist_Y$, $\distb_X$, and $\distb_Y$, we have,
	\begin{equation}\label{eqn:rerecursivedefn}
		\re{\alpha}(\dist,\distb) = \frac{1}{\alpha-1}\log\big(\frac{p^\alpha}{q^{\alpha-1}}e^{(\alpha-1)\re{\alpha}(\dist_X,\distb_X)}+\frac{(1-p)^\alpha}{(1-q)^{\alpha-1}}e^{(\alpha-1)\re{\alpha}(\dist_Y,\distb_Y)}
		\big).
	\end{equation}
	For $\alpha = \infty$, we have
	\begin{equation}\label{eqn:rerecursivedefninfty}
		\re{\infty}(\dist,\distb)
		= \max \big\{ \log \frac{p}{q} + \re{\infty}(\dist_X,\distb_X), \log\frac{1-p}{1-q} + \re{\infty}(\dist_Y,\distb_Y)\big\}.
	\end{equation}
\end{lemma}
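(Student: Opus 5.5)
The plan is a direct computation from the definition, splitting the sum over $X+Y$ into its two blocks. Fix $0<\alpha<\infty$, $\alpha\neq 1$, and write $S(\dist,\distb)\coloneq\sum_{z}\dist(z)^\alpha\distb(z)^{1-\alpha}$, so that $\re{\alpha}(\dist,\distb)=\frac{1}{\alpha-1}\log S(\dist,\distb)$. Equivalently, $e^{(\alpha-1)\re{\alpha}(\dist,\distb)}=S(\dist,\distb)$, with the stated conventions for zeros and infinities.

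The first step splits $S(\dist,\distb)=\sum_{x\in X}\dist(x)^\alpha\distb(x)^{1-\alpha}+\sum_{y\in Y}\dist(y)^\alpha\distb(y)^{1-\alpha}$. When $p,q>0$, substituting $\dist(x)=p\,\dist_X(x)$ and $\distb(x)=q\,\distb_X(x)$ turns the first sum into $\frac{p^\alpha}{q^{\alpha-1}}S(\dist_X,\distb_X)=\frac{p^\alpha}{q^{\alpha-1}}e^{(\alpha-1)\re{\alpha}(\dist_X,\distb_X)}$. The same substitution with $1-p$ and $1-q$ handles the $Y$ block. Applying $\frac{1}{\alpha-1}\log$ to the total then gives \cref{eqn:rerecursivedefn}. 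The $\B\times Y$ version \cref{eqn:reprodrecurse} is the same computation, since $\B\times Y\cong Y+Y$ and the conditionals $\overline{\dist}(\b1,-)$ and $\overline{\dist}(\b0,-)$ are exactly the conditionals on the two summands.

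The case $\alpha=0$ follows the same pattern, using $\re{0}(\dist,\distb)=-\log\distb(\supp{\dist})$, which fits the general formula with $p^0$ read as the indicator of $p>0$ (consistent with $0^0=0$). Splitting the support gives $\distb(\supp{\dist})=q\,\distb_X(\supp{\dist_X})+(1-q)\,\distb_Y(\supp{\dist_Y})$, restricted to the blocks where $p>0$ (respectively $1-p>0$). This matches \cref{eqn:rerecursivedefn} at $\alpha=0$, where $\frac{p^0}{q^{-1}}=q$ when $p>0$. For $\alpha=\infty$, the supremum over $X+Y$ is the larger of the suprema over $X$ and over $Y$. On the $X$ block, $\log\frac{\dist(x)}{\distb(x)}=\log\frac pq+\log\frac{\dist_X(x)}{\distb_X(x)}$, which yields \cref{eqn:rerecursivedefninfty}. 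Here the supremum should be read as ranging over $\supp{\dist}$, so that a block with $p=0$ contributes nothing.

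The main obstacle is the degenerate boundary cases: $p$ or $q$ equal to $0$ or $1$, where conditionals are undefined or arbitrary. I would treat these separately. If $p=0$, the $X$ block contributes $0$ to $S$, and the coefficient $\frac{0^\alpha}{q^{\alpha-1}}$ must be read as $0$ even when multiplied by a possibly infinite exponential, in line with $0\cdot\infty=0$. If $p>0=q$, the block contributes $\infty$ when $\alpha>1$, which agrees with $\frac{p^\alpha}{0}=\infty$. When $\alpha<1$ it contributes $0$, agreeing with $p^\alpha\cdot 0^{1-\alpha}=0$. I would check each of these sign and convention combinations explicitly, including the sign of $\frac{1}{\alpha-1}$, which determines whether $\log 0=-\infty$ produces $+\infty$.
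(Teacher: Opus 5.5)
Your proof is correct, but it takes a different route from the paper. The paper gives no computation; it cites the general chain rule for R\'enyi divergences \cite[Section~7.12]{Polyanskiy2025}. The lemma is the instance of that rule where the first marginal is the two-point distribution recording which summand of $X+Y$ one lands in. Unfolding the conditional term there (an exponential average against an $\alpha$-tilted marginal) produces exactly the weights $\frac{p^\alpha}{q^{\alpha-1}}$ and $\frac{(1-p)^\alpha}{(1-q)^{\alpha-1}}$.

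Your direct splitting of $\sum_z\dist(z)^\alpha\distb(z)^{1-\alpha}$ into the two blocks buys you several things. It is more elementary and self-contained. It handles $\alpha=0$ and $\alpha=\infty$ through their closed forms rather than as limits, and it gives the $\B\times Y$ version for free. Unlike the citation, it also exposes which extended-arithmetic conventions are needed when $p$ or $q$ lies in $\{0,1\}$ and the conditionals are arbitrary.

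When you write those degenerate cases out, two points need care. First, for $p>0=q$ the block is $\infty$ independently of the arbitrary $\distb_X$ only because $e^{(\alpha-1)\re{\alpha}(\dist_X,\distb_X)}\geq 1$ for $\alpha>1$ (resp.\ $\re{\infty}(\dist_X,\distb_X)\geq 0$). So you are using nonnegativity of R\'enyi divergence there. Second, for $\alpha=\infty$ and $p=0$, the literal right-hand side contains $-\infty+\re{\infty}(\dist_X,\distb_X)$. This is undefined if the arbitrary $\dist_X$ makes that divergence infinite. Your reading of the maximum as ranging only over blocks of positive $\dist$-mass is therefore a convention the statement genuinely needs, not an optional one.
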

In order to avoid writing the long formulas in \cref{eqn:reprodrecurse,eqn:reprodrecurseinfty,eqn:rerecursivedefn,eqn:rerecursivedefninfty} many times, we will use the following shorthand function $C_{\alpha}: [0,1]^2 \times [0,\infty]^2 \rightarrow [0,\infty]$.
\begin{equation}\label{eqn:shorthandcalpha}
	C_{\alpha}(p,q,\varepsilon,\delta) = \begin{cases}
		\frac{1}{\alpha-1}\log( \frac{p^{\alpha}}{q^{\alpha-1}}e^{(\alpha-1)\varepsilon} + \frac{(1-p)^\alpha}{(1-q)^{\alpha-1}}e^{(\alpha-1)\delta} ) & 0\leq \alpha < \infty, \alpha \neq 1 \\
		\max \{ \log \frac{p}{q} + \varepsilon, \log\frac{1-p}{1-q} + \delta\}                                                                         & \alpha = \infty
	\end{cases}
\end{equation}
This function describes how the divergence between two joint distributions depends on the divergences between their conditionals. Note that for any $\alpha$, $C_{\alpha}$ is monotone in the third and fourth arguments. Namely, we have analogues to \cref{eqn:monotoneklprod,eqn:monotoneklsum}, which means we can transform the chain rules of \cref{lem:reprodchainrule,lem:rerecursivedefn} into sound $\zeroinfQ$-implications in the next section.

\subsection{Axiomatisations}\label{sec:axiomsrendiv}
For the rest of this section, we fix the parameter $\alpha \in [0,\infty]$ with $\alpha \neq 1$. We extend the definition of R\'{e}nyi divergence of order $\alpha$ to stochastic matrices following \cref{eqn:klmax}: given two $m\times n$ stochastic matrices, let $\remax{\alpha}(A,B) \coloneq \max_{i \in \fset{n}} \re{\alpha}(A_i,B_i)$.
Perrone showed in~\cite[Proposition~2.18]{Perrone2024} that this definition satisfies \cref{eqn:nexpcomposition,eqn:nexptensor}, making $\FSprod$ equipped with $\remax{\alpha}$ on its homsets into a $\zeroinfQ$-SMC. This in turn yields a $\zeroinfQ$-SMC $\BSreprod{\alpha}$, that is axiomatised by the following $\zeroinfQ$-theory.
\begin{definition}\label{defn:threprod}
	Let $\sig$ be the signature of $\th_{\!\Circ}$. The theory $\th_{\!\Rprod{\alpha}}$ is defined with the same signature $\sig$ and the following $\zeroinfQ$-implications as axioms.
	\begin{itemize}[nosep]
		\item For any axiom $f=g$ in $\th_{\!\Circ}$, $\th_{\!\Rprod{\alpha}}$ contains $\ \Rightarrow f=g$.
		\item For any $f: n \rightarrow m \in \vync{\sig}$, $\th_{\!\Rprod{\alpha}}$ contains $\ \Rightarrow  f=_0 f$.
		\item For any $f_{\b1},g_{\b1}: 0 \rightarrow m$, $f_{\b0},g_{\b0}: 0 \rightarrow m$, and $p,q \in [0,1]$, $\th_{\!\Rprod{\alpha}}$ contains $\Chainprod$.
		\item For any $f_{\b1},g_{\b1}: n \rightarrow m$ and $f_{\b0},g_{\b0}: n' \rightarrow m$, $\th_{\!\Rprod{\alpha}}$ contains $\Ifmax$.
		      \[\begin{bprooftree}
				      \AxiomC{$\raisebox{0.3ex}{\scalebox{0.7}{$$}} =_{\varepsilon} \raisebox{0.3ex}{\scalebox{0.7}{$$}}$}
				      \AxiomC{$\raisebox{0.3ex}{\scalebox{0.7}{$$}} =_{\delta} \raisebox{0.3ex}{\scalebox{0.7}{$$}}$}
				      \RightLabel{\textnormal{$\Chainprod$}}
				      \BinaryInfC{$\scalebox{0.7}{$\input{distdecompose.tikz}$} =_{C_{\alpha}(p,q,\varepsilon,\delta)} \scalebox{0.7}{$\input{distsbdecompose.tikz}$}$}
			      \end{bprooftree} \quad \begin{bprooftree}
				      \AxiomC{$\raisebox{0.3ex}{\scalebox{0.7}{$$}} =_{\varepsilon} \raisebox{0.3ex}{\scalebox{0.7}{$$}}$}
				      \AxiomC{$\raisebox{0.3ex}{\scalebox{0.7}{$$}} =_{\delta} \raisebox{0.3ex}{\scalebox{0.7}{$$}}$}
				      \RightLabel{{\textnormal{$\Ifmax$}}}
				      \BinaryInfC{$\scalebox{0.7}{$\input{iff.tikz}$} =_{\max\left\{ \varepsilon,\delta \right\}} \scalebox{0.7}{$\input{ifg.tikz}$}$}
			      \end{bprooftree}\]
	\end{itemize}
	We write $\sync{\Rprod{\alpha}}$ for the syntactic $\zeroinfQ$-SMC generated by $\th_{\!\Rprod{\alpha}}$ (see \cref{defn:synvcat}).
\end{definition}
\begin{theorem}\label{thm:axiomreprod}
	The functor $\sem{-}$ defined in \cref{eqn:sembsprod} is an enriched isomorphism of SMCs $\sync{\Rprod{\alpha}} \cong \BSreprod{\alpha}$.
\end{theorem}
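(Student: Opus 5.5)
The proof will follow the same three stages as the KL case (\cref{lem:semprodsat}, \cref{lem:prod:isometryondist}, \cref{thm:axbsprodkl}), with $\kl$ replaced by $\re{\alpha}$ and the chain-rule coefficient replaced by $C_{\alpha}$.

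\textbf{Soundness.} First we check that $\sem{-}: \vync{\sig} \rightarrow \BSreprod{\alpha}$ satisfies every axiom of $\th_{\!\Rprod{\alpha}}$.
\begin{itemize}
\item The equational axioms hold because $\sem{-}$ factors through $\sync{\Circ} \cong \BSprod$ \cite[Theorem~3.13]{Piedeleu2025b}.
\item $\Rightarrow f=_0 f$ holds because $\re{\alpha}(\dist,\dist)=0$ for every $\alpha$. For $\alpha = 0$ this uses that $-\log$ of the mass of $\supp{\dist}$ under $\dist$ is $-\log 1 = 0$.
\item $\Chainprod$ holds by \cref{eqn:reprodrecurse} (or \cref{eqn:reprodrecurseinfty} when $\alpha=\infty$), applied to $\sem{-}$ of the decomposed diagrams exactly as in \cref{eqn:chainproddiags}. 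We then use that $C_{\alpha}$ is monotone in its last two arguments to pass from exact values to upper bounds.
\item $\Ifmax$ holds by the analogue of \cref{eqn:ifmaxdiags}. Its proof only uses the column-wise maximum in $\remax{\alpha}$ and the semantics of the if gate, so it goes through verbatim.
\end{itemize}
By \cref{lem:syntactic}, this gives an enriched symmetric monoidal functor $\sync{\Rprod{\alpha}} \rightarrow \BSreprod{\alpha}$. Only the $\th_{\!\Circ}$ axioms derive plain equations, so the underlying functor is the isomorphism $\sync{\Circ}\cong\BSprod$, and it is nonexpansive.

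\textbf{Isometry on states.} Next we prove the analogue of \cref{lem:prod:isometryondist}: for $f,g:0\rightarrow m$ we derive $f =_{\remax{\alpha}(\sem f,\sem g)} g$, by induction on $m$.
\begin{itemize}
\item For $m=0$ we use $f=_0 f$.
\item For the inductive step, we decompose $f$ and $g$ within $\th_{\!\Circ}$ as convex combinations with coefficients $p,q$ of $\flip{\b1}\otimes f_{\b1}$ and $\flip{\b0}\otimes f_{\b0}$ (and likewise for $g$). The induction hypothesis gives the two premises with the exact divergences $\varepsilon_{\b x}$, and $\Chainprod$ then yields $f =_{C_\alpha(p,q,\varepsilon_{\b1},\varepsilon_{\b0})} g$. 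By \cref{lem:reprodchainrule} this index equals $\re{\alpha}(\sem f,\sem g)$.
\end{itemize}

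\textbf{General morphisms.} Finally we induct on the number of inputs $n$, as in \cref{thm:axbsprodkl}. We decompose $f,g:n+1\rightarrow m$ via if gates into $f_{\b x}, g_{\b x}$, apply $\Ifmax$ to the inductive derivations, and identify $\max_{\b x}\remax{\alpha}(\sem{f_{\b x}},\sem{g_{\b x}})$ with $\remax{\alpha}(\sem f,\sem g)$ by splitting the maximum over input bit strings. Together with nonexpansiveness this shows $d^{\Rprod{\alpha}}=\remax{\alpha}$, so $\sem{-}$ is locally an isometry and hence an enriched isomorphism.

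\textbf{Main obstacle.} The delicate step is the chain rule at degenerate values. When $p$ or $q$ is in $\{0,1\}$, $C_\alpha$ involves terms like $0^\alpha/0^{\alpha-1}$, and the conditional distributions in the decomposition are arbitrary. We must check that the conventions $0^0=0$ and $x/0\in\{0,\infty\}$ make $C_\alpha(p,q,\varepsilon,\delta)$ equal to $\re{\alpha}(\sem f,\sem g)$ whatever $\varepsilon_{\b x}$ the induction supplies for the irrelevant branch. This matters especially for $\alpha=0$, where the divergence depends only on supports, and for $\alpha=\infty$ with $\log\frac{0}{q}=-\infty$. The plan is a short case analysis per regime, $\alpha\in(0,1)$, $\alpha>1$, $\alpha=0$ and $\alpha=\infty$. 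If $C_\alpha$ fails to match in some case, the fallback is to choose the arbitrary conditionals on null branches equal for $f$ and $g$, so that their divergence is $0$ and the formula reduces to \cref{lem:reprodchainrule} exactly.
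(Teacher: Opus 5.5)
Your proposal is correct and follows the paper's proof essentially step for step: soundness of $\Chainprod$ via \cref{eqn:reprodrecurse} (or \cref{eqn:reprodrecurseinfty}) together with monotonicity of $C_{\alpha}$, and soundness of $\Ifmax$ verbatim from the KL case. It then runs the induction on $m$ for states with \cref{lem:reprodchainrule} replacing \cref{eqn:klprodrecurse}, and the induction on $n$ with $\Ifmax$ for general morphisms. Your extra check of the degenerate cases $p,q\in\{0,1\}$ is something the paper leaves implicit in its conventions. Your fallback of taking the irrelevant conditionals equal for $f$ and $g$ is legitimate, since completeness of $\th_{\!\Circ}$ allows any choice on a null branch, and it cleanly sidesteps the one delicate point: a $0\cdot\infty$ term, or $-\infty+\infty$ when $\alpha=\infty$.
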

\begin{proof}
	The proof follows that of \cref{sec:axiombstoch}. In particular,
	\begin{itemize}[nosep]
		\item soundness of the new $\Chainprod$ rule is justified by \cref{eqn:reprodrecurse} (or \cref{eqn:reprodrecurseinfty} when $\alpha = \infty$) and monotonicity of $C_{\alpha}$,
		\item soundness of $\Ifmax$ is justified by the same proof as for $\klmax$, and
		\item the use of \cref{eqn:klprodrecurse} in \cref{lem:prod:isometryondist} is replaced by \cref{eqn:reprodrecurse} (or \cref{eqn:reprodrecurseinfty} when $\alpha = \infty$).
	\end{itemize}
	\vspace{-1\baselineskip}
\end{proof}

We can argue as for $\FSklsum$ that $\FSsum$ equipped with $\remax{\alpha}$ on its homset is a $\zeroinfQ$-SMC. We denote it by $\FSresum{\alpha}$ and show it is axiomatised by the following $\zeroinfQ$-theory.
\begin{definition}\label{defn:thresum}
	Let $\sig$ be the signature of $\th_{\!\ConvAlg}$. The theory $\th_{\!\Rsum{\alpha}}$ is defined with the same signature $\sig$ and the following $\zeroinfQ$-implications as axioms.
	\begin{itemize}[nosep]
		\item For any axiom $f=g$ in $\th_{\!\ConvAlg}$, $\th_{\!\Rsum{\alpha}}$ contains $\ \Rightarrow f=g$.
		\item For any $f: n \rightarrow m \in \vync{\sig}$, $\th_{\!\Rsum{\alpha}}$ contains $\ \Rightarrow  f=_0 f$.
		\item For any $f,g: 1 \rightarrow n$, $f',g': 1 \rightarrow m$, and $p,q \in [0,1]$, $\th_{\!\Rsum{\alpha}}$ contains $\Chainsum$.
		\item For any $f,g: n \rightarrow m$ and $f',g': n' \rightarrow m'$, $\th_{\!\Rsum{\alpha}}$ contains $\Parmax$.
		      \[\begin{bprooftree}
				      \AxiomC{$\scalebox{0.7}{$\distdiag{f}$} =_\varepsilon \scalebox{0.7}{$\distdiag{g}$}$}
				      \AxiomC{$\scalebox{0.7}{$\distdiag{f'}$} =_{\delta} \scalebox{0.7}{$\distdiag{g'}$}$}
				      \RightLabel{\textnormal{$\Chainsum$}}
				      \BinaryInfC{$\scalebox{0.7}{$\input{pffprime.tikz}$} =_{C_{\alpha}(p,q,\varepsilon,\delta)} \scalebox{0.7}{$\input{qggprime.tikz}$}$}
			      \end{bprooftree}\qquad \begin{bprooftree}
				      \AxiomC{$\scalebox{0.7}{$$} =_{\varepsilon} \scalebox{0.7}{$$}$}
				      \AxiomC{$\scalebox{0.7}{$$} =_{\delta} \scalebox{0.7}{$$}$}
				      \RightLabel{\textnormal{$\Parmax$}}
				      \BinaryInfC{$\scalebox{0.7}{$\input{parletters.tikz}$} =_{\max\{ \varepsilon,\delta \}} \scalebox{0.7}{$\input{parletterstwo.tikz}$}$}
			      \end{bprooftree}\]
	\end{itemize}
	We write $\sync{\Rsum{\alpha}}$ for the syntactic $\zeroinfQ$-SMC generated by $\th_{\!\Rsum{\alpha}}$ (see \cref{defn:synvcat}).
\end{definition}
\begin{theorem}\label{thm:axiomresum}
	The functor $\sem{-}$ defined in \cref{eqn:semfssum} is an enriched isomorphism of SMCs $\sync{\Rsum{\alpha}} \cong \FSresum{\alpha}$.
\end{theorem}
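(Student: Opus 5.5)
The plan mirrors the proof of \cref{thm:axfssumkl}, replacing KL divergence by $\re{\alpha}$ and \cref{lem:klsumchainrule} by \cref{lem:rerecursivedefn}. The argument has four steps.

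\textbf{Soundness.} I first show that $\sem{-}:\vync{\sig}\rightarrow\FSresum{\alpha}$ satisfies every axiom of $\th_{\!\Rsum{\alpha}}$.
The equational axioms from $\th_{\!\ConvAlg}$ hold by \cite[Theorem~3.14]{Fritz09}.
The axioms $\Rightarrow f=_0 f$ hold because $\re{\alpha}(\dist,\dist)=0$ for every $\alpha\in[0,\infty]$.
$\Chainsum$ is sound by \cref{eqn:rerecursivedefn}, or \cref{eqn:rerecursivedefninfty} when $\alpha=\infty$, together with monotonicity of $C_\alpha$ in its last two arguments. Here one checks that the interpretation of the convex-combination diagram is the joint distribution whose conditionals are $\sem{f},\sem{f'}$, exactly as in \cref{eqn:chainsumdiags}.
$\Parmax$ is sound by the same computation as \cref{eqn:klnexppar}. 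That computation only uses that $\re{\alpha}$ between two columns of $A\oplus A'$ ignores zero entries outside the shared support. This holds for every $\alpha$ under the stated conventions, since terms with $\dist(x)=\distb(x)=0$ contribute nothing.

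\textbf{Factorisation.} By \cref{lem:syntactic} we then obtain an enriched symmetric monoidal functor $\sync{\Rsum{\alpha}}\rightarrow\FSresum{\alpha}$.
Its underlying functor is the isomorphism $\sync{\ConvAlg}\cong\FSsum$, because the only axioms deriving plain equations come from $\th_{\!\ConvAlg}$. So it is bijective on homsets and nonexpansive, and it remains to show $d^{\Rsum{\alpha}}(f,g)\le\remax{\alpha}(\sem f,\sem g)$.

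\textbf{Distributions.} I prove the analogue of \cref{lem:isometryondist} by induction on $m$.
Given $f,g:1\rightarrow m$, decompose them within $\th_{\!\ConvAlg}$ as $p$-combinations of $\id_1$ and some $f'$, respectively $q$-combinations of $\id_1$ and some $g'$.
Apply $\Chainsum$ to $\id_1=_0\id_1$ and the induction hypothesis $f'=_{\re{\alpha}(\sem{f'},\sem{g'})}g'$. This yields $f=_{C_\alpha(p,q,0,\re{\alpha}(\sem{f'},\sem{g'}))}g$.
By \cref{lem:rerecursivedefn} this index equals $\re{\alpha}(\sem f,\sem g)$.

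\textbf{General morphisms.} Use the decomposition $f=(f_1\otimes\cdots\otimes f_n)\scomp p^n_m$ from \cite{Fritz09}. Apply $\Parmax$ repeatedly and then \cref{eqn:implicationnexpseq} with $p^n_m=_0p^n_m$. This gives the bound $\max_i\re{\alpha}(\sem{f_i},\sem{g_i})+0=\remax{\alpha}(\sem f,\sem g)$.

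\textbf{Main obstacle.} The delicate point is the degenerate cases $p\in\{0,1\}$ or $q\in\{0,1\}$ in the chain rule. There the conditionals are arbitrary, and $C_\alpha$ involves expressions such as $0^\alpha/0^{\alpha-1}$ or $\log\frac{0}{0}$.
I would check that with the paper's conventions $C_\alpha$ still evaluates to the true divergence whatever the arbitrary conditional is, in particular for $\alpha=0$ and $\alpha=\infty$. When $p=0$, the term involving the $X$-conditional must vanish. Otherwise, if it were infinite, the derived bound would still be sound but not tight.
If the conventions fail in some case, I would instead choose the arbitrary conditional of $g$ equal to that of $f$ in the decomposition. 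Then the induction hypothesis gives distance $0$ for that branch, which ensures tightness.
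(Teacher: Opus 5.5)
Your proposal is correct and follows the paper's proof. The paper likewise reruns \cref{sec:axiomfstoch} with three substitutions: $\Chainsum$ is justified by \cref{eqn:rerecursivedefn} (or \cref{eqn:rerecursivedefninfty}) together with monotonicity of $C_\alpha$, $\Parmax$ is justified as for $\klmax$, and the R\'{e}nyi chain rule replaces \cref{eqn:klsumchainrule} in the induction of \cref{lem:isometryondist}.

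Your handling of degenerate weights is a sensible refinement that the paper leaves implicit. The only case that needs it is $p=1$, since the first premise is always $\id_1=_0\id_1$. In that case it is the conditional of $f$ that is free (not that of $g$, as you phrased it), and it should be chosen equal to that of $g$. This prevents a $0\cdot\infty$ or $-\infty+\infty$ from appearing in $C_\alpha$.
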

\begin{proof}
	The proof follows that of \cref{sec:axiomfstoch}. In particular,
	\begin{itemize}[nosep]
		\item soundness of the new $\Chainsum$ rule is justified by \cref{eqn:rerecursivedefn} (or \cref{eqn:rerecursivedefninfty} when $\alpha = \infty$) and monotonicity of $C_{\alpha}$,
		\item soundness of $\Parmax$ is justified by the same proof as for $\klmax$,
		\item and the use of \cref{eqn:klsumchainrule} in \cref{lem:isometryondist} is replaced by \cref{eqn:rerecursivedefn} (or \cref{eqn:rerecursivedefninfty} when $\alpha = \infty$).
	\end{itemize}
	\vspace{-1\baselineskip}
\end{proof}
\section{Conclusions}\label{sec:conclusions}

In this paper, we have provided axiomatisations of relative entropy as an enrichment of $\FStoch$, leveraging string diagrammatic reasoning and quantitative equations. This represents a new application of quantitative equational reasoning~(\!\!\cite{Mardare2016}) to distances between probabilistic processes, addressing a notable gap in the literature where relative entropy had not previously been treated within this framework.

While our development takes place at the level of monoidal categories, the theories $\th_{\KLsum}$ and $\th_{\Rsum{\alpha}}$ are based on cocartesian theories~(\!\!\cite{Bonchi2018}). Consequently, by considering them as presentations of the opposite (cartesian) category, they can be translated into quantitative algebraic theories following the correspondence described in~\cite[Section~6]{Lobbia2025}. The resulting presentation is a theory of convex algebras equipped with two additional quantitative inferences, expressed in the style of~\cite{Mardare2016} (without the metric axioms):
\[
	\vdash x =_0 x \qquad \text{and} \qquad x =_{\varepsilon} x', y =_{\delta} y' \vdash x +_p y =_{C(p,q,\varepsilon,\delta)} x' +_q y',
\]
where $C$ is the function $C_{\alpha}$ defined in \cref{eqn:shorthandcalpha} or $C_1(p,q,\varepsilon,\delta) = \kl(\raisebox{0.3ex}{$\distvect{p}{1-p}$}, \raisebox{0.3ex}{$\distvect{q}{1-q}$}) + p\varepsilon + (1-p)\delta$. This may serve as a quantitative algebraic axiomatisation of relative entropy between probability distributions, rather than generic stochastic matrices as considered in the present paper.
The monads induced by these theories map a discrete space on a set $X$ to $\Dset X$ equipped with the corresponding relative entropy. Extending this description to non-discrete spaces remains a direction for future work.

By contrast, the theories $\th_{\KLprod}$ and $\th_{\Rprod{\alpha}}$ are neither cartesian nor cocartesian, and therefore require the abstraction of the monoidal setting. Their axiomatisations contribute to recent developments in categorical probability, where quantitative reasoning via enrichment is emerging, e.g.\@ in~\cite{Perrone2024}. In that work, divergence is treated abstractly and instantiated to specific relative entropies, yielding only a general formulation of the chain rule. Our rules $\Chainprod$ and $\Chainsum$, in contrast, are sufficiently precise to characterise the corresponding relative entropies uniquely. This  opens the possibility of synthetic proofs for results that depend on the specific properties of KL or R\'enyi divergences.

Another promising direction concerns the string diagrammatic study of Bayesian reasoning~(\!\!\cite{Jacobs2021,JacobsKZ21,Jacobs2023,Lorenzin2025}). KL divergence plays a central role in evaluating probabilistic learning models~(\!\!\cite{Fox2012,KingmaW13}); our framework suggests the possibility of proving mathematical properties of model performances using  quantitative diagrammatic reasoning.

The generalisation from quantitative equations to quantitative implications was essential to accommodate the enrichments introduced in \cref{sec:axkl,sec:axren} using the chain rules. Nevertheless, the theoretical foundations of $\vV$-theories merit further clarification. While models may be regarded as functors out of the syntactic category~(\cref{lem:syntactic}), the converse does not generally hold, resulting in an incomplete functorial semantics. Adapting the functorial semantics of implicational theories~(\!\!\cite{Barr1989,Adamek1998}) to the setting of monoidal algebra remains a question to be explored in future work. We also conjecture that the comparison established in~\cite[Section~6]{Lobbia2025} extends to $\vV$-theories: cartesian $\vV$-theories (equipped with natural copy and discard structure) should correspond to quantitative algebraic theories as developed in~\cite{Mardare2016}.
Our treatment of implicational axioms also allows to accommodate the different choices of inference rules in~\cite{Lobbia2025} more directly. Namely, the choice can be baked in the theory (as axioms) instead of in the generation of the free model as was the case in~\cite{Lobbia2025}. This is strictly more general than \cite{Lobbia2025}, and it encompasses the example in~\cite{diGiorgio2025}, which provides an axiomatisation of the total variation distance on $\BSprod$ that does not fit within the framework of quantitative monoidal algebra developed in~\cite{Lobbia2025}. 

Finally, relative entropies have quantum counterparts that are relevant in quantum information and computing~(\!\!\cite{Vedral2002}). Given the widespread adoption of diagrammatic reasoning in quantum settings~(\!\!\cite{DodoPQP}), extending our results to diagrammatic theories of quantum processes is a natural and promising direction. Notably, quantum relative entropy has already been studied categorically, using string diagrams, in the context of quantum natural language processing~(\!\!\cite{Balkir2016}).
\bibliographystyle{./entics}
\bibliography{refs}

\end{document}